\author{Fugen Hagihara}{Graduate School of Science, Kyoto University, Japan}{hagihara@kurims.kyoto-u.ac.jp}{}{}
\author{Akitoshi Kawamura}{Research Institute for Mathematical Sciences, Kyoto University, Japan}{kawamura@kurims.kyoto-u.ac.jp}{}{}
\authorrunning{F.~Hagihara, A.~Kawamura} 
\keywords{Holonomic sequences, ultimate signs, Skolem Problem, Positivity Problem}   
\definecolor{Green}{rgb}{0,0.753,0.192}
\definecolor{Pink}{rgb}{0.953,0,0.71}
\newcommand{\R}{\mathbb{R}}
\newcommand{\N}{\mathbb{N}}
\newcommand{\Q}{\mathbb{Q}}
\newcommand{\Z}{\mathbb{Z}}
\DeclareMathOperator{\sgn}{sgn}
\newcommand\Kettenbruch{%
  \operatornamewithlimits{%
    \mathchoice
     {\vcenter{\hbox{\huge $\mathrm{K}$}}}
     {\vcenter{\hbox{\Large $\mathrm{K}$}}}
     {\mathrm{K}}
     {\mathrm{K}}}}
\title{The Ultimate Signs of Second-Order Holonomic Sequences
\footnote{This is a full version of \cite{HK25} with detailed proofs.}
}
\begin{document}
\maketitle
\begin{abstract}
A real-valued sequence $f = \{f(n)\}_{n \in \N}$ is said to be second-order holonomic if it satisfies a linear recurrence $f (n + 2) = P (n) f (n + 1) + Q (n) f (n)$ for all sufficiently large $n$, where $P, Q \in \R(x)$ are rational functions. We study the ultimate sign of such a sequence, i.e., the repeated pattern that the signs of $f (n)$ follow for sufficiently large $n$.  For each $P$, $Q$ we determine all the ultimate signs that $f$ can have, and show how they partition the space of initial values of $f$.  This completes the prior work by Neumann, Ouaknine and Worrell, who have settled some restricted cases. As a corollary, it follows that when $P$, $Q$ have rational coefficients, $f$ either has an ultimate sign of length $1$, $2$, $3$, $4$, $6$, $8$ or $12$, or never falls into a repeated sign pattern. We also give a partial algorithm that finds the ultimate sign of $f$ (or tells that there is none) in almost all cases.
\end{abstract}

\section{Introduction}

Let $\N = \{ 0, 1, 2, \dots \}$ be the set of all natural numbers. 
A sequence $f = \{ f(n) \}_{n \in \N } \in \R^{\N}$ of real numbers is called a 
\emph{holonomic sequence} (of order $r \in \N$)
if there are real-coefficient rational functions $P_0, \dots , P_{r-1} \in \R(x)$ 
such that $f$ satisfies the linear recurrence 
\begin{equation}\label{eq:holonomic}
f(n + r) =  P_{r-1}(n) f(n+r-1) + \dots + P_0(n) f(n)  
\end{equation}
for all sufficiently large $n \in \N$. 
Holonomic sequences arise in various areas of mathematics.
For instance, solutions of linear differential equations with polynomial coefficients are generating functions of holonomic sequences \cite{Sta80} (see also \cite[Appendix B.4]{FS13}). Moreover, for ``proper hypergeometric terms'' $F(n, k)$ -- which typically involving binomial coefficients like $\binom{n}{k}$ -- the sum $f(n) = \sum_{k \in \Z} F(n, k)$ is holonomic, provided it converges for all $n \in \N$ \cite{PWZ96}. An algorithm that finds a holonomic recurrence satisfied by such sum of a given proper hypergeometric term is known as \emph{creative telescoping} \cite[Chapter~6]{PWZ96}.

An important computational problem concerning holonomic sequences 
is the \emph{Ultimate Sign Problem} \cite{NOW21}: 
Given (rational-coefficient) rational functions $P_0, \dots, P_{r-1} \in \Q(x)$ without poles in $\N$ 
and (rational-valued) initial values $f(0)$, \dots, $f(r-1) \in \Q$, 
find an ultimate sign, defined as follows, of the unique sequence $f$ 
having these initial values and satisfying \eqref{eq:holonomic} for all $n \in \N$, 
and an index $N \in \N$ at which this ultimate sign is reached. 
Although we assume that $f$ satisfies the recurrence \eqref{eq:holonomic} not only for $n \geq I$ for some $I \in \N$ but also for all $n$, it is not different in computability from the problem of finding the ultimate sign and the index $N$ from the coefficients $P_0, \dots, P_{r-1}$, initial values $f(I)$, \dots, $f(I+r-1)$ and $I$. 

\begin{definition}
A sequence $f \in \R^{\N}$ is said to have an
\emph{ultimate sign} $(s_0, \dots, s_{\tau-1}) \in \{ +, -, 0 \} ^*$ 
at $N \in \N$
if $\sgn f(n) = s_{n \bmod \tau}$ for all $n \geq N$, 
where $\sgn \colon \R \to \{ +, -, 0 \}$ is the function that maps each real number to its sign. 
\end{definition}

For instance, the sequence $\{ (-1)^n (n-2) \}_{n \in \N} = -2, 1, 0, -1, 2, -3, \dots$ has 
the ultimate sign $(+, -)$ at $3$.
Note that if $f$ has the ultimate sign $s$ at $N$, 
then it also has any repetition of $s$ as an ultimate sign, and it does so at any index $\geq N$; 
but we could of course ask for the \emph{shortest} ultimate sign $s$ and the \emph{least} index $N$ 
without changing the computability of the problem.

The Ultimate Sign Problem is a generalization of several important problems about signs of holonomic sequences.
One of the most famous problems is the \emph{Skolem Problem}, 
which asks whether $f(n) = 0$ for some $n$ 
(see \cite[\S~4]{OW12} for an argument that it reduces to the Ultimate Sign Problem). 
Its decidability has been studied for almost 90 years \cite{HHHK05}. 
The \emph{Positivity Problem} asking whether $f(n) > 0$ for all $n$ and 
the \emph{Ultimate Positivity Problem} asking whether $f$ has the ultimate sign $(+)$ 
are also well studied, with applications to automated inequality proving \cite{GK05}; 
see also subsequent works \cite{KP10,Pil13,PS15}
and a SageMath implementation \cite{Nus22}. 

When the coefficients $P_0, \dots, P_{r-1}$ are constant, $f$ is called a C-finite sequence (or a linear recurrence sequence). 
The Skolem Problem for C-finite sequences of order $r \leq 4$ \cite{Tij84,Ver85} and 
the (Ultimate) Positivity Problem for C-finite sequences of order $r \leq 5$ \cite{OW14} 
are known to be decidable, whereas the decidability for higher order C-finite sequences is open. 

For holonomic sequences, when $r = 1$ (i.e., when $f$ is a hypergeometric sequence), the Ultimate Sign Problem is easy 
since for given $ P_0 \in \Q(x)$, 
we can effectively compute
an index $N \in \N$ such that $P_0(n)$ has a constant sign for $n \geq N$. 
When $r=2$, i.e., when $f$ satisfies a recurrence of the form
\begin{equation}\label{eq:(PQ)holonomic}
f(n+2) = P(n) f(n+1) + Q(n) f(n), 
\end{equation}
the decidability of Skolem and (Ultimate) Positivity Problem for 
some subclasses is known in the context of the Membership Problem \cite{NPSW22} and the Threshold Problem \cite{Ken24}, respectively. 
\cite[Theorem~7]{NOW21} shows that the Ultimate Sign Problem for another subclass is computable. However, the computability for general second-order holonomic sequences remains unknown. 
To make progress on this open problem, we study the ultimate signs of all second-order holonomic sequences. 

Our first main contribution is 
to classify all pairs $(P, Q) \in \R(x) ^2$ by the ultimate signs $f$ can have (\cref{cor:Spq}), 
and show how the ultimate signs partition the space of initial values of $f$ (\cref{thm:main}). This result resolves all remaining cases in \cite[Theorem~1]{NOW21}, which handles the 
restricted case where $P$, $Q$ are polynomials, $P$ is non-constant and $\deg Q \leq \deg P$. 
In addition, this result implies that when $P$, $Q$ have rational coefficients, 
the shortest ultimate sign of $f$, if it has one, is either of length $1$, $2$, $3$, $4$, $6$, $8$ or $12$ (\cref{cor:}). 

Our second contribution is to give a partial algorithm that
solves the Ultimate Sign Problem for second-order holonomic sequences 
and halts on almost all inputs (\cref{thm:procedure}). 
This extends a similar result \cite[Theorem~3]{NOW21} for the restricted case mentioned above. 
This result can be also stated as a reduction theorem: for second-order holonomic sequences, the Ultimate Sign Problem Turing-reduces to the Minimality Problem, which asks the minimality of a given $f$, i.e., whether $f(n) / g(n) \to 0$ for all linearly independent solutions $g$ of the same recurrence. 
In this sense our result extends \cite[Theorem~3.1]{KKL+21}, which shows that the Positivity Problem Turing-reduces to the Minimality Problem. 
Note that, unfortunately, the decidability of Minimality Problem is unknown whereas many researchers numerically calculate minimal holonomic sequences and apply them to numerical analysis of some special functions (for example \cite{Gau67,DST10}). 
In addition, combining this partial algorithm with creative telescoping allows us to determine all $n \in \N$ for which an inequality of the form $\sum_{k} F(n,k) > \sum_k G(n,k)$ holds, for some proper hypergeometric terms $F(n,k)$ and $G(n, k)$ (\cref{ex:ct + our pa}).

As a byproduct of our arguments, we amend some gaps in the proof of \cite{NOW21}, slightly modifying its Theorem~7 (\cref{thm:NOW21}). This will be discussed in Section~\ref{sec:total_algorithm}. 

\paragraph*{Related work}

A lot of previous works describe their results in terms of continued fractions, which have a strong connection to second-order holonomic sequences. We illustrate the connection between those works and one of our main theorems in \cref{sec:cfrac,sec:monotonic_convergence_thm}. 

Not only the ultimate signs, but also other periodicities of signs of holonomic (or C-finite) sequences are investigated. Closely related to the Skolem Problem, the periodicity of the zeros of C-finite (and for some holonomic) sequences is well-known as the Skolem-Mahler-Lech theorem \cite{BBY12}. 
Almagor et al.\ \cite{AKK+21} give some sufficient conditions for C-finite sequences to have an ``almost periodic sign'', a loose property of sign periodicity.

Kooman \cite{Koo07} studies the asymptotic behaviour of complex solutions of the recurrence~\eqref{eq:(PQ)holonomic}, 
where $P$ and $Q$ are not necessarily rational functions. 
His results helped us see the big picture of our main theorems.

\section{Results}\label{sec:result}

The Ultimate Sign Problem asks about the ultimate signs of $f$ that satisfies \eqref{eq:(PQ)holonomic} for all $n$. Such $f$ is identified by the coefficient pair $(P, Q)$ and the initial value $(f(0), f(1))$. 

\begin{definition}
Let $P$, $Q \in \R(x)$ be rational functions without poles in $\N$. 
A sequence $f \in \R^{\N}$ is $(P, Q)$-\emph{holonomic} 
if it satisfies \eqref{eq:(PQ)holonomic}. 
The pair $(f(0), f(1)) \in \R^2$ is called the \emph{initial value} of $f$. 
\end{definition}

The Ultimate Sign Problem for $(0, Q)$- or $(P, 0)$-holonomic sequences is easy, 
so we assume $P \neq 0$ and $Q \neq 0$. 
By shifting the index by finitely many terms, 
we may assume that $P$, $Q$ have no zeros in $\N$. 
This shifting changes the ultimate sign and the initial value of $f$ in such a simple way that
it does not affect the computability of the Ultimate Sign Problem. 
We adopt this assumption in all the following theorems.

\subsection{Ultimate signs}
\label{subsection: main result classification}

In this section, for each type of $(P, Q) \in (\R(x)\setminus\{0\})^2$ in the following \cref{def:classification}, we list all the ultimate signs that $(P, Q)$-holonomic sequences $f$ can have. Moreover, we show how the ultimate signs partition the space of initial values of $f$. 
For $R \in \R(x) \setminus \{ 0 \}$, let $\deg R$ denote $d \in \Z$ satisfying $|R(x)| = \Theta(x^d)$. 
By the \emph{ultimate sign} of $R$ we mean that of $\{ R(n) \}_{n \in \N}$.

\begin{definition}\label{def:classification}
We classify $(P, Q) \in (\R(x) \setminus \{ 0 \})^2$ into the following types. Let $d := \deg \frac{Q(x)}{P(x)P(x-1)}$ and $(s) \ (s \in \{ +, -\})$ be the ultimate sign of $\frac{Q(x)}{P(x)P(x-1)}$. 

\begin{itemize}
\item If $s = +$ and $d > 2$, then we say that $(P, Q)$ is of \emph{$\infty$-$O$ loxodromic type}.

\item If $s = +$ and $d \leq 2$, then we say that $(P, Q)$ is of \emph{$\infty$-$\Omega$ loxodromic type}.

\item If $s = -$ and $d \leq 0$, then let $\alpha_0, \alpha_1, \alpha_2$ be real numbers satisfying  
\begin{equation}\label{eq:Taylor_ex}
\frac{Q(x)}{P(x)P(x-1)} = \alpha_0 + \frac{\alpha_1}x + \frac{\alpha_2}{x^2} + O(x^{-3}). 
\end{equation}
\begin{itemize}

\item If $(\alpha_0, \alpha_1, \alpha_2) \geq (-\frac{1}{4}, 0, -\frac1{16})$ in lexicographic order, then we say that $(P, Q)$ is of \emph{hyperbolic type}. 

\item Otherwise, $\alpha_0 \leq  -\frac{1}{4}$, so there is a real number $\theta \in [0, \frac12)$ such that $\alpha_0 = -\frac1{4 \cos^2 \theta \pi}$.

\begin{enumerate}[(1)]
\item If $\theta$ is a positive rational number and $\alpha_1 = 0$, then we say that $(P, Q)$ is of \emph{$\theta$-$O$ elliptic type}.

\item \label{item:otherwise} Otherwise, we treat $(P, Q)$ together with the next case. 
\end{enumerate}	
\end{itemize}	

\item If $s = -$ and $d = 1, 2$, or it is the case of \eqref{item:otherwise} above, then we say that $(P, Q)$ is of \emph{$\Q$-$\Omega$ elliptic type}. 

\item If $s = -$ and $d > 2$, then we say that $(P, Q)$ is of \emph{$\frac12$-$O$ elliptic type}.
\end{itemize}
\end{definition}

Considering the rational function $\frac{Q(x)}{P(x)P(x-1)}$ in the above definition is reasonable, since it naturally appears when we normalize $P$ to 1. For $(P, Q)$-holonomic sequences $f$, the sequence $\left\{ \frac{f(n)}{P(n-2) \dotsm P(-1)} \right\}_{n \in \N} $ is $\left( 1, \frac{Q(x)}{P(x)P(x-1)} \right)$-holonomic. 

This classification consists of the distinctions between \emph{loxodromic type} ($\infty$-$O$ loxodromic type and $\infty$-$\Omega$ loxodromic type), hyperbolic type and \emph{elliptic type} ($\theta$-$O$ elliptic type and $\Q$-$\Omega$ elliptic type), and between \emph{$O$ type} ($\infty$-$O$ loxodromic type and $\theta$-$O$ elliptic type) and \emph{$\Omega$ type} ($\infty$-$\Omega$ loxodromic type and $\Q$-$\Omega$ elliptic type). 
The highly non-trivial border between hyperbolic type and elliptic type is well-studied in the context of the convergence of continued fractions (\cref{thm:cfrac_cconverge}). 

The terminologies of ``$O$'' and ``$\Omega$'' come from big O and $\Omega$ notations. They represent whether $\frac{Q(x)}{P(x)P(x-1)}$ is near or apart from a certain value ($\infty$ for loxodromic type, $-\frac1{4\cos^2 \theta \pi}$ for $\theta$-$O$ elliptic type and $-\frac1{4\cos^2 q \pi}$  for all $q \in (0, \frac12] \cap \Q$ for $\Q$-$\Omega$ elliptic type).

The terminologies of loxodromic, hyperbolic and elliptic come from the classification of 
linear fractional transformations. 
If $P$ and $Q$ are constant, the linear fractional transformation $z \mapsto \frac1{P+Qz}$ maps the ratio $f (n) / f (n+1)$ between the two neighbouring terms of the $(P, Q)$-holonomic sequence
to the next ratio $f (n + 1) / f (n + 2)$, and is said to be elliptic, parabolic, hyperbolic and loxodromic 
when $\frac{Q}{P^2}$ is in
$(-\infty, -\frac 1 4)$, $\{-\frac 1 4\}$, $(-\frac 1 4, 0)$ and $(0, \infty)$, respectively 
(with slight variations among authors --
some (cf.\ \cite[\S~4.1.3]{LW08}) treat hyperbolic as a subclass of loxodromic, 
while some (cf.\ \cite[\S~4.7]{Rat19}) treat loxodromic as a subclass of hyperbolic).

Now we explicitly describe the set 
\[
S _{P, Q} := 
\left\{ 
s \in \{ +, -, 0 \}^* \cup \{ \bot \}
\middle|
\begin{array}{l}
\text{There exists a non-zero $(P, Q)$-holonomic sequence}
\\
\text{that has the shortest ultimate sign $s$.}
\end{array}
\right\},
\]
where 
a sequence without any ultimate sign is considered to have the shortest ultimate sign $\bot$. 
We exclude the zero sequence since it obviously has the ultimate sign $(0)$. 
This is a simple and direct corollary of \cref{thm:main}.
In the following corollary and theorem, we assume that $P$ has ultimate sign $(+)$. For $P$ with ultimate sign $(-)$, instead of a $(P, Q)$-holonomic sequence $f$, consider the $(-P, Q)$-holonomic sequence $\{ (-1)^n f(n) \}_{n \in \N}$.

\begin{corollary}\label{cor:Spq}
Let $P, Q \in \R(x)$ be rational functions without poles or zeros in $\N$, and suppose that $P$ has ultimate sign $(+)$. 
Define $S_{P, Q}  \subseteq \{+, -, 0 \}^* \cup \{ \bot \}$ as above.
\begin{itemize}
\item If $(P, Q)$ is of loxodromic type, $S_{P, Q} = \{ (+), (-), (+, -), (-, +)\}$.
\item If $(P, Q)$ is of hyperbolic type, $S_{P, Q} = \{ (+), (-)\}$.
\item If $(P, Q)$ is of $\frac{k}r$-$O$ elliptic type, where $r$ and $k$ are coprime positive integers, let 
\begin{align} \label{eq:sjtj}
s_{j} & = \biggl( \sgn \sin \frac{j - ik + 0.5}r \pi \biggr)_{i=0, \dots, (k \bmod 2)r+r-1} &&\in \{ +, - \}^*, 
\notag
\\
t_{j} & = \biggl( \sgn \sin \frac{j - ik}r \pi \biggr)_{i=0, \dots, (k \bmod 2)r+r-1}
&&\in \{ +, -, 0\}^*
\end{align}
for each $j = 0, \dots, 2r-1$, where $k \bmod 2$ is $0$ if $k$ is even and $1$ if $k$ is odd. 
\begin{itemize}
\item If $\frac{Q(x)}{P(x)P(x-1)}$ is constant, $S_{P, Q} = \{ s_j \mid j=0, \dots, 2r-1 \} \cup \{ t_j \mid j=0, \dots, 2r-1 \}$.
\item Otherwise, $S_{P, Q} = \{ s_j \mid j=0, \dots, 2r-1 \}$.
\end{itemize}
\item If $(P, Q)$ is of $\Q$-$\Omega$ elliptic type, $S_{P, Q} = \{ \bot \}$.
\end{itemize}
\end{corollary}

The value $0.5$ inside the sine function in $s_j$ can be replaced by any value between $0$ and $1$. 

For constant $P$, $Q$, this corollary is easily obtained since the recurrence \eqref{eq:(PQ)holonomic} can be solved explicitly.

The following main theorem shows how the space $\R^2 \setminus \{ 0 \}$ of the initial values of $f$ is partitioned into sets 
\[
I_{P, Q}(s) := \left\{ f_0 \in \R^2 \setminus \{ 0 \} \middle| 
\begin{array}{l}
\text{The $(P, Q)$-holonomic sequence with initial value $f_0$}
\\
\text{has ultimate sign $s$.}
\end{array}
\right\}
\] 
with $s \in \{ +,-,0\}^*$. 
Since the set $I_{P, Q} (s)$ is closed under linear combinations with positive coefficients, 
it is a convex linear cone and thus specified by 
an (open, closed or half-open) interval $p (I _{P, Q} (s))$ 
on the unit circle $S ^1$, where 
\begin{equation}\label{eq:projection}
p \colon \R^2 \setminus \{ (0, 0) \} \to S ^1 ; \ (x, y) \mapsto (x, y) / \sqrt{x^2 + y^2}
\end{equation}
is the projection. 
Thus, we will state the theorem by describing
how $S ^1$ is partitioned into intervals $p (I _{P, Q} (s))$. 
It is obvious that flipping the sign of the initial value flips each element of the ultimate sign, so that $I_{P, Q} (-s)$ is just $I_{P, Q}(s)$ flipped around the origin. 
We omit the parentheses and write $I_{P, Q}(+, -)$, say, for $I_{P, Q}((+, -))$. 

We suppose that $P$ has ultimate sign $(+)$ as in \cref{cor:Spq}.

\begin{theorem}\label{thm:main}
Let $P$, $Q \in \R(x)$ have no zeros or poles in $\N$, 
and suppose that $P$ has ultimate sign $(+)$. 
For each $s \in \{ +, -, 0 \}^*$, 
we write $p (I_{P, Q} (s))$ for the set of $f_0 \in S ^1$ such that 
the $(P, Q)$-holonomic sequence with initial value $f _0$ 
has the ultimate sign $s$.
\begin{enumerate}[(I)]
\item \label{item:loxodromic_nearly_infty}
If $(P, Q)$ is of $\infty$-$O$ loxodromic type, $S ^1$ is partitioned into
closed intervals \\
$p(I_{P, Q}(+, -))$, $p(I_{P, Q}(-, +))$ which have non-empty interiors and
non-empty open intervals $p(I_{P, Q}(+))$, $p(I_{P, Q}(-))$. 
\item \label{item:loxodromic_far_from_infty}
If $(P, Q)$ is of $\infty$-$\Omega$ loxodromic type, $S ^1$ is partitioned into
singletons $p(I_{P, Q}(+, -))$,
\\
$p(I_{P, Q}(-, +))$ and 
non-empty open intervals $p(I_{P, Q}(+))$, $p(I_{P, Q}(-))$. 
\item \label{item:hyperbolic}
If $(P, Q)$ is of hyperbolic type, $S ^1$ is partitioned into
half-open intervals $p(I_{P, Q}(+))$, $p(I_{P, Q}(-))$. 
\item \label{item:k/r-O_elliptic}
If $(P, Q)$ is of $\frac{k}r$-$O$ elliptic type, 
where $r$ and $k$ are coprime positive integers, 
define $s_j \in \{ +, - \}^*$ and $t_j \in \{ +, -, 0 \}^*$ as in \eqref{eq:sjtj} for each $j = 0$, \ldots, $2r-1$. 
\begin{itemize}
\item
If $\frac{Q(x)}{P(x)P(x-1)}$ is constant, 
$S ^1$ is partitioned into
$p(I_{P, Q}(t_0))$, $p(I_{P, Q}(s_0)), \dots, \break p(I_{P, Q}(t_{2r-1}))$, $p(I_{P, Q}(s_{2r-1}))$, 
arranged in this order (clockwise or anticlockwise), 
of which $p(I_{P, Q}(t_{j}))$ are singletons and $p(I_{P, Q}(s_{j}))$ are non-empty open intervals.
\item
Otherwise, 
$S ^1$ is partitioned into
non-empty half-open intervals
$p(I_{P, Q}(s_0)), \dots, \break p(I_{P, Q}(s_{2r-1}))$,
arranged in this order, 
where 
for each $j=0$, \ldots, $2r-1$, 
the intersection of the closures of
$p(I_{P, Q}(s_j))$ and $p(I_{P, Q}(s_{j+1}))$
(where $s_{2r} = s_0$)
belongs to 
$p(I_{P, Q}(s_{j+1}))$ if $\frac{Q(x)}{P(x)P(x-1)}$ is eventually increasing (i.e., increasing for sufficiently large $x$), 
and to $p(I_{P, Q}(s_j))$ if it is eventually decreasing. 
\end{itemize}
\item \label{item:Q-Omega_elliptic}
If $(P, Q)$ is of $\Q$-$\Omega$ elliptic type, 
then no non-zero $(P, Q)$-holonomic sequence has an ultimate sign. 
\end{enumerate}
\end{theorem}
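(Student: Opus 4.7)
The overall strategy is to reduce the two-dimensional problem to a one-dimensional non-autonomous dynamical system by tracking the ratio $\rho(n) := f(n+1)/f(n) \in \R \cup \{\infty\}$, which satisfies the first-order recurrence
\[
\rho(n+1) = P(n) + Q(n)/\rho(n).
\]
The ultimate sign of $f$ is completely determined by the sign of $f(0)$ together with the signs of the $\rho(n)$, and the map $(f(0), f(1)) \mapsto \rho(0)$ identifies each set $p(I_{P,Q}(s)) \subseteq S^1$ with a subset of the projective line (doubled by the $\pm$ symmetry of $S^1$). I would prove each item of the theorem by analysing this dynamical system in the corresponding asymptotic regime of \cref{def:classification}.

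For the hyperbolic case (\ref{item:hyperbolic}) and the $\infty$-$\Omega$ loxodromic case (\ref{item:loxodromic_far_from_infty}), I would invoke a Poincar\'e--Perron-type asymptotic theorem (as developed in \cite{Koo07}) for the formal characteristic polynomial $x^2 - P(n) x - Q(n)$. Because the asymptotics of $Q(x)/(P(x)P(x-1))$ are bounded away from $-1/4$ on the upper side, the roots $\alpha(n), \beta(n)$ are eventually real with $|\beta/\alpha|$ bounded away from $1$. This yields a unique (up to scalar) minimal solution spanning a one-dimensional subspace $L_{\min} \subset \R^2$, while every other initial value produces a sequence asymptotically proportional to the dominant solution. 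The ultimate sign is then read off from the signs of the roots: in the hyperbolic case both are eventually positive, so $p(L_{\min})$ and its complement form the two half-open constant-sign arcs of the statement; in the $\infty$-$\Omega$ loxodromic case $\alpha > 0 > \beta$, so $p(L_{\min})$ consists of a pair of singletons with alternating sign $(\pm, \mp)$ and the rest of $S^1$ splits into two open constant-sign arcs.

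In the $\infty$-$O$ loxodromic case (\ref{item:loxodromic_nearly_infty}) the roots satisfy $\beta/\alpha \to -1$, the Perron separation collapses, and a one-dimensional $L_{\min}$ no longer captures the $(\pm,\mp)$ behaviour. I would instead analyse the one-step map $\rho \mapsto P(n) + Q(n)/\rho$ directly, showing that the projective line splits into two attracting regions of non-empty interior --- one where $\rho(n)$ stays positive and one where $\rho(n)$ alternates sign --- producing the claimed closed intervals of alternating sign, flanked by open arcs of constant sign. For the elliptic cases (\ref{item:k/r-O_elliptic}) and (\ref{item:Q-Omega_elliptic}) the characteristic roots are complex conjugates $r(n) e^{\pm i \pi \theta(n)}$ with $\theta(n) \to \theta \in (0, 1/2]$; after a change of basis that diagonalises the leading part of the transfer matrix, a renormalised version of $(f(n), f(n+1))$ rotates by approximately $\pi \theta(n)$ per step, and the ultimate sign is governed by the asymptotic itinerary of a real phase $\phi(n) \pmod \pi$ in a fixed sector decomposition of the circle. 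When $\theta = k/r$ is rational and $\alpha_1 = 0$, \eqref{eq:Taylor_ex} forces $\theta(n) - \theta = O(1/n^2)$, so the cumulative drift $\sum_n (\theta(n) - \theta)$ converges and $\phi(n) \pmod \pi$ eventually lands in the $2r$-periodic pattern described by $s_j$ and $t_j$; in every $\Q$-$\Omega$ elliptic subcase either $\theta$ is irrational (giving equidistribution of $\phi(n)$) or $\theta$ is rational but $\alpha_1 \ne 0$ (giving a $\log n$-scale drift), and the periodicity is destroyed, so no ultimate sign exists.

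The main obstacle is the elliptic analysis. Making the informal rotation argument rigorous requires an explicit change of variables diagonalising the leading-order transfer matrix and sharp control of the remainder at the level of the Taylor coefficients $\alpha_0, \alpha_1, \alpha_2$ in \eqref{eq:Taylor_ex}: $\alpha_0$ determines the limiting rotation angle, $\alpha_1$ determines whether the phase drift is summable, and the lexicographic comparison $(\alpha_0, \alpha_1, \alpha_2) \ge (-\frac14, 0, -\frac1{16})$ marks the fine-grained boundary between hyperbolic and elliptic regimes. The half-open endpoint assignment in (\ref{item:k/r-O_elliptic}) additionally requires tracking the sign of the per-step phase increment relative to $k\pi/r$, which is precisely why the eventual monotonicity of $Q(x)/(P(x)P(x-1))$ enters the statement.
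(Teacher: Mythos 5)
Your plan rests on a root-separation premise that is false in exactly the cases that carry the content of the theorem. Hyperbolic type is \emph{not} characterized by $\frac{Q(x)}{P(x)P(x-1)}$ staying bounded away from $-\frac14$ from above: by \cref{def:classification} it includes all cases with $\alpha_0 = -\frac14$ satisfying the lexicographic condition, e.g.\ $\frac{Q(x)}{P(x)P(x-1)} = -\frac14 - \frac1{16x^2}$, where the quantity lies \emph{below} $-\frac14$ for every $x$, the per-step characteristic roots are complex, and no Poincar\'e--Perron separation $|\beta/\alpha|$ bounded away from $1$ exists. Deciding constant sign versus oscillation on this boundary is the hardest step; the paper needs the explicit comparison sequence $q(n) = \frac12 + \frac1{4n} + \frac1{4n\log n}$ of \cref{lem:quantified(+)} (and the converse direction of its part \eqref{item:(PQ):lox_or_hyp_iff_q(1-q)>Q}) to show the threshold is exactly $(\alpha_0,\alpha_1,\alpha_2) \geq (-\frac14, 0, -\frac1{16})$; nothing in your outline replaces this. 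The analogous claim for $\infty$-$\Omega$ loxodromic type is also wrong: when $d = 1, 2$ one has $\frac{Q(x)}{P(x)P(x-1)} \to +\infty$, so $\beta(n)/\alpha(n) \to -1$ there as well. The true $O$/$\Omega$ distinction is not whether the root ratio tends to $1$ but the rate (equivalently, summability of the coupling, $R(n) = O(n^{-1-\varepsilon})$ versus $\Omega(n^{-1})$ in \cref{lem:single-term-feedback_recurrence,lem:transformation_property}), which is what makes $p(I_{P,Q}(+,-))$ an interval in case \eqref{item:loxodromic_nearly_infty} but a singleton in case \eqref{item:loxodromic_far_from_infty}; your proposal does not supply an argument for the singleton claim once separation fails.

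The $\Q$-$\Omega$ elliptic argument has the same defect. Your dichotomy ``$\theta$ irrational, or $\theta$ rational with $\alpha_1 \neq 0$'' omits entire subcases of the definition: $\theta = 0$ (i.e.\ $\alpha_0 = -\frac14$ with the lexicographic test failing, such as $\alpha_1 < 0$, or $\alpha_1 = 0$, $\alpha_2 < -\frac1{16}$), where there is no limiting rotation to equidistribute and the failure of an ultimate sign is again a borderline oscillation statement; and $s = -$ with $d = 1, 2$, where $\theta(n) \to \frac12$ with divergent drift. Moreover, even where your dichotomy applies, ``equidistribution/drift destroys the periodic pattern'' does not by itself exclude ultimate signs of length $1$ or $2$ (a constant-sign or alternating-sign tail is not a rotation-pattern statement); the paper must rule these out separately via \cref{lem:(+),lem:(+-)} and then kill lengths $\geq 3$ via the $\Omega$-parts of \cref{lem:transformation_property,lem:single-term-feedback_recurrence}. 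Your route (ratio/phase dynamics plus Kooman-style asymptotics) is genuinely different from the paper's single-term-feedback reduction and could plausibly handle the generic interior of each regime, but as written it fails precisely on the boundary cases --- the $\alpha_0 = -\frac14$ threshold, the slow loxodromic/elliptic regimes $d = 1,2$, and the endpoint membership in case \eqref{item:k/r-O_elliptic} --- which are the substance of \cref{thm:main}.
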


If $(P, Q)$ is of $\frac12$-$O$ elliptic type, then $\frac{Q(x)}{P(x)P(x-1)}$ necessarily decreases eventually. 

In Parts \eqref{item:loxodromic_nearly_infty}, \eqref{item:loxodromic_far_from_infty}, \eqref{item:hyperbolic} and \eqref{item:k/r-O_elliptic}, the union of the boundaries of the sets $I(s)$ is a finite union of lines. Following \cite{NOW21}, which handles restricted cases of \eqref{item:loxodromic_far_from_infty} and \eqref{item:hyperbolic} with $\deg \frac{Q(x)}{P(x)P(x-1)} \leq -1$, we call these lines the \emph{critical lines}. 

\begin{example} \label{ex:unknown_slope}
It is known that the sequence $L_t = \{ L_t(n) \}_{n \in \N}$ of the values of Legendre polynomials
\begin{equation}
L_t(n) := 2^{-n} \sum_{k=0}^{\lfloor n/2 \rfloor} (-1)^k \binom{2n-2k}{n-k} \binom{n-k}{k} t^{n-2k} \in \Q[t]
\end{equation}
is a $(P_t(x), Q(x)) = \left( \frac{2x+3}{x+2} t , - \frac{x+1}{x+2} \right)$-holonomic sequence with initial value $(1, t)$, which can be verified by creative telescoping.
Let us apply \cref{thm:main} to this $(P_t(x), Q(x))$ for $t > 0$. 
Since $\frac{Q(x)}{P_t(x)P_t(x-1)} = -\frac1{4t^2} - \frac1{4t^2(2x+3)(2x+1)}$,
the pair $(P_t, Q)$ is of hyperbolic type if $t \geq 1$, of $\theta$-$O$ elliptic type if $t = \cos \theta \pi$ for some $\theta \in \Q \cap (0, \frac12)$, and of $\Q$-$\Omega$ elliptic type otherwise.
Therefore, all nonzero $(P_t, Q)$-holonomic sequences have an ultimate sign in the former two cases, while none of them does in the last case. 
Next, we illustrate the partition described in \cref{thm:main}~\eqref{item:k/r-O_elliptic} for $t = \cos \frac{\pi}3, \cos \frac{\pi}4$, i.e., $t= \frac12, \frac1{\sqrt{2}}$.
The shortest ultimate signs that nonzero $(P_t, Q)$-holonomic sequences can have are
\begin{align}\label{equation: ultimate sign example elliptic rational}
    s _0 = s_6 & = (+, -, -, -, +, +), &
    s _1 & = (+, +, -, -, -, +), &
    s _2 & = (+, +, +, -, -, -), 
 \notag
 \\ 
    s _3 & = (-, +, +, +, -, -), &
    s _4 & = (-, -, +, +, +, -), &
    s _5 & = (-, -, -, +, +, +)
\end{align} 
for $t = \frac12$, and
\begin{align}\label{equation: 8 ultimate signs}
    s _0  = s_8  &= (+, -, -, -, -, +, +, +), &
     s _1  &= (+, +, -, -, -, -, +, +), 
    \notag
    \\
    s _2  &= (+, +, +, -, -, -, -, +), &
    s _3  &= (+, +, +, +, -, -, -, -), 
    \notag
    \\
    s _4  &= (-, +, +, +, +, -, -, -), &
    s _5 & = (-, -, +, +, +, +, -, -), 
    \notag
    \\ 
    s _6  &= (-, -, -, +, +, +, +, -), &
    s _7  &= (-, -, -, -, +, +, +, +)
\end{align} 
for $t= \frac1{\sqrt{2}}$. Note that $\frac{Q(x)}{P_{t}(x)P_{t}(x-1)}$ is increasing. Then the partitions of the set $\R^2 \setminus \{ 0 \}$ of the initial values $(f(0), f(1))$ into $I_{P, Q}(s_j)$ are illustrated as in \cref{figure: example t=1/2 and t=1/sqrt2}.
In these figures, the exact slopes of the critical lines $l_j$ are unknown, although we can numerically approximate them to arbitrary precision using the method in \cref{ex:partial algorithm}.

\begin{figure}
\begin{center}
\begin{tikzpicture}[scale=0.89]
\draw[->] (-3.5,0) -- (3.5, 0) node[below] {\ \ \ $f(0)$};
\draw[->] (0,-3.7)-- (0,3.4) node[right] {$f(1)$};
\coordinate (O) at (0, 0);
\coordinate (A) at (3.2, 3.2*0.68821);
\coordinate (B) at (-3.2/7.2874, 3.2);
\coordinate (C) at (-3.4, -3.4*0.02832);
\coordinate (D) at (-3.4, -3.4*0.68821);
\coordinate (E) at (3.4/7.2874, -3.4);
\coordinate (F) at (3.2, 3.2*0.02832);

\fill[Green,opacity=0.1] (O)--(F)--(A)--cycle;
\draw[orange,line width=1.2pt] (O)--(A)node[right] { {$l_2$}};
\draw[Green] (2.3, 0.5) node {\small \colorbox{Green!2!white}{\textcolor{Green}{$I_{P, Q}(s_1)$}}};

\fill[orange,opacity=0.1] (O)--(A)--(3.2,3.2)--(B)--cycle;
\draw[blue,line width=1.2pt] (O)--(B);
\draw[orange] (A) node[above left] {\!\!\!\!\small \colorbox{orange!2!white}{$I_{P, Q}(s_2)$}};

\fill[Green,opacity=0.1] (O)--(C)--(D)--cycle;
\draw[orange,line width=1.2pt] (O)--(D);
\draw[Green] (-2.24, -0.6) node {\small \colorbox{Green!2!white}{\textcolor{Green}{$I_{P, Q}(s_4)$}}};

\draw[Green,line width=1.2pt] (O)--(C);
\fill[blue,opacity=0.1] (O)--(B)--(-3.4,3.2)--(C)--cycle;
\draw[blue] (-2.8,2.8) node[right] {\small \colorbox{blue!2!white}{$I_{P, Q}(s_3)$}};

\fill[orange,opacity=0.1] (O)--(D)--(-3.4,-3.4)--(E)--cycle;
\draw[blue,line width=1.2pt] (O)--(E) node[below right] { {$l_0$}};
\draw[orange] (-1.6, -3) node {\small \colorbox{orange!2!white}{$I_{P, Q}(s_5)$}};

\draw[Green,line width=1.2pt] (O)--(F) node[above right] {\textcolor{Green}{$l_1$}};
\fill[blue,opacity=0.1] (O)--(E)--(3.2,-3.4)--(F)--cycle;
\draw[blue] (3,-3) node[left] {\small \colorbox{blue!2!white}{$I_{P, Q}(s_0)$}};

\draw (0.08, -0.42) node[left] {O};
\draw[line width=1.3] (O) circle[radius=0.04];
\end{tikzpicture}
\begin{tikzpicture}[scale=0.89]
\draw[->] (-3.5,0) -- (3.5, 0) node[below] {\ \ \ $f(0)$};
\draw[->] (0,-3.7)-- (0,3.4) node[right] {$f(1)$};
\coordinate (O) at (0, 0);
\coordinate (A) at (3.2, 3.2*0.4931395);
\coordinate (AB) at (3.2/1.05645,3.2);
\coordinate (B) at (-3.2/3.62695, 3.2);
\coordinate (C) at (-3.4, -3.4*0.054375);
\coordinate (D) at (-3.4, -3.4*0.4931395);
\coordinate (DE) at (-3.4/1.05645,-3.4);
\coordinate (E) at (3.4/3.62695, -3.4);
\coordinate (F) at (3.2, 3.2*0.054375);

\fill[orange,opacity=0.1] (O)--(A)--(3.2,3.2)--(AB)--cycle;
\draw[orange] (2.68,1.8) node {\!\!\!\!\small \colorbox{orange!2!white}{$I_{P, Q}(s_2)$}};
\draw[Pink,line width=1.2pt] (O)--(AB) node[above] { {$l_3$}};

\fill[Green,opacity=0.1] (O)--(F)--(A)--cycle;
\draw[orange,line width=1.2pt] (O)--(A)node[right] { {$l_2$}};
\draw[Green] (2.4, 0.51) node {\small \colorbox{Green!2!white}{$I_{P, Q}(s_1)$}};

\fill[Pink,opacity=0.1] (O)--(AB)--(B)--cycle;
\draw[blue,line width=1.2pt] (O)--(B);
\draw[Pink] (1.3,2.8) node {\!\!\!\!\small \colorbox{Pink!2!white}{$I_{P, Q}(s_3)$}};

\fill[Green,opacity=0.1] (O)--(C)--(D)--cycle;
\draw[orange,line width=1.2pt] (O)--(D);
\draw[Green] (-2.6, -0.6) node {\small \colorbox{Green!2!white}{$I_{P, Q}(s_5)$}};

\draw[Green,line width=1.2pt] (O)--(C);
\fill[blue,opacity=0.1] (O)--(B)--(-3.4,3.2)--(C)--cycle;
\draw[blue] (-2.8,2.8) node[right] {\small \colorbox{blue!2!white}{$I_{P, Q}(s_4)$}};

\fill[orange,opacity=0.1] (O)--(D)--(-3.4,-3.4)--(DE)--cycle;
\draw[orange] (-2.68, -2) node {\small \colorbox{orange!2!white}{$I_{P, Q}(s_6)$}};
\draw[Pink,line width=1.2pt] (O)--(DE);

\fill[Pink,opacity=0.1] (O)--(DE)--(E)--cycle;
\draw[blue,line width=1.2pt] (O)--(E) node[below right] { {$l_0$}};
\draw[Pink] (-1,-3) node {\!\!\!\!\small \colorbox{Pink!2!white}{$I_{P, Q}(s_7)$}};

\draw[Green,line width=1.2pt] (O)--(F) node[above right] {$l_1$};
\fill[blue,opacity=0.1] (O)--(E)--(3.2,-3.4)--(F)--cycle;
\draw[blue] (3,-3) node[left] {\small \colorbox{blue!2!white}{$I_{P, Q}(s_0)$}};

\draw (0.088, -0.5) node[left] {O};
\draw[line width=1.3] (O) circle[radius=0.04];
\end{tikzpicture}
\end{center}
\caption{left: The partition of the space of the initial values into $I_{P_{1/2}, Q}(s_j)$ for $s_j$ in \eqref{equation: ultimate sign example elliptic rational}.
\\
right: The partition of the space of the initial values into $I_{P_{1/ \sqrt{2}}, Q}(s_j)$ for $s_j$ in \eqref{equation: 8 ultimate signs}.
\\
In both pictures, the critical half line between $I_{P_t, Q}(s_j)$ and $I_{P_t, Q}(s_{j+1})$ belongs to $I_{P, Q}(s_{j+1})$ for each $j$, where $t= \frac12, \frac1{\sqrt{2}}$, respectively.
\label{figure: example t=1/2 and t=1/sqrt2}}
\end{figure}

\end{example}

\begin{example}\label{ex:}
Let $P(x) = \frac{x+2}{x+1}$ and $Q(x) = - \frac{x+3}{x+1}$, 
so that $\frac{Q(x)}{P(x)P(x-1)} = -1 + \frac2{x^2 + 3x + 2}$ is decreasing 
and $(P, Q)$ is $\frac13$-$O$ elliptic. 
By Theorem~\ref{thm:main}~\eqref{item:k/r-O_elliptic}, 
non-zero $(P, Q)$-holonomic sequences $f$ in this case
have ultimate signs $s_0, \dots, s_6$ in \eqref{equation: ultimate sign example elliptic rational},
and the partition of the space of the initial values into the sets $I_{P, Q}(s_j)$ is illustrated as \cref{figure: example elliptic rational}.
In this illustration, we know the exact values of the slopes of the critical lines since, for this $P$ and $Q$, we can solve the recurrence \eqref{eq:(PQ)holonomic} explicitly:
\begin{equation}\label{eq:Baku_Koki}
f(n) = \begin{cases}
(-1)^m \left( \left( \frac72 m + 1 \right) f(0) - m f(1) \right) & \text{if} \ n = 3m, \\
(-1)^m \left( m f(0) + (m+1) f(1) \right) & \text{if} \ n = 3m+1, \\
(-1)^{m+1} \left( \left( \frac52 m + 3 \right) f(0) - 2(m+1) f(1) \right) & \text{if} \ n = 3m+2. 
\end{cases}
\end{equation}
\begin{figure}
\begin{center}
\begin{tikzpicture}[scale=0.97]
\draw[->] (-3.5,0) -- (3.5, 0) node[below] {\ \ \ $f(0)$};
\draw[->] (0,-3.7)-- (0,3.4) node[left] {$f(1)$};
\coordinate (O) at (0, 0);
\coordinate (A) at (6.4/7, 3.2);
\coordinate (B) at (-3.2, 3.2);
\coordinate (C) at (-13.6/5, -3.4);
\coordinate (D) at (-6.8/7, -3.4);
\coordinate (E) at (3.4, -3.4);
\coordinate (F) at (12.8/5, 3.2);

\draw[orange] (A) node[above] { slope $\frac72$};
\fill[orange,opacity=0.1] (O)--(F)--(A)--cycle;
\draw[orange] (A) node[below right] {\!\!\!\!\small \colorbox{orange!2!white}{$I_{P, Q}(s_2)$}};
\draw[orange,line width=1.2pt] (O)--(A);

\fill[blue,opacity=0.1] (O)--(A)--(B)--cycle;
\draw[blue] (-2.8,2.8) node[right] {\small \colorbox{blue!2!white}{$I_{P, Q}(s_3)$}};
\draw[blue,line width=1.2pt] (O)--(B) node[above left] { {slope $-1$}};

\fill[orange,opacity=0.1] (O)--(C)--(D)--cycle;
\draw[orange] (-1.6, -3) node {\small \colorbox{orange!2!white}{$I_{P, Q}(s_5)$}};
\draw[orange,line width=1.2pt] (O)--(D);

\draw[Green,line width=1.2pt] (O)--(C);
\fill[Green,opacity=0.1] (O)--(B)--(-3.2,-3.4)--(C)--cycle;
\draw[Green] (-2.24, -0.6) node {\small \colorbox{Green!2!white}{\textcolor{Green}{$I_{P, Q}(s_4)$}}};

\fill[blue,opacity=0.1] (O)--(D)--(E)--cycle;
\draw[blue] (3,-3) node[left] {\small \colorbox{blue!2!white}{$I_{P, Q}(s_0)$}};
\draw[blue,line width=1.2pt] (O)--(E);

\draw[Green,line width=1.2pt] (O)--(F) node[above right] {\textcolor{Green}{slope $\frac54$}};
\fill[Green,opacity=0.1] (O)--(E)--(3.2,3.2)--(F)--cycle;
\draw[Green] (2.3, 0.5) node {\small \colorbox{Green!2!white}{\textcolor{Green}{$I_{P, Q}(s_1)$}}};

\draw (-0.2, -0.2) node[left] {O};
\draw[line width=1.3] (O) circle[radius=0.04];
\end{tikzpicture}
\caption{The set of initial values $(f (0), f (1))$ of $\bigl( \frac{x+2}{x+1},  - \frac{x+3}{x+1} \bigr)$-holonomic sequences $f$ having each of the ultimate signs in \eqref{equation: ultimate sign example elliptic rational}.}
\label{figure: example elliptic rational}
\end{center}
\end{figure}
\end{example}

Note that the solution \eqref{eq:Baku_Koki} is a normal form of a hypergeometric sequence in the sense of \cite{Teg24} and can be found algorithmically. 

The lengths of the ultimate signs in \cref{cor:Spq} and \cref{thm:main} are unbounded. However, restricting $P$ and $Q$ to rational-coefficient polynomials, we have the following corollary. 

\begin{corollary}\label{cor:}
Suppose that $P$, $Q \in \Q(x)$ have no zeros or poles in $\N$. 
Then every $(P, Q)$-holonomic sequence 
has the shortest ultimate sign of length $1$, $2$, $3$, $4$, $6$, $8$ or $12$, 
if it has an ultimate sign at all. 
\end{corollary}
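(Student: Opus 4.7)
The plan is to combine Theorem~\ref{thm:main} with Niven's theorem on rational cosine values at rational multiples of $\pi$. I first reduce to the case where $P$ has ultimate sign $(+)$: if instead $P$ has ultimate sign $(-)$, I replace $f$ by $g(n) := (-1)^n f(n)$, which is $(-P, Q)$-holonomic with $-P$ having ultimate sign $(+)$. Since $\frac{Q(x)}{(-P(x))(-P(x-1))} = \frac{Q(x)}{P(x) P(x-1)}$, Definition~\ref{def:classification} assigns $(-P, Q)$ the same type as $(P, Q)$. If $g$ admits an ultimate sign of length $\tau$, then $f(n) = (-1)^n g(n)$ admits one of length $\mathrm{lcm}(\tau, 2)$, which lies in $\{2, 4, 6, 8, 12\}$ whenever $\tau \in \{1, 2, 4, 6, 8, 12\}$. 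So it is enough to show that, when $P$ has ultimate sign $(+)$, every $(P, Q)$-holonomic sequence with an ultimate sign has one of length in $\{1, 2, 4, 6, 8, 12\}$.

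Applying Theorem~\ref{thm:main}, cases~\eqref{item:loxodromic_nearly_infty}, \eqref{item:loxodromic_far_from_infty}, and \eqref{item:hyperbolic} give ultimate signs of length $1$ or $2$, and case~\eqref{item:Q-Omega_elliptic} admits no non-zero ultimate sign (the zero sequence having the trivial ultimate sign $(0)$ of length $1$). The substance is in case~\eqref{item:k/r-O_elliptic}, where the ultimate sign has length $2r$ for coprime positive integers $r, k$ with $\theta := k/r \in (0, 1/2]$. The $\theta = 1/2$ subcase (the $\frac12$-$O$ elliptic type) contributes only $r = 2$, i.e.\ length~$4$.

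For $\theta \in (0, 1/2)$, Definition~\ref{def:classification} forces $\alpha_0 = -\frac{1}{4 \cos^2(\theta \pi)}$, and $\alpha_0 = \lim_{n \to \infty} \frac{Q(n)}{P(n) P(n-1)}$ is rational because $P, Q \in \Q(x)$. Hence $\cos^2(\theta\pi) \in \Q$, and therefore $\cos(2\theta\pi) = 2 \cos^2(\theta\pi) - 1 \in \Q$. Niven's theorem then restricts $\cos(2\theta\pi)$ to $\{0, \pm \tfrac12, \pm 1\}$; since $2\theta \in (0, 1)$ the extremes $\pm 1$ are impossible, and solving $\cos(2\theta \pi) \in \{-\tfrac12, 0, \tfrac12\}$ yields $2\theta \in \{1/3, 1/2, 2/3\}$, giving $r \in \{3, 4, 6\}$ and lengths $2r \in \{6, 8, 12\}$. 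Combining with the $r = 2$ case and the loxodromic/hyperbolic lengths, all possible lengths in the reduced setting lie in $\{1, 2, 4, 6, 8, 12\}$, and after undoing the reduction one lands in $\{1, 2, 3, 4, 6, 8, 12\}$. The main technical ingredient is Niven's theorem; the remaining steps are a direct reading of Theorem~\ref{thm:main} and a bookkeeping of the parity of $\tau$.
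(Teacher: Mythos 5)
Your proposal is correct and follows essentially the same route as the paper's proof: reduce to $P$ with ultimate sign $(+)$ via $n \mapsto (-1)^n f(n)$, read off lengths $1$, $2$, or $2r$ from \cref{thm:main}, and in the $\frac{k}{r}$-$O$ elliptic case use the rationality of $\alpha_0 = \lim_{x\to\infty} \frac{Q(x)}{P(x)P(x-1)}$ together with Niven's theorem to force $r \in \{2,3,4,6\}$. The only (inessential) difference is that you invoke Niven's theorem as a black box directly for $\cos(2k\pi/r)$, whereas the paper first passes to $\cos(2\pi/r)$ via coprimality and reproves the Niven step with Chebyshev polynomials for completeness.
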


\begin{proof}
We may assume that $P$ has the ultimate sign $(+)$, 
as mentioned immediately before \cref{cor:Spq}. 
Although ultimate signs of length 3 do not appear in the following proof, $(P, Q)$-holonomic sequences can have them when $P$ has the ultimate sign $(-)$.

Of the four cases in \cref{cor:Spq}, 
the only one that does not immediately imply our claim 
is when 
$(P, Q)$ is of $\frac{k}r$-$O$ elliptic type for some coprime positive integers $r$ and $k$.
If $(r, k) = (2, 1)$, we are done. 
Otherwise, $-\frac1{4 \cos^2 \frac{k}r \pi} = \lim\limits_{x \to \infty} \frac{Q(x)}{P(x)P(x-1)} \in \Q$. 
Since $\cos^2 \frac{k}r \pi = \frac12 \left( \cos \frac{2k}r \pi +1 \right)$, 
we have $\cos \frac{2k}r \pi \in \Q$, 
and thus $\cos \frac{2}r \pi \in \Q$ since $r$ and $k$ are coprime. 
The corollary follows from the fact that  
the only possibilities for such $r$ are $3$, $4$, $6$, 
since $f$ will then have the shortest ultimate sign of length $2r \in \{ 6, 8, 12\}$
by \cref{cor:Spq}. (Note that $k=1$ for these cases.)
This fact is known as (a version of) Niven's theorem, 
but we present its proof for the sake of completeness. 

If $r$ were a multiple of $8$, then $\cos \bigl( \frac2r \pi \cdot \frac{r}8 \bigr) = \frac1{\sqrt{2}}$ would be rational, which is a contradiction. 
Thus there is $j \in \{ 0, 1, 2 \}$ such that $2^{-j} r$ is odd. 
Since $\cos \frac2r \pi$ is rational, so is $\cos \frac{2^{j+1}}r \pi$. 
The \emph{Chebyshev polynomial} $T \in \Z[x]$ of order $2^{-j} r$ is the polynomial such that $T(\cos \theta) = \cos 2^{-j} r \theta$ for any $\theta \in \R$, whose leading coefficient and constant term are known to be a non-negative power of 2 and $0$ respectively. It follows from $T\bigl( \cos \frac{2^{j+1}}r \pi \bigr) - 1 = 0$ that $\bigl| \cos \frac{2^{j+1}}r \pi \bigr|$ is a non-positive power of 2. One can get $r = 2, 3, 4, 6$ by some calculation using $\frac12 < \cos \frac{2^{j+1}}r \pi$ when $r$ is large. Since we have an assumption of $r\neq2$, the proof is done. 
\end{proof}

We will derive from \cref{cor:Spq} another corollary (\cref{cor:hyperbolic subsequence} in \cref{sec:proof_of_thm:procedure}). Appropriate subsequences of second-order holonomic sequences are again second-order holonomic sequences. That corollary describes the types of the coefficients of the recurrence which the subsequences satisfy.

\subsubsection{Connection to continued fractions}\label{sec:cfrac}

In this section, we discuss the connection between \cref{thm:main} and convergence theorems of continued fractions
\[
\Kettenbruch _{k=0}^n \frac{Q(k)}{P(k)} 
=
\cfrac{Q(0)}{
P(0) 
+
\cfrac{Q(1)}
{
P(1) + 
\genfrac{}{}{0pt}{0}{\vphantom{l}}{
\lower1ex\hbox{$\ddots$} 
\genfrac{}{}{0pt}{0}{}{
{} + \cfrac{Q(n)}{P(n)}
}
}
}
}.
\]
Note that continued fractions take values in $\hat{\R} = \R \cup \{ \infty \}$ with $x / \infty = 0$ for $x \in \R$ and $x / 0 = \infty$ for $x \in \R \setminus \{ 0 \}$. See \cite{LW08} about their deep theory and application. 
Continued fractions are closely related to second-order holonomic sequences through the next proposition, which can be verified by induction on $n$ (simultaneously for all $P$ and $Q$): 

\begin{proposition}\label{prop:cfrac-holonomic}
Let $P, Q \in \R(x)$ have no poles in $\N$ and $A$ and $B$ be the $(P, Q)$-holonomic sequences with initial values $(1, 0)$ and $(0, 1)$ respectively. Then  
\begin{equation}\label{eq:K=A/B}
\Kettenbruch _{k=0}^n \frac{Q(k)}{P(k)} = \frac{A(n+2)}{B(n+2)}
\end{equation}
in $\hat{\R}$ for all $n \in \N$.
\end{proposition}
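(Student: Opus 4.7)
The plan is to induct on $n$ with the claim quantified uniformly over all admissible pairs $(P, Q)$; this uniformity is essential because the inductive step invokes the claim for a shifted pair. For the base case $n = 0$, the recurrence and initial values directly give $A(2) = P(0) A(1) + Q(0) A(0) = Q(0)$ and $B(2) = P(0) B(1) + Q(0) B(0) = P(0)$, whence $A(2)/B(2) = Q(0)/P(0) = \Kettenbruch_{k=0}^{0} \frac{Q(k)}{P(k)}$ in $\hat{\R}$.

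For the induction step, I would assume the claim at $n$ for every admissible pair and peel off the outermost layer of the continued fraction:
\[
\Kettenbruch_{k=0}^{n+1} \frac{Q(k)}{P(k)} = \cfrac{Q(0)}{P(0) + \Kettenbruch_{k=0}^{n} \frac{Q'(k)}{P'(k)}},
\]
where $P'(x) := P(x+1)$ and $Q'(x) := Q(x+1)$. Since $P'$, $Q'$ also have no poles in $\N$, the induction hypothesis applies to $(P', Q')$ and identifies the inner continued fraction with $A'(n+2)/B'(n+2)$, where $A'$, $B'$ are the $(P', Q')$-holonomic sequences with initial values $(1, 0)$ and $(0, 1)$.

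To close the loop, I express $A'$, $B'$ in terms of $A$, $B$. The shifted sequences $m \mapsto A(m+1)$ and $m \mapsto B(m+1)$ are $(P', Q')$-holonomic (direct check from the recurrence), and their Casorati determinant at $m = 0$ is $-Q(0)$, so under $Q(0) \neq 0$ they form a basis of the $(P', Q')$-holonomic solution space. Matching initial values yields
\[
A'(m) = B(m+1) - \frac{P(0)}{Q(0)} A(m+1), \qquad B'(m) = \frac{A(m+1)}{Q(0)}.
\]
Substituting these into the displayed identity and doing the routine algebra collapses the outer expression to $A(n+3)/B(n+3) = A((n+1)+2)/B((n+1)+2)$, completing the induction.

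The main obstacle (a modest one) is the degenerate case $Q(0) = 0$, when the basis-expansion formulas above are undefined. In that case $A(2) = 0$ propagates through the recurrence to give $A(n) = 0$ for all $n \geq 1$ while $B(n)$ need not vanish, and both sides of \eqref{eq:K=A/B} reduce to $0$ in $\hat{\R}$ by a direct check; similar brief case analysis handles the other $\hat{\R}$-arithmetic edge cases where intermediate quantities become $0$ or $\infty$. Alternatively, since both sides are rational in the values $P(k), Q(k)$, once the identity is verified on the open set $\{Q(0) \neq 0\}$ it extends by continuity to the full parameter space.
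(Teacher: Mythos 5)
Your proof is correct and follows exactly the route the paper indicates (it only remarks that the identity ``can be verified by induction on $n$, simultaneously for all $P$ and $Q$''): you peel off the outermost layer and apply the uniform induction hypothesis to the shifted pair $(P', Q')$. A small simplification: stating the change of basis forwards, $A(m+1) = Q(0)\,B'(m)$ and $B(m+1) = A'(m) + P(0)\,B'(m)$ (checked directly on initial values), avoids dividing by $Q(0)$ and hence makes the separate $Q(0)=0$ case unnecessary.
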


For this reason, $A(n)$ and $B(n)$ are called the $n$th canonical numerator and denominator, respectively. We can interpret \cref{thm:main} to a convergence theorem of subsequences $\{ p(A(n), B(n)) \}_{n \equiv i \pmod{\tau}}$, $i=0, \dots, \tau-1$, of $p(A(n), B(n))$ where $p$ is the projection \eqref{eq:projection} and $\tau \geq 1$ is a suitable integer below. 

Let $\tau$ be $2$, $1$, $1$, $2r$ in \cref{thm:main}~\eqref{item:loxodromic_nearly_infty}, \eqref{item:loxodromic_far_from_infty}, \eqref{item:hyperbolic}, \eqref{item:k/r-O_elliptic}, respectively. Then the set $I_i(+)$ of initial values of $(P, Q)$-holonomic sequence $f$ such that $\{ f(n) \}_{n \equiv i \pmod{\tau}}$ has the ultimate sign $(+)$ is a half-plane on $\R^2$. 
Since $f$ satisfies
\begin{equation}\label{eq:f=f0w}
f(n) = A(n)f(0) + B(n)f(1) = \sqrt{A(n)^2 + B(n)^2} \ p(A(n), B(n)) \cdot (f(0), f(1)), 
\end{equation}
$\{ p(A(n), B(n)) \}_{n \equiv i \pmod{\tau}}$ converges to the midpoint of the interval $p(I_i(+))$. 
Similarly, it can be derived that, for any $\tau \geq 1$, one of $\{ p(A(n), B(n)) \}_{n \equiv i \pmod{\tau}}$ must diverge in the case of \cref{thm:main}~\eqref{item:Q-Omega_elliptic}. 
In this sense, \cref{thm:main} is a convergence theorem of the subsequences of $p(A(n), B(n))$. 

By the discussion above, $\bigl\{ -\Kettenbruch _{k=0}^{n} \frac{Q(k)}{P(k)} \bigr\}_{n \equiv i \pmod{\tau}} = \left\{ -\frac{A(n)}{B(n)} \right\}_{n \equiv i \pmod{\tau}}$, $i=0, \dots, \tau-1$ converge to the slope values of the critical lines in the situation \eqref{item:loxodromic_nearly_infty}, \eqref{item:loxodromic_far_from_infty}, \eqref{item:hyperbolic}, \eqref{item:k/r-O_elliptic}. 

\begin{theorem}\label{thm:subcfrac_converge}
Let $P$, $Q \in \R(x)$ be rational functions without zeros or poles in $\N$. First, in \eqref{item:loxodromic_nearly_infty}, \eqref{item:loxodromic_far_from_infty}, \eqref{item:hyperbolic} and \eqref{item:k/r-O_elliptic} of \cref{thm:main}, the slopes of the critical lines are exactly the accumulation points of the continued fraction $\bigl\{ -\Kettenbruch _{k=0}^{n} \frac{Q(k)}{P(k)} \bigr\}_{n \in \N}$. Second, the accumulation of the continued fraction is as follows:
\begin{enumerate}[(1)]
\item \label{item:cfrac_O_lox}
If $(P, Q)$ is of $\infty$-$O$ loxodromic type, 
the subsequences
$\bigl\{ \Kettenbruch _{k=0}^{n} \frac{Q(k)}{P(k)} \bigr\}_{n \equiv i \pmod{2}}$, $i = 0$, $1$, converge in $\hat{\R}$ to 
distinct values. 

\item \label{item:cfrac_Omega_lox}
If $(P, Q)$ is of $\infty$-$\Omega$ loxodromic or hyperbolic type, 
the sequence $\bigl\{ \Kettenbruch _{k=0}^{n} \frac{Q(k)}{P(k)} \bigr\}_{n\in\N}$ converges in $\hat{\R}$. 

\item \label{item:cfrac_O_ell}
If $(P, Q)$ is of $\frac{k}r$-$O$ elliptic type, where $r$ and $k$ are coprime positive integers, 
the sequences $\bigl\{ \Kettenbruch _{k=0}^{n} \frac{Q(k)}{P(k)} \bigr\} _{n \equiv i \pmod{r}}$, $i = 0$, \ldots, $r-1$, converge in $\hat{\R}$ to distinct values.  

\item \label{item:cfrac_Omega_ell}
If $(P, Q)$ is of $\Q$-$\Omega$ elliptic type, 
then for no positive integer $\tau$ and no $i \in \{ 0, \dots, \tau - 1 \}$ does the sequence $\bigl\{ \Kettenbruch _{k=0}^{n} \frac{Q(k)}{P(k)} \bigr\}_{n \equiv i \pmod{\tau}}$ converge in $\hat{\R}$. 
\end{enumerate}
\end{theorem}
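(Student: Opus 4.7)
The plan is to deduce \cref{thm:subcfrac_converge} from \cref{thm:main} via \cref{prop:cfrac-holonomic} and the identity \eqref{eq:f=f0w}. Write $w(n) := p(A(n), B(n)) \in S^1$. The bridge I shall use throughout is that $\{w(n)\}_{n \equiv i \pmod{\tau'}}$ converges in $S^1$ to some $w_\infty$ if and only if $\{-A(n)/B(n)\}_{n \equiv i \pmod{\tau'}}$ converges in $\hat{\R}$ to the slope of the line $w_\infty^\perp$ through the origin; in the latter direction one may need to pass to the doubled modulus $2\tau'$ to separate the two antipodal preimages in $S^1$ by the sign of, say, $B(n)$.

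For parts \eqref{item:cfrac_O_lox}, \eqref{item:cfrac_Omega_lox} and \eqref{item:cfrac_O_ell}, I first establish convergence of $\{w(n)\}_{n \equiv i \pmod{\tau'}}$ on each residue class, uniformly with $\tau' = 2$ in cases \eqref{item:loxodromic_nearly_infty}--\eqref{item:hyperbolic} of \cref{thm:main} and $\tau' = 2r$ in case \eqref{item:k/r-O_elliptic}. Suppose otherwise that a subsequence has two distinct accumulation points $w_1 \neq w_2$. Choose $v \in \R^2 \setminus \{0\}$ with $(v \cdot w_1)(v \cdot w_2) < 0$: if $w_2 \neq \pm w_1$ the two open half-planes $w_j^+ := \{v : v \cdot w_j > 0\}$ have non-empty symmetric difference, and if $w_2 = -w_1$ any $v$ with $v \cdot w_1 \neq 0$ suffices. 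Then by \eqref{eq:f=f0w} the sign of $f(n)$ along $n \equiv i \pmod{\tau'}$ is not eventually constant, contradicting the existence of an ultimate sign of $f$ of length dividing $\tau'$ granted by \cref{thm:main}. Translating back via the bridge yields convergence of $\{-A(n)/B(n)\}$ on each residue class to the slope $c_i$ of the corresponding critical line.

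The theorem's modulus $\tau \in \{2, 1, 1, r\}$ then matches $\tau'$ after identifying residue classes whose slope limits coincide. In \eqref{item:cfrac_Omega_lox}, the two parity limits of $w(n)$ must be equal---otherwise a two-dimensional cone of initial values would yield a length-$2$ ultimate sign, contradicting the singleton-or-empty structure of the regions of \cref{thm:main}~\eqref{item:loxodromic_far_from_infty}--\eqref{item:hyperbolic} associated with sign patterns $(+, -)$ and $(-, +)$. In \eqref{item:cfrac_O_ell}, the ultimate signs $s_j$ of \cref{thm:main}~\eqref{item:k/r-O_elliptic} satisfy $(s_j)_{i+r} = (-1)^k (s_j)_i$, forcing $w_{\infty, i+r} = \pm w_{\infty, i}$ and hence $c_{i+r} = c_i$. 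Distinctness of the remaining limits in \eqref{item:cfrac_O_lox} and \eqref{item:cfrac_O_ell} follows because any coincidence of two limits would collapse two genuinely distinct regions of \cref{thm:main} into one. The first assertion of the theorem, that the critical-line slopes are exactly the accumulation points of $\bigl\{-\Kettenbruch_{k=0}^n \frac{Q(k)}{P(k)}\bigr\}_{n \in \N}$, then follows by pigeonholing the full sequence over these finitely many residue classes.

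The main obstacle is part \eqref{item:cfrac_Omega_ell}. The statement from \cref{thm:main}~\eqref{item:Q-Omega_elliptic} that no non-zero $f$ has an ultimate sign is strictly weaker than the desired conclusion, since eventual constancy of $\sgn f(n)$ on a single residue class does not by itself yield an ultimate sign for $f$; the contradiction argument above does not apply directly. The plan is to strengthen the proof of \cref{thm:main}~\eqref{item:Q-Omega_elliptic} to the density-type statement that, in the $\Q$-$\Omega$ elliptic case, the accumulation set of $\{w(n)\}_{n \equiv i \pmod \tau}$ in $S^1$ is infinite for every $\tau \geq 1$ and $i \in \{0, \dots, \tau - 1\}$. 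The mechanism, already implicit in that case analysis, is that the asymptotic rotation angle between consecutive $w(n)$ is an irrational multiple of $\pi$---either because $\theta \notin \Q$, or because a lower-order irrational perturbation plays the same role in the case \eqref{item:otherwise} pooled into $\Q$-$\Omega$ elliptic type (where $\alpha_1 \neq 0$)---so that no finite arithmetic progression of $n$ can trap $w(n)$ at a single point of $S^1$. Given this strengthening, the bridge rules out convergence of any subsequence of $\{-A(n)/B(n)\}$, completing \eqref{item:cfrac_Omega_ell}.
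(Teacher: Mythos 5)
Your treatment of the first assertion and of parts \eqref{item:cfrac_O_lox}--\eqref{item:cfrac_O_ell} is essentially the paper's own route: everything there is deduced from \cref{thm:main} through the identity \eqref{eq:f=f0w}, with the residue-class/antipode bookkeeping you describe, and your level of detail is adequate for that portion.

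The genuine gap is part \eqref{item:cfrac_Omega_ell}. You correctly observe that \cref{thm:main}~\eqref{item:Q-Omega_elliptic} is too weak (it only rules out ultimate signs, not convergence of the ratios $A(n)/B(n)$ along arithmetic progressions), but what you offer in its place is only a plan: the ``density-type'' strengthening that the accumulation set of $\{p(A(n),B(n))\}_{n\equiv i \pmod{\tau}}$ is infinite is asserted, not proved, and the mechanism you cite for it is not correct across the $\Q$-$\Omega$ elliptic class. That class contains, besides irrational $\theta$, the subcases $d=1,2$ (where $\frac{Q(x)}{P(x)P(x-1)}\to-\infty$ and the expansion \eqref{eq:Taylor_ex} does not even exist, so your dichotomy ``$\theta\notin\Q$ or $\alpha_1\neq 0$'' does not cover it) and the subcases with rational $\theta$ and $\alpha_1\neq 0$ or $\alpha_0=-\frac14$ with $(\alpha_1,\alpha_2)<(0,-\frac1{16})$; in these the asymptotic rotation angle \emph{is} a rational multiple of $\pi$, and non-convergence comes from the slow divergence of the accumulated perturbation (a harmonic-type drift), which needs quantitative control of error terms of exactly the kind that makes this the hard analytic content of the theorem. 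Proving your strengthened claim would amount to redoing Kooman-type asymptotics. The paper avoids this entirely: it proves \eqref{item:cfrac_Omega_ell} by noting (via \cref{cor:hyperbolic subsequence}~\eqref{item:Q-Omega_elliptic->Q-Omega_elliptic}, itself a consequence of \cref{thm:main}) that the gap-$\tau$ subsequences $A(\tau n+i)$, $B(\tau n+i)$ satisfy a second-order recurrence whose coefficient pair $(P_\tau(\tau x+i),Q_\tau(\tau x+i))$ is again of $\Q$-$\Omega$ elliptic type, and then invoking the external convergence criterion \cref{thm:cfrac_cconverge} (Kooman) to conclude divergence of each such subsequence of the continued fraction. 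Without either that citation or a full proof of your density claim, part \eqref{item:cfrac_Omega_ell} remains unproven in your proposal.
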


We consider the ``gap-$r$ subsequences'' $\bigl\{ \Kettenbruch _{k=0}^{n} \frac{Q(k)}{P(k)} \bigr\} _{n \equiv i \pmod{r}}$ instead of the gap-$2r$ subsequences in \eqref{item:cfrac_O_ell} 
because the limit of $\{ p(A(n), B(n)) \}_{n \equiv i \pmod{2r}}$ is equal to the limit of $\{ p(A(n), B(n)) \}_{n \equiv i+\tau \pmod{2r}}$ except for multiplication by $\pm1$.

Part \eqref{item:cfrac_O_lox} of this theorem is included in \cite[Theorems~3.12 and 3.13]{LW08}. 
Part \eqref{item:cfrac_O_ell} is similar to \cite[Lemma~4.28]{LW08}. 
Part \eqref{item:cfrac_Omega_lox} can be derived from the following well-known convergence theorem. Although Parts \eqref{item:cfrac_O_lox}, \eqref{item:cfrac_Omega_lox} and \eqref{item:cfrac_O_ell} follow from \cref{thm:main}, Part \eqref{item:cfrac_Omega_ell} does not follow from \cref{thm:main} alone since it states divergence instead of convergence. 
We prove \eqref{item:cfrac_Omega_ell} in \cref{sec:proof_of_cfrac} using the convergence theorem below and \cref{cor:hyperbolic subsequence}~\eqref{item:Q-Omega_elliptic->Q-Omega_elliptic} (in \cref{sec:proof_of_thm:procedure}), which is derived from \cref{thm:main}.

\begin{theorem}[{\cite[Theorem~7.1]{Koo91}}] \label{thm:cfrac_cconverge}
Let $P$, $Q \in \R(x)$ be rational functions without zeros or poles in $\N$. 
The continued fraction $\bigl\{ \Kettenbruch _{k=0}^n \frac{Q(k)}{P(k)} \bigr\}_{n \in \N}$ converges in $\hat{\R}$ 
if and only if $(P, Q)$ is of $\infty$-$\Omega$ loxodromic or hyperbolic type. 
\end{theorem}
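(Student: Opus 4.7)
My plan is to derive both directions from \cref{thm:main} via \cref{prop:cfrac-holonomic}. That proposition identifies the continued fraction $\Kettenbruch_{k=0}^n \frac{Q(k)}{P(k)}$ with the ratio $A(n+2)/B(n+2)$, so its convergence in $\hat\R$ is equivalent to the convergence of $\{p(A(n), B(n))\}_{n \in \N}$ in the projective circle $S^1/\{\pm 1\}$ (naturally identified with $\hat\R$ via slopes). I may also assume the ultimate sign of $P$ is $(+)$: replacing $P$ with $-P$ sends $(A(n), B(n))$ to $((-1)^n A(n), (-1)^n B(n))$, preserving both the ratio $A(n)/B(n)$ and the function $Q(x)/(P(x)P(x-1))$ that determines the type.

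The crux is the following bridge lemma: $\{p(A(n), B(n))\}_n$ converges in $S^1/\{\pm 1\}$ if and only if $p(I_{P, Q}(+)) \cup p(I_{P, Q}(-))$ covers $S^1$ except at most one antipodal pair of points. For the easier direction, if $p(A(n), B(n)) \to \pm w^*$ and $w^* \cdot (f(0), f(1)) \neq 0$, then
\begin{equation*}
f(n) = A(n) f(0) + B(n) f(1) = \|(A(n), B(n))\| \cdot p(A(n), B(n)) \cdot (f(0), f(1))
\end{equation*}
forces $\sgn f(n)$ to be ultimately constant, placing $(f(0), f(1))$ in $I_{P, Q}(+) \cup I_{P, Q}(-)$. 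Conversely, any two distinct projective accumulation points $\pm w_1^*, \pm w_2^*$ yield a non-empty open cone of initial values for which $w_1^* \cdot (f(0), f(1))$ and $w_2^* \cdot (f(0), f(1))$ have opposite signs; the same identity shows such $(f(0), f(1))$ lie in neither $I_{P, Q}(+)$ nor $I_{P, Q}(-)$.

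Applying the bridge lemma case by case in \cref{thm:main} then completes the proof. In case~\eqref{item:loxodromic_far_from_infty} ($\infty$-$\Omega$ loxodromic) the complement of $p(I_{P, Q}(+)) \cup p(I_{P, Q}(-))$ in $S^1$ is exactly the antipodal pair $\{p(I_{P, Q}(+, -)), p(I_{P, Q}(-, +))\}$; in case~\eqref{item:hyperbolic} (hyperbolic) this complement is empty; both yield convergence. In case~\eqref{item:loxodromic_nearly_infty} ($\infty$-$O$ loxodromic) the closed intervals $p(I_{P, Q}(+, -))$ and $p(I_{P, Q}(-, +))$ have non-empty interiors; in case~\eqref{item:k/r-O_elliptic} there are $2r \geq 4$ non-empty open intervals carrying ultimate signs distinct from $(+)$ and $(-)$; and in case~\eqref{item:Q-Omega_elliptic} no non-zero sequence has any ultimate sign at all, so $p(I_{P, Q}(+)) \cup p(I_{P, Q}(-)) = \emptyset$. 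In each of these last three cases a non-trivial open portion of $S^1$ lies outside $p(I_{P, Q}(+)) \cup p(I_{P, Q}(-))$, so the continued fraction diverges.

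The main obstacle I anticipate is the ``only if'' half of the bridge lemma: one must verify that two distinct projective accumulation points produce an \emph{open cone} of offending initial values, not merely a single bad direction. This openness is what makes the contrast with \cref{thm:main} sharp, since a one-dimensional exceptional set (as in case~\eqref{item:loxodromic_far_from_infty}) is compatible with convergence, whereas a two-dimensional exceptional set (as in cases~\eqref{item:loxodromic_nearly_infty}, \eqref{item:k/r-O_elliptic}, and \eqref{item:Q-Omega_elliptic}) is not.
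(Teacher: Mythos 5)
Your route is necessarily different from the paper's, because the paper does not prove this statement at all: it is imported verbatim from Kooman \cite[Theorem~7.1]{Koo91} and is itself used as an external ingredient (for \cref{thm:subcfrac_converge}~\eqref{item:cfrac_Omega_ell}). Deriving it from \cref{thm:main} is legitimate in principle (no circularity, since the proof of \cref{thm:main} never uses \cref{thm:cfrac_cconverge}), and your convergence half is fine: in cases \eqref{item:loxodromic_far_from_infty} and \eqref{item:hyperbolic} the cone $I_{P,Q}(+)$ is a half-plane, every accumulation point $u$ of $p(A(n),B(n))$ on the compact circle satisfies $u \cdot f_0 \geq 0$ for all $f_0$ in that open half-plane, hence $u$ is its unit normal and compactness gives convergence; this is essentially what the paper sketches in \cref{sec:cfrac}.

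The gap is in your ``easier direction'' of the bridge lemma, and it is fatal for one of the divergence cases. Convergence of $A(n)/B(n)$ in $\hat{\R}$ only means that $p(A(n),B(n))$ converges \emph{projectively}; the representative in $S^1$ may keep jumping between $w^*$ and $-w^*$. For $f_0$ with $w^* \cdot f_0 \neq 0$ one then only gets $\sgn f(n) = \epsilon_n \sgn(w^* \cdot f_0)$ eventually, where $\epsilon_n \in \{+1,-1\}$ is a common but possibly aperiodic flip sequence; $\sgn f(n)$ need not be ultimately constant, so such $f_0$ need not lie in $I_{P,Q}(+) \cup I_{P,Q}(-)$. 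Hence ``an open arc outside $p(I_{P,Q}(+)) \cup p(I_{P,Q}(-))$ forces divergence'' is not established. In cases \eqref{item:loxodromic_nearly_infty} and \eqref{item:k/r-O_elliptic} this can be repaired: projective convergence would force any two solutions with initial values off one fixed line to have eventually equal or eventually opposite sign sequences, while \cref{thm:main} supplies open cones realizing at least three pairwise distinct, non-antipodal periodic patterns (e.g.\ $(+)$ and $(+,-)$, or $s_0$ and $s_1$ with $s_1 \neq \pm s_0$ when $2r \geq 4$), which is contradictory. But in case \eqref{item:Q-Omega_elliptic} no repair from \cref{thm:main} is possible: that statement only excludes \emph{periodic} sign patterns, and projective convergence together with an aperiodic flip sequence $\epsilon_n$ is perfectly consistent with every non-zero solution lacking an ultimate sign. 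This is precisely the point the paper itself makes when it stresses that divergence does not follow from \cref{thm:main} alone and therefore cites Kooman's theorem rather than proving it; so the ``only if'' half for the $\Q$-$\Omega$ elliptic type remains unproven in your proposal and needs an independent argument.
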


\subsubsection{Connection to monotonic convergence of continued fractions}\label{sec:monotonic_convergence_thm}

If we identify the ultimate sign of $B$, we can extend the convergence of subsequences of $\frac{A(n)}{B(n)}$ to that of $p(A(n), B(n))$. 
But this is not enough to prove each part of \cref{thm:main}; we need monotonic convergence theorems. 
This is because \cref{thm:main} even describes the ultimate signs of holonomic sequences with initial values on the critical lines, and therefore figures out not only the convergence of subsequences of $p(A(n), B(n))$, but also the direction in which the subsequences of $p(A(n), B(n))$ converge to their limits. 

\cite[Theorems 3.12 and 3.13]{LW08} and \cite[Lemma 3.4]{KKL+21} are monotonic convergence theorems for $(P, Q)$ of $\infty$-O, -$\Omega$ loxodromic type and of hyperbolic type, respectively, and both literature identify the ultimate sign of $B$ in their cases. Hence \cref{thm:main}~\eqref{item:loxodromic_nearly_infty} and \eqref{item:loxodromic_far_from_infty} can be derived from the former literature, and \eqref{item:hyperbolic} can be derived from the latter.

\subsection{Computing the ultimate sign}
\label{subsection: main result algorithm}

The partial algorithm in the following theorem tells us, for given $(P, Q) \in \Q(x)^2$ and $f_0 \in \Q^2$, the index $N \in \N$ at which the $(P, Q)$-holonomic sequence with initial value $f_0$, whenever it terminates. 
Note that once we get $N$, 
we can obtain the ultimate sign itself
by looking at the signs of 
a finite number of terms $f (N)$, $f (N + 1)$, $\dots$
according to \cref{cor:Spq}. 

\begin{theorem}\label{thm:procedure}
There exists a partial algorithm that, 
\begin{itemize}
\item 
given 
$P, Q \in \Q(x)$ without zeros or poles in $\N$, 
together with a pair $f_0 \in \Q^2$,
\item 
terminates if and only if 
the $(P, Q)$-holonomic sequence $f$ with initial value $f_0$
has an ultimate sign
and it is stable 
in the sense that there is a neighbourhood $\mathcal{N} \subseteq \Q^2$ of $f_0$
such that all $(P, Q)$-holonomic sequences with initial value in $\mathcal{N}$ have the same ultimate sign, 
and
\item 
whenever it terminates, 
outputs an index at which $f$ has its ultimate sign.
\end{itemize}
\end{theorem}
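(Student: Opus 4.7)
The plan is to construct an iterative procedure, indexed by $n = 0, 1, 2, \dots$, that at step $n$ tries to certify both that $f$ has reached its ultimate sign from index $n$ onward and that this certification is robust under small perturbations of $f_0$. First, the algorithm computes the type of $(P,Q)$ from \cref{def:classification}. Because $P, Q \in \Q(x)$, the degree $d$ and leading coefficient of $Q(x)/(P(x)P(x-1))$ and the Taylor coefficients $\alpha_0, \alpha_1, \alpha_2$ at infinity from \eqref{eq:Taylor_ex} are explicitly rational and hence effectively computable. The only subtle test is whether $\alpha_0 = -1/(4 \cos^2 \theta \pi)$ for some positive rational $\theta \in [0, \tfrac12)$; since $\alpha_0 \in \Q$ forces $\cos 2 \theta \pi \in \Q$, Niven's theorem (as used in the proof of \cref{cor:}) reduces this to comparing $\alpha_0$ against a finite list of rationals. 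Thus the type is decidable; in the $\Q$-$\Omega$ elliptic case, \cref{thm:main}\eqref{item:Q-Omega_elliptic} guarantees no ultimate sign exists, so the algorithm loops forever and we henceforth exclude this case.

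In every other case \cref{thm:main} fixes a period $\tau$ shared by all possible ultimate signs ($\tau = 2$ in Parts \eqref{item:loxodromic_nearly_infty}--\eqref{item:loxodromic_far_from_infty}, $\tau = 1$ in Part \eqref{item:hyperbolic}, $\tau = 2r$ in Part \eqref{item:k/r-O_elliptic}). Let $A$ and $B$ be the canonical $(P, Q)$-holonomic sequences of \cref{prop:cfrac-holonomic}. By \eqref{eq:f=f0w}, the sign of $f(n)$ equals the sign of $w(n) \cdot f_0$, where $w(n) = p(A(n), B(n)) \in S^1$. The monotonic convergence machinery discussed in \cref{sec:monotonic_convergence_thm}, together with the extension needed in the elliptic case, shows that for each residue $i \in \{ 0, \dots, \tau - 1 \}$ the subsequence $\{ w(n+i) \}_n$ converges monotonically on $S^1$ to a point $w_i^\ast$, which is a boundary point of the partition produced by \cref{thm:main}. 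From finitely many exactly computed terms, the algorithm maintains, at each step $n$, a closed arc $J_{n,i} \subseteq S^1$ that is guaranteed to contain $w_i^\ast$ and whose length tends to $0$ as $n \to \infty$.

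The termination test at step $n$ is the following: for every $i = 0, \dots, \tau-1$, the point $p(f_0)$ lies strictly outside both $J_{n,i}$ and its antipode, and the signs of $f(n), \dots, f(n+\tau-1)$ match one of the ultimate signs listed for the stable open regions in the relevant part of \cref{thm:main}. Once this test succeeds, one-sided convergence gives $\sgn\bigl( w(m) \cdot f_0 \bigr) = \sgn\bigl( w(n+i) \cdot f_0 \bigr)$ for every $m \geq n+i$ with $m \equiv n+i \pmod{\tau}$, so $n$ is a valid output. Conversely, stability of the ultimate sign forces $p(f_0)$ to lie at positive distance from every $\pm w_i^\ast$, so the shrinking arcs eventually exclude $p(f_0)$ and the test succeeds. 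Instability means that $p(f_0) = \pm w_i^\ast$ for some $i$, in which case $p(f_0)$ belongs to every $J_{n,i}$ or its antipode and the test never succeeds. This establishes the ``if and only if'' condition.

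The hardest part will be rigorously setting up the shrinking arcs $J_{n,i}$ in the $\frac{k}{r}$-$O$ elliptic case. In the loxodromic and hyperbolic cases, the monotonic convergence results cited in \cref{sec:monotonic_convergence_thm} immediately identify the side from which $w(n)$ approaches its limit and therefore yield a suitable $J_{n,i}$. In the elliptic case, however, the direction of approach of $\{ w(n+i) \}$ to $w_i^\ast$ is not fixed a priori; it depends on the monotonicity of $Q(x)/(P(x)P(x-1))$ and on the residue $i$ itself. This is precisely the subtlety that forces the split of \cref{thm:main}\eqref{item:k/r-O_elliptic} into the ``eventually increasing'' and ``eventually decreasing'' subcases, and the proof of \cref{thm:procedure} in this regime will therefore reuse the monotonic-convergence estimates developed in the main body for \cref{thm:main}\eqref{item:k/r-O_elliptic}, applied uniformly across all residues modulo $2r$.
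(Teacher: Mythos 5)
There is a genuine gap at the heart of your construction: the certified shrinking arcs $J_{n,i}$. You assert that ``from finitely many exactly computed terms, the algorithm maintains \dots a closed arc $J_{n,i}\subseteq S^1$ that is guaranteed to contain $w_i^\ast$ and whose length tends to $0$,'' but you never explain how such an enclosure is obtained, and this is exactly the obstruction the paper's proof is designed to avoid. Monotonic convergence of $\{p(A(n),B(n))\}$ (where it holds at all) gives a bound on the limit from \emph{one} side only; to trap $w_i^\ast$ in a shrinking arc you would need either a bracketing subsequence or an effective rate of convergence. In the hyperbolic regime the convergents approach the limit from one side and no computable error bound is known --- indeed, being able to localize the critical line with certified, vanishing error is essentially as hard as the Minimality Problem (cf.\ \cref{thm:problem_reduction}), which is open. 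In the $\frac{k}{r}$-$O$ elliptic regime the situation is worse: the convergence statements in the paper (\cref{lem:Tg^j_converge}, \cref{lem:single-term-feedback_recurrence}) are purely asymptotic, with no effective index from which the monotonicity sets in, so ``reusing the monotonic-convergence estimates'' does not produce anything an algorithm can check. (A smaller slip: instability corresponds to $f_0$ lying on a critical line, i.e.\ $w_i^\ast\cdot f_0=0$, so the direction to be excluded is \emph{orthogonal} to $w_i^\ast$, not $\pm w_i^\ast$ itself; this is fixable, but it signals the same conflation of the limit direction of $p(A(n),B(n))$ with the critical line.)

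The paper's proof takes a different, and crucially effective, route: for hyperbolic $(P,Q)$ it never tries to locate the critical line, but instead checks the explicit, exactly computable inequality \eqref{eq:f/f>q} with the specific comparison sequence $q(n)=\frac12+\frac1{4n}+\frac1{4n\log n}$ of \cref{lem:quantified(+)}; once this inequality holds it propagates to all later $n$ and is robust under perturbation of $f_0$, so halting certifies both the index and stability. The completeness direction (if the ultimate sign is stable the test eventually succeeds) is not proved by excluding $f_0$ from a neighbourhood of the critical direction, but by the quantitative growth argument $F(n)=\Omega(\log\log n)$ for the normalized dominant component, which forces \eqref{eq:f/f>q} to eventually hold for $f$ itself. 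All remaining types with ultimate signs are then reduced to the hyperbolic case by passing to gap-$\tau$ subsequences via \cref{cor:hyperbolic subsequence} ($\tau=2$ for loxodromic, $\tau=2r$ for $\frac{k}{r}$-$O$ elliptic), which sidesteps precisely the elliptic-case effectivity problem you flag as ``the hardest part'' but leave unresolved. To repair your proposal you would either have to supply computable two-sided enclosures of the limits $w_i^\ast$ (not currently available) or switch to a forward-propagating certificate of this kind.
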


Note that the type of $(P, Q)$ can be computed from $P$ and $Q$, 
and hence, 
although the partial algorithm does not terminate
when $f_0 = (0, 0)$ or when $(P, Q)$ is $\Q$-$\Omega$ elliptic
(because of \cref{thm:main}~\eqref{item:Q-Omega_elliptic}), 
we could make it terminate also on these inputs
and declare the non-existence of an ultimate sign in the latter case. 

This partial algorithm terminates on ``most'' inputs since, for $(P, Q)$ of $\infty$-$O$, -$\Omega$ loxodromic, hyperbolic and $\theta$-$O$ type, the $(P, Q)$-holonomic sequence $f$ with initial value $f_0$ has an unstable ultimate sign if and only if $f_0$ is on the finitely many critical lines delimiting the areas $I_{P, Q}(s)$ in \cref{thm:main}. 
For a small but substantial class of $(P, Q)$, it is known that all $f_0 \in \Q^2 \setminus \{(0, 0) \}$ lead $f$ to a stable ultimate sign, or in other words, the slopes of the critical lines are irrational, which is the main topic of \cref{sec:total_algorithm}. 
However there is no known general method to determine the stability, and it is a wide-open problem whether we can make the algorithm terminate on all inputs \cite{KKL+21,IS24,NOW21}.

Theorem~\ref{thm:procedure} is stated for  
rational-coefficient $P$, $Q$ and rational-valued $f _0$, 
so that the problem is computationally meaningful. 
By studying the proofs in some detail 
we could, however, modify the statement appropriately 
so that the partial algorithm accepts inputs involving real numbers
represented as infinite sequences of approximations, 
in a way analogous to the discussion in \cite{Neu21}
about signs of C-finite sequences. 

\begin{example} \label{ex:ct + our pa}
By combining our partial algorithm in \cref{thm:procedure} with creative telescoping, we can determine all values of $n \in \N$ for which an inequality of the form $\sum_{k} F(n,k) > \sum_k G(n,k)$ holds, for some proper hypergeometric terms $F(n,k)$ and $G(n, k)$.
We use creative telescoping to find a holonomic recurrence satisfied by $f(n) := \sum_{k} (F(n,k)-G(n,k))$. If, fortunately, this recurrence is a second-order one, we apply our partial algorithm. If it halts successfully, we can determine all $n$ for which $\sum_{k} F(n,k) > \sum_k G(n,k)$ holds. 
We execute this process for some examples below.

First, we consider the examples related to the sequence $L_t$, which consists of values of Legendre polynomials and was considered in \cref{ex:unknown_slope}.
We will determine all $n$ for which
\begin{equation} \label{eq:L1/2}
\sum_{0 \leq k \leq n/2, \ k \in 2 \Z} \binom{2n-2k}{n-k} \binom{n-k}{k} 
2^{2k-n}
\ >
\sum_{0 \leq k \leq n/2, \ k \in 2 \Z+1} \binom{2n-2k}{n-k} \binom{n-k}{k} 
2^{2k-n}
\end{equation}
holds and $n$ for which
\begin{equation} \label{eq:L1/sqrt2}
\sum_{0 \leq k \leq n/2, \ k \in 2 \Z} \binom{2n-2k}{n-k} \binom{n-k}{k} 
\sqrt{2}^{2k-n}
\ >
\sum_{0 \leq k \leq n/2, \ k \in 2 \Z+1} \binom{2n-2k}{n-k} \binom{n-k}{k} 
\sqrt{2}^{2k-n}
\end{equation}
holds, respectively. These inequalities are equivalent to $L_{1/2}(n)>0$ and $L_{1/\sqrt2}(n)>0$, respectively.
Creative telescoping tells us that the sequence $L_t$, $t \in \R$ is $(P_t, Q)$-holonomic, where $P_t$ and $Q$ are as in \cref{ex:unknown_slope}.
Since $L_t$ turned out to be a second-order holonomic sequence, we can apply our partial algorithm. 
For $t=\frac12, \frac1{\sqrt2}$, it halts successfully, and shows that $L_{\frac12}$ has the ultimate sign $(+,+,-,-,-,+)$ at $0$ and that $L_{\frac1{\sqrt{2}}}$ has the ultimate sign $(+,+,+,-,-,-,-,+)$ at $0$. 
Thus, the inequality \eqref{eq:L1/2} holds for $n \equiv 0, 1, 5 \pmod{6}$ while its reverse inequality holds for $n \equiv 2, 3, 4 \pmod{6}$, and the inequality \eqref{eq:L1/sqrt2} holds for $n \equiv 0, 1, 2 , 7 \pmod{8}$ while the reverse inequality holds for $n \equiv 3,4,5,6 \pmod{8}$.

Secondly, we will determine all $n$ for which
\[
\sum_{0 \leq k \leq n, \ k \in 2\Z} k \binom{n}{k}^3
> 
\sum_{0 \leq k \leq n, \ k \in 2\Z+1} k \binom{n}{k}^3
\]
holds. To do this, we want to find the ultimate sign of the difference $f(n) := \sum_{0 \leq k \leq n} (-1)^k k \binom{n}{k}^3$ and when $f$ has it. 
Creative telescoping tells us that $f$ is a 
\[
(R(x), S(x)) := \left( \frac{18x^2+36x+12}{(x+1)(x+2)(6x^2+4x+1)}, - \frac{3(3x+2)(3x+1)(6x^2+16x+11)}{(x+1)(x+2)(6x^2+4x+1)} \right)\text{-}
\]
holonomic sequence. 
Since $f$ turned out to be a second-order holonomic sequence, we can apply our partial algorithm. It halts successfully, and shows that $f$ has the ultimate sign $(+, -, -, +)$ at $1$.
Therefore, for $n\geq 1$, the above inequality holds if $n \equiv 0, 3 \pmod{4}$ while its reverse inequality holds if $n \equiv 1, 2 \pmod{4}$.
\end{example}

\begin{example} \label{ex:partial algorithm}
Our partial algorithm in \cref{thm:procedure} enables us to numerically approximate the slopes of the critical lines (see the paragraphs below \cref{thm:main} for the definition) to arbitrary precision, since when it receives an input $(P, Q, f_0) \in \Q(x)^2 \times \Q^2$ with $(P, Q)$ of loxodromic, hyperbolic, or $\theta$-$O$ elliptic type, it halts if and only if $f_0$ does not belong to the critical lines. 
For example, in \cref{figure: example t=1/2 and t=1/sqrt2} of \cref{ex:unknown_slope}, we can approximate the slopes of critical lines as 
\begin{align*}
-7.2875 &< \text{the slope of $l_0$} < -7.2873, 
&
0.02832 < \text{the slope of $l_1$} < 0.02833,
\\
0.68821 &< \text{the slope of $l_2$} < 0.68822,
\end{align*}
in the left picture, and as
\begin{align*}
-3.627 &< \text{the slope of $l_0$} < -3.6269,
&0.054373 &< \text{the slope of $l_1$} < 0.054374, 
\\
0.493139 &< \text{the slope of $l_2$} < 0.49314, 
&1.0564 &< \text{the slope of $l_3$} < 1.0565,
\end{align*}
in the right picture.
\end{example}

\cref{thm:procedure} can be described in a reduction form that is an extension of \cite[Theorem~3.1]{KKL+21}: 

\begin{theorem}\label{thm:problem_reduction}
    For second-order holonomic sequences, the Ultimate Sign Problem Turing-reduces to the Minimality Problem. 
\end{theorem}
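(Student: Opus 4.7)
The plan is to upgrade the partial algorithm of Theorem~\ref{thm:procedure} into a total one by using the Minimality oracle to resolve precisely the inputs on which it fails to terminate. By Theorem~\ref{thm:procedure}, those are the inputs whose ultimate sign is unstable, i.e., for which $f_0$ lies on one of the finitely many critical lines of Theorem~\ref{thm:main}. The key observation is that in the loxodromic and hyperbolic types, the critical lines are precisely the directions of minimal $(P, Q)$-holonomic solutions---essentially the content of the monotonic-convergence framework recalled in Section~\ref{sec:monotonic_convergence_thm}. For the $\frac{k}{r}$-$O$ elliptic type, where no solution is strictly minimal in the original recurrence, I would instead pass to the gap-$2r$ subsequences, which by Corollary~\ref{cor:hyperbolic subsequence} satisfy a recurrence of hyperbolic or loxodromic type and thus fall under the previous case.

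More concretely, given $P, Q \in \Q(x)$ without zeros or poles in $\N$ and $f_0 \in \Q^2$, the algorithm first computes the type of $(P, Q)$ from Definition~\ref{def:classification}; this is a purely algebraic step. If the type is $\Q$-$\Omega$ elliptic, it outputs ``no ultimate sign exists'' for $f_0 \neq (0, 0)$ by Theorem~\ref{thm:main}\eqref{item:Q-Omega_elliptic}. Otherwise, it dovetails (a) the partial algorithm of Theorem~\ref{thm:procedure} on $(P, Q, f_0)$ with (b) a Minimality query on $f$---or, in the elliptic $\frac{k}{r}$-$O$ case, on each of the $2r$ gap subsequences of $f$, whose defining recurrences and rational initial values are obtained by iterating the transfer matrix of \eqref{eq:(PQ)holonomic}. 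If (a) terminates first, return its output. If (b) reports that $f$ (resp.\ every gap subsequence) is not minimal, then wait for (a), which must terminate since $f_0$ is off every critical line. If (b) reports minimality of $f$ (resp.\ of some subsequence), then $f_0$ lies on a specific critical line, and we switch to a finite case analysis based on Theorem~\ref{thm:main}: the ultimate sign of a minimal solution is fixed, up to an overall sign, by the type of $(P, Q)$ and the monotonicity of $\frac{Q(x)}{P(x)P(x-1)}$, and the overall sign is then recovered by computing $f(N)$ for an effectively-bounded $N$ coming from the convergence-rate estimates in \cite[Theorems 3.12, 3.13]{LW08} and \cite[Lemma 3.4]{KKL+21}.

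The main obstacle is verifying that the critical-line/minimal-solution correspondence holds in exactly the form needed for each type, in particular in the elliptic $\frac{k}{r}$-$O$ case after passing to gap subsequences. Two sub-issues must be addressed: first, that the gap-$2r$ subsequence recurrences derived from an elliptic original are genuinely of hyperbolic or loxodromic type (and not again elliptic), which Corollary~\ref{cor:hyperbolic subsequence} is meant to guarantee but whose proof has to be worked out; and second, that the correspondence between critical lines of the original recurrence and those of the gap-subsequence recurrences is a bijection, so that ``minimal subsequence'' faithfully detects ``$f_0$ on a critical line of the original.'' Once these two points are established, the dovetailed algorithm above is total on all valid inputs with oracle access to Minimality, and Theorem~\ref{thm:problem_reduction} follows.
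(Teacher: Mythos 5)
There are two genuine gaps. First, your ``key observation'' that in the loxodromic types the critical lines are exactly the directions of minimal solutions is false for the $\infty$-$O$ loxodromic type: there (see the proof of \cref{thm:main}~\eqref{item:loxodromic_nearly_infty}) the map $f \mapsto (\lim_{n \text{ even}} T(n)f(n), \lim_{n \text{ odd}} T(n)f(n))$ is a bijection onto $\R^2$, so every non-zero solution has a non-vanishing limit along one parity class and \emph{no} non-zero minimal solution exists, even though the boundaries of $I_{P,Q}(+,-)$, $I_{P,Q}(-,+)$ are critical lines carrying unstable solutions. On such an input your oracle answers ``not minimal,'' your algorithm then ``waits for (a), which must terminate since $f_0$ is off every critical line''---but $f_0$ \emph{is} on a critical line and (a) never terminates, so the reduction is not total. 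The fix is the one the paper uses: do not treat the loxodromic types directly, but pass to gap-$\tau$ subsequences ($\tau = 2$ for loxodromic, $\tau = 2r$ for $\frac{k}{r}$-$O$ elliptic) via \cref{cor:hyperbolic subsequence}, so that \emph{every} non-$\Q$-$\Omega$ case is reduced to hyperbolic type, where minimality is genuinely equivalent to instability of the ultimate sign.

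Second, in the ``minimal'' branch you still must output an index at which the ultimate sign is reached, and your appeal to ``an effectively-bounded $N$ coming from the convergence-rate estimates'' in \cite[Theorems 3.12, 3.13]{LW08} and \cite[Lemma 3.4]{KKL+21} is exactly the step that needs a proof: those are qualitative (monotonic) convergence statements and do not by themselves bound where a minimal solution's sign settles. The paper closes this by a short but essential argument based on \cref{lem:quantified(+)}~\eqref{item:f/f>q}: take the explicitly computable $N$ of that lemma and observe that if the minimal $f$ had a zero or a sign change at some $n \geq N$, then condition \eqref{eq:f/f>q} would hold at $n+1$ and persist, forcing a \emph{stable} ultimate sign and contradicting minimality; hence $\sgn f(n)$ is already constant from $N$ on and one may simply output $N$ (the ultimate sign itself is then read off from $f(N)$, and from finitely many terms after undoing the subsequence reduction). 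Without this argument (or an equivalent effective bound) your case analysis ``shape from the type, overall sign from $f(N)$'' does not yield a correct index, so the reduction is not established. Apart from these two points, your overall architecture---oracle decides stable versus unstable, stable inputs go to the partial algorithm of \cref{thm:procedure}, unstable inputs are handled directly---is the same as the paper's.
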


\subsection{Input set admitting a total algorithm}\label{sec:total_algorithm}

\cref{thm:NOW21} gives a sufficient condition on $P, Q \in \Q(x)$ 
for all non-zero rational $(P, Q)$-holonomic sequences $f \in \Q^{\N} \setminus \{ 0 \}$ to have 
stable ultimate signs. This gives a nontrivial input set on which the Ultimate Sign Problem is solvable by the partial algorithm in \cref{thm:procedure}. 

The main predecessor to our work \cite[Theorem 1, 3 and 7]{NOW21} 
relies on \cite[Lemma 14]{NOW21} whose proof contained an error in the calculation of an inverse image. 
Their classification and the partial algorithm \cite[Theorem 1 and 3]{NOW21} 
analogous to our \cref{thm:main,thm:procedure}
are correct after all, as our theorems imply. 
\cite[Theorem 7]{NOW21} is revised into the following \cref{thm:NOW21} with a straightforward generalization and a slight restriction. 
The generalization lies in relaxing the condition from $P, Q \in \Z[x]$ and $0 \notin P(\N), Q(\N)$ to $P, Q \in \Q[x]$ and $P(\N), Q(\N) \subseteq \Z \setminus {0}$.
The restriction is that the condition $s q_1 - p_1 - s < 3 p_0$ (for $d = 1$) and $s q_1 - p_1 < (d+2) p_0$ (for $d \geq 2$) is replaced by a slightly stronger condition, as described in the latter part of the assumptions listed under case \eqref{item:|q|=p}.

\begin{theorem}\label{thm:NOW21}
Let $P(x) = p_0 x^d + p_1 x^{d-1} + \dots + p_d \in \Q [x]$ and $Q(x) = q_0 x^d + q_1 x^{d-1} + \dots + q_d \in \Q [x]$ take nonzero integer values, i.e., $P(\N), Q(\N) \subseteq \Z \setminus \{ 0 \}$.  
Suppose that $p_0 > 0$ and $d \geq 1$ (where $q_0$ might be zero). 
Then, if $P$ and $Q$ satisfy either of the following conditions, any $(P, Q)$-holonomic sequence $f \in \Q^{\N} \setminus \{ 0 \}$ has a stable ultimate sign. 
\begin{enumerate}[(1)]
\item\label{item:|q|<p} $|q_0| < p_0$
\item\label{item:|q|=p} $|q_0| = p_0$ and the two conditions below hold for $s := \sgn q_0 \in \{ 1, -1 \}$:
\begin{itemize}
\item $Q(x) - sP(x) \neq 1$ in $\Q[x]$, 
\item $\begin{cases}
sq_1 - p_1 - s < p_0 & \text{if $d=1$}, \\
sq_1 - p_1 < p_0 & \text{if $d \geq 2$}.
\end{cases}
$
\end{itemize}
\end{enumerate}
\end{theorem}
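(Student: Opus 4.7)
The plan is to reduce, via \cref{thm:main}, to showing that no non-zero rational-valued $(P,Q)$-holonomic sequence $f$ can be a \emph{minimal solution} (i.e., one whose initial value lies on a critical line). First I would verify the classification: since $\deg Q \leq d$, we have $\deg \frac{Q(x)}{P(x)P(x-1)} \leq -d \leq -1$ and $\lim_{x\to\infty} \frac{Q(x)}{P(x)P(x-1)} = 0 > -\frac14$, so $(P,Q)$ is of $\infty$-$\Omega$ loxodromic type (if the leading non-zero coefficient of $Q$ is positive) or of hyperbolic type (otherwise). In both cases \cref{thm:main} gives a single critical line through the origin in $\R^2$, whose points are exactly the initial values of non-zero scalar multiples of the unique-up-to-scaling minimal solution, and stability of the ultimate sign is equivalent to being off this line.

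So suppose $f$ is such a minimal solution, scaled so that $f(n) \in \frac{1}{D}\Z$ for some fixed integer $D$. Under condition \eqref{item:|q|<p}, standard asymptotic theory for second-order linear recurrences (cf.~\cite{Koo07}) gives that $f(n+1)/f(n)$ tends to the limit of the smaller root of $x^2 - P(n)x - Q(n) = 0$, namely $-q_0/p_0$ if $q_0 \neq 0$ and $0$ otherwise; in both cases this limit has absolute value $<1$. Hence $|f(n)|$ decays geometrically, and rationality with bounded denominator forces $f(n) = 0$ for all large $n$; reversibility of the recurrence (which needs only $Q(n) \neq 0$) then yields $f \equiv 0$, contradicting non-vanishing of $f$.

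Under condition \eqref{item:|q|=p} the same ratio instead tends to $-s \in \{-1,+1\}$, a limit of absolute value $1$, so the decay argument fails directly. I would instead introduce $g(n) := f(n+1) + s f(n)$; a direct substitution using $s^2 = 1$ yields
\[
g(n+1) = (P(n) + s)\, g(n) + R(n)\, f(n), \qquad R := Q - sP - 1,
\]
and the hypothesis $Q - sP \neq 1$ gives $R \not\equiv 0$. A finer expansion of the smaller characteristic root gives $f(n+1)/f(n) = -s + \rho(n)$ with $\rho(n) = O(1/n)$ and $|f(n)| = O(n^{c})$, where $c = (s q_1 - p_1)/p_0$ for $d \geq 2$ and $c = (s q_1 - p_1 - s)/p_0$ for $d = 1$ (the extra $-s$ in the $d = 1$ case coming from the $Q^2/P^3$ term in the Laurent expansion of $(P - \sqrt{P^2 + 4Q})/2$, which contributes at order $1/n$ only when $d = 1$). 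The theorem's threshold conditions on $s q_1 - p_1$ are precisely what is needed to give $c < 1$, so $|g(n)| = |\rho(n)\, f(n)| = O(n^{c-1}) \to 0$. Since $g(n) \in \frac{1}{D}\Z$, this forces $g(n) = 0$ for all sufficiently large $n$, so $f(n+1) = -s f(n)$ eventually. Substituting into the original recurrence gives $R(n) f(n) = 0$ for all large $n$; since reversibility rules out $f$ vanishing from some point onward, $R$ has infinitely many integer zeros and is therefore identically zero as a polynomial, contradicting $Q - sP \neq 1$.

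The main technical obstacle is the asymptotic analysis underlying condition \eqref{item:|q|=p}: establishing $\rho(n) = O(1/n)$ together with the polynomial bound $|f(n)| = O(n^c)$ with $c < 1$ for the minimal solution. Because the limiting smaller root has absolute value $1$, this is a degenerate regime for the simplest Perron-type results, and one must carefully expand $(P(n) - \sqrt{P(n)^2 + 4Q(n)})/2$ to order $1/n$ (including the $Q^2/P^3$ contribution when $d = 1$), which is precisely what makes the theorem's threshold condition take two different forms in the two subcases.
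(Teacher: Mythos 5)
Your overall strategy is the same as the paper's: reduce via \cref{thm:main} to the one-dimensional space of sequences with unstable ultimate sign (the minimal solutions), derive a one-term ratio relation for such a sequence with second-order precision, use integrality (bounded denominators under the integer-coefficient recurrence) to force either eventual vanishing (case~(1)) or the eventual relation $f(n+1)=-sf(n)$ (case~(2)), and then contradict $Q-sP\neq 1$. Your endgame in case~(2) via $g(n)=f(n+1)+sf(n)$ is essentially the paper's step $p_0g(n+1)+q_0g(n)\to 0$ in disguise, and your exponents $c$ match the paper's computation.

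However, there is a genuine gap at exactly the point you flag as ``the main technical obstacle'': you assert, citing Kooman, that the ratio $f(n+1)/f(n)$ of the minimal solution equals the smaller characteristic root $\frac{P(n)-\sqrt{P(n)^2+4Q(n)}}{2}$ up to an error that is negligible at order $1/n$, and you never prove it. Expanding the root itself is elementary algebra; the hard content is that the \emph{minimal solution's} ratio (which is the tail of the continued fraction $\Kettenbruch_{k\geq n}\frac{Q(k)}{P(k)}$, not the root) tracks that expansion with a summable error. Note also that ``$\rho(n)=O(1/n)$'' as stated is not enough: to conclude $|f(n)|=O(n^c)$ with the precise exponent $c$ you need $\rho(n)=\frac{c'}{n}+O(n^{-1-\varepsilon})$ with the exact coefficient, i.e.\ control of the error \emph{beyond} the $1/n$ term. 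This is precisely what the paper's \cref{lem:NOW21} establishes (the relation $f(n+1)=-R(n)h(n)f(n)$ with $h(n)\in[1-R(n+1)-3R(n+1)^2,\,1-R(n+1)+3R(n+1)^2]$), proved by a contraction argument on the ratios $B^{(i)}(n+1)/B^{(i+1)}(n)$ of generalized canonical denominators together with $\lim_{\tau\to\infty}f(n+\tau)/B^{(\tau)}(n)=0$. It is also exactly the step where the original proof in \cite{NOW21} (their Lemma~14, equation~(6)) went wrong, so a bare citation to ``standard asymptotic theory'' cannot stand in for it; without supplying this lemma (or verifying that a quotable result really yields ratio $=\lambda_-(n)\bigl(1+O(n^{-2})\bigr)$ in this degenerate regime), your argument for both cases --- and especially the threshold computation in case~(2) --- is incomplete.
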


When $Q(x) - sP(x) \neq 1$ does not hold in the above theorem, the following proposition (with $\lambda = \pm 1$) implies that there exists a rational holonomic sequence with an unstable ultimate sign. 
This proposition is obtained by rephrasing \cite[Proposition~4]{NOW21} and eliminating unnecessary assumptions from it.
We prove this in \cref{sec:recovery_of_NOW21}.

\begin{proposition}\label{prop:unstable}
Let $P \in \R(x)$ have no poles or zeros in $\N$, have ultimate sign $(+)$, and have degree at least 1. Let $\lambda \in \R \setminus \{ 0 \}$. 
Then the $(P, \lambda P + \lambda^2)$-holonomic sequence $\{ (-\lambda)^n \}_{n \in \N}$ has an unstable ultimate sign. 
\end{proposition}

\section{Proof of the Main Results}\label{sec:proofs}

In this section, we prove \cref{thm:main,thm:procedure,thm:problem_reduction}. 
All the proofs of the lemmas in the following \cref{sec:proof_of_thm:main,sec:proof_of_thm:procedure} are postponed to \cref{sec:proof_of_lemmas}. 

\subsection{Proof of \cref{thm:main}} \label{sec:proof_of_thm:main}

Let us first focus on identifying the lengths of the ultimate signs that $(P, Q)$-holonomic sequences can have and get an overview of the proof of \cref{thm:main}. 
\cref{lem:(+),lem:(+-)} below, by types of $(P, Q)$, characterize $(P, Q)$ admitting $(P, Q)$-holonomic sequences with ultimate signs of lengths $1$ and $2$, respectively. Then only lengths $\tau \geq 3$ are left. For each $\tau \geq 3$, we will introduce a special recurrence such that we can decide if $F \in \R^{\N}$ satisfying the recurrence has a (shortest) ultimate sign of length $\tau$ (\cref{lem:single-term-feedback_recurrence}). 
Next, by types of $(P, Q)$, we characterize $(P, Q)$ and $\tau$ that allow all $(P, Q)$-holonomic sequences $f$ to be transformed to $F$ satisfying the special recurrence and having the same ultimate sign as $f$ (\cref{lem:transformation_property}). Finally we show that, for the other $(P, Q)$ and $\tau \geq 3$, no non-zero $(P, Q)$-holonomic sequences have the shortest ultimate sign of length $\tau$ in the proof of \cref{thm:main}~\eqref{item:Q-Omega_elliptic}. 
Note that some lemmas below are superfluous for identifying the lengths of ultimate signs, but required to identify the ultimate signs themselves and how they partition the space of the initial values.

\begin{lemma} \label{lem:(+)}
    Let $P, Q \in \R(x)$ have no zeros or poles in $\N$ and $P$ have the ultimate sign $(+)$. 
    \begin{enumerate}[(1)]
        \item \label{item:condition_of_exist(+)}
        $I_{P, Q}(+) \neq \varnothing$
        $\iff$
        $(P, Q)$ is of loxodromic type or hyperbolic type.
        \item \label{item:(PQ):hyp->+or-}
        If $(P, Q)$ is of hyperbolic type, then $I_{P, Q}(+) \cup I_{P, Q}(-) = \R^2 \setminus \{(0, 0)\}$. 
    \end{enumerate}
\end{lemma}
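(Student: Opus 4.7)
The plan is to split along the $(P,Q)$-classification and to reduce (1) via the sign-flip symmetry $f\leftrightarrow -f$, which sends $I_{P,Q}(+)$ bijectively to $I_{P,Q}(-)$: the two sets are simultaneously empty or non-empty, so for (1) it suffices to produce a single non-zero sequence with eventually constant sign.

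In the loxodromic case both $P$ and $Q(x)/(P(x)P(x-1))$ have ultimate sign $(+)$, hence so does $Q$. I would pick $N$ so large that $P(n),Q(n)>0$ for $n\geq N$; since $(f(0),f(1))\mapsto (f(N),f(N+1))$ is a linear bijection on $\R^2$, one can choose initial values with $f(N),f(N+1)>0$, and induction on the recurrence then yields $f(n)>0$ for all $n\geq N$, giving $(\Leftarrow)$ of (1). In the hyperbolic case I would prove (2) directly and deduce $(\Leftarrow)$ of (1) from it. The engine is the ratio $r(n):=f(n+1)/f(n)$, which where defined obeys $r(n+1)=P(n)+Q(n)/r(n)$. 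Since $\alpha_0\geq -\frac{1}{4}$, the asymptotic characteristic polynomial $t^2-P(n)t-Q(n)$ has real positive roots $\lambda_\pm(n)$, and an attracting-fixed-point analysis---equivalently, the monotonic convergence of $A(n)/B(n)$ in the hyperbolic monotonic convergence theorem \cite[Lemma~3.4]{KKL+21} mentioned in \cref{sec:monotonic_convergence_thm}---gives $r(n)\to\lambda_+(n)$ and hence shows that $\sgn f(n)$ is eventually constant. The delicate critical subcase $\alpha_0=-\frac{1}{4}$ with $(\alpha_1,\alpha_2)\geq(0,-\frac1{16})$ in lex order is handled by the refined Kooman-branch asymptotics that precisely pick out this side of the threshold in \cref{def:classification}.

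For $(\Rightarrow)$ of (1) I would argue by contradiction: if $(P,Q)$ is elliptic and $f(n)>0$ for all large $n$, then $r(n)>0$ eventually and satisfies the same recurrence. In the elliptic regime $\alpha_0\leq -\frac{1}{4}$, so the limit discriminant $1+4\alpha_0$ is non-positive and the map $r\mapsto P(n)+Q(n)/r$ has no attracting positive real fixed point asymptotically. A Pr\"ufer-style rotation-angle argument (equivalently a Kneser-type oscillation criterion for second-order recurrences) then forces $r(n)$ to become non-positive infinitely often, contradicting positivity of $f$. The main obstacle will be the critical boundary $\alpha_0=-\frac{1}{4}$, $\alpha_1=0$, $\alpha_2<-\frac1{16}$, where oscillation just barely occurs; here one needs a careful second-order asymptotic expansion---the discrete analogue of the Kneser critical condition---which is exactly the sharpness that motivates the $(-\frac{1}{4},0,-\frac1{16})$ cut-off in \cref{def:classification}.
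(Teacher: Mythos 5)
Your loxodromic construction and the reduction of part (1) to exhibiting a single eventually sign-constant solution are fine, but the heart of the lemma is missing: the direction ``elliptic type $\Rightarrow I_{P,Q}(+)=\varnothing$'' is never actually argued. The mechanism you state --- the limit discriminant $1+4\alpha_0$ is non-positive, so the Riccati map $r\mapsto P(n)+Q(n)/r$ has no attracting positive fixed point asymptotically, hence oscillation --- cannot distinguish the two sides of the boundary, because the limit discriminant is also zero throughout the critical hyperbolic band $\alpha_0=-\frac14$, $(\alpha_1,\alpha_2)\geq(0,-\frac1{16})$, where the conclusion is the opposite; the whole dichotomy lives in the $1/n$ and $1/n^2$ corrections. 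You acknowledge this and defer it to an unspecified ``Pr\"ufer-style / discrete Kneser'' criterion, but establishing that sharp discrete threshold at $(-\frac14,0,-\frac1{16})$ \emph{is} the content to be proved, and no argument or usable citation is supplied (you also do not cover the elliptic cases with $d\geq1$, where the expansion \eqref{eq:Taylor_ex} does not exist --- easy, but outside your framing). The paper closes exactly this hole elementarily in \cref{lem:quantified(+)}: an eventually positive solution gives a positive sequence $q(n)=f(n+1)/(f(n)P(n-1))$ satisfying \eqref{eq:q(1-q)>Q}; in every elliptic case the right-hand side eventually exceeds $\frac14+C/n^2$ with some $C>\frac1{16}$, and an asymptotic analysis of $q$ (it must decrease to $\frac12$, then writing $q(n)=\frac12+p(n)/n$ forces $\frac\beta2-\beta^2\geq C>\frac1{16}$ for $\beta=\lim p(n)$) yields the contradiction.

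On the hyperbolic side there are two further problems. The claim that $r(n)\to\lambda_+(n)$ for every solution is false --- the minimal solution's ratio tracks the smaller root, as a constant-coefficient example already shows --- although eventual sign-constancy is still the right conclusion; and the critical subcase is again only gestured at (``refined Kooman-branch asymptotics''). Resting part (2) on \cite[Lemma~3.4]{KKL+21} is legitimate in principle (the paper itself notes in \cref{sec:monotonic_convergence_thm} that \cref{thm:main}~\eqref{item:hyperbolic} can be derived from it), but then you should invoke it as a black box covering the whole hyperbolic type rather than via an incorrect fixed-point picture. The paper instead makes this self-contained with the explicit supersolution $q(n)=\frac12+\frac1{4n}+\frac1{4n\log n}$ of \cref{lem:quantified(+)}~\eqref{item:q=1/2+1/4n+1/4nlogn} and the invariance of the condition \eqref{eq:f/f>q} under the recurrence, which simultaneously produces eventually positive solutions (part (1), $\Leftarrow$) and shows that the sign of any non-zero solution changes at most once (part (2)).
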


Similar results to the above lemma appear in, e.g., \cite{KKL+21}.

The following lemma is relatively easy and similar propositions appear in context of continued fractions (e.g., \cite[Theorem~3.12]{LW08}). 

\begin{lemma} \label{lem:(+-)}
Let $P, Q \in \R(x)$ have no zeros or poles in $\N$ and $P$ have the ultimate sign $(+)$. 
\begin{enumerate}[(1)]
\item \label{item:exist(+-)}
$I_{P, Q}(+, -) \neq \varnothing$ $\iff$ $(P, Q)$ is of loxodromic type. 

\item \label{item:pI(+-)_is_closed}
$p(I_{P, Q}(+, -))$ is a closed interval. 

\item \label{item:(PQ):lox->+or-or+-or-+}
If $(P, Q)$ is of loxodromic type, then $I_{P, Q}(+) \cup I_{P, Q}(-) \cup I_{P, Q}(+, -) \cup I_{P, Q}(-, +) = \R^2 \setminus \{ (0, 0) \}$. 
\end{enumerate}
\end{lemma}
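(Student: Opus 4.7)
The plan is to analyze the one-step dynamics of the ratio $r(n) := f(n)/f(n+1) \in \hat{\R}$, which under the recurrence \eqref{eq:(PQ)holonomic} evolves via the M\"obius transformation $T_n(r) = 1/(P(n) + Q(n) r)$. Since $P$ has ultimate sign $(+)$, the sign of $\frac{Q(x)}{P(x)P(x-1)}$ coincides with that of $Q$, so $(P, Q)$ is of loxodromic type precisely when $Q$ has ultimate sign $(+)$. Fix $N$ large enough that $P(n), Q(n) > 0$ for every $n \geq N$. The forward direction of (1) is then immediate: if $f$ has ultimate sign $(+, -)$, then for every large even $n$ the identity $Q(n) f(n) = f(n+2) - P(n) f(n+1)$ has a strictly positive right-hand side (from $f(n+2) > 0$, $f(n+1) < 0$, $P(n) > 0$), forcing $Q(n) > 0$ and hence loxodromic type.

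For the converse of (1) and for (2), observe that, after possibly enlarging $N$, the property that $f$ has ultimate sign $(+, -)$ is equivalent to $r(n) \in J_n$ for every $n \geq N$, where $J_n := (-\infty, -P(n)/Q(n))$. The map $T_n$ restricts to an orientation-reversing homeomorphism from $J_n$ onto $(-\infty, 0)$, with $T_n(-\infty) = 0$ and $T_n(-P(n)/Q(n)) = \infty$. Consequently $T_n^{-1}(\overline{J_{n+1}})$ is a non-empty closed subinterval of $\overline{J_n}$. Iterating yields a nested decreasing family of non-empty closed intervals $E_k := (T_{N+k-1} \circ \cdots \circ T_N)^{-1}(\overline{J_{N+k}})$ inside the compact space $\hat{\R}$, and by compactness the intersection $E_\infty := \bigcap_{k \geq 0} E_k$ is a non-empty closed interval. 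Any $r_0 \in E_\infty$ determines, via the (invertible) recurrence, an initial value $(f(0), f(1))$ yielding ultimate sign $(+, -)$, which proves (1)$\Leftarrow$. Moreover, $p(I_{P, Q}(+, -))$ is the image of $E_\infty$ under the composition of a M\"obius map and the linear bijection taking $(f(N), f(N+1))$ to $(f(0), f(1))$, hence a closed interval on $S^1$, which proves (2).

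For (3), we argue by a dichotomy on the orbit $\{r(n)\}_{n \geq N}$: either $r(n) \in J_n$ for all $n$ beyond some threshold, or $r(n) > 0$ for all sufficiently large $n$. In the first case $f(n)$ and $f(n+1)$ alternate in sign, giving ultimate sign $(+, -)$ or $(-, +)$; in the second case they share a common sign, giving $(+)$ or $(-)$. The dichotomy holds because, if at some step $r(n_0)$ lies outside $J_{n_0}$ (the boundary cases $r(n_0) \in \{-P(n_0)/Q(n_0), 0, \infty\}$ are handled by inspecting one or two further iterates), then $P(n_0) + Q(n_0) r(n_0) > 0$ yields $r(n_0+1) > 0$, and since $P(n), Q(n) > 0$ thereafter, $T_n$ preserves strict positivity forever.

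The main obstacle is the nested intersection argument for (1)$\Leftarrow$ and (2): one must verify that each $E_k$ is non-empty and handle the behaviour at $\infty \in \hat{\R}$ with care. This amounts to the classical convergence argument for loxodromic continued fractions.
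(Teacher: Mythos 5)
Your proposal is correct and takes essentially the same route as the paper: where the paper pulls back the weak alternating-sign conditions at indices $2n$, $2n+1$ to a nested decreasing family of non-empty closed cones $I_n$ of initial values (closed arcs on $S^1$) and intersects them, you pull back the equivalent ratio condition $r(n)\in\overline{J_n}$ through the M\"obius maps $T_n$ to nested closed intervals in $\hat{\R}$ --- the same construction in projective coordinates --- and part (3) rests in both cases on the persistence of a common sign of two consecutive terms once $P(n),Q(n)>0$. The details you flag are routine (each $E_k$ is a homeomorphic preimage, hence non-empty; an orbit touching an endpoint of $\overline{J_n}$ or $\infty$ exits two steps later, so points of $E_\infty$ give strict alternation), and the only loose ends --- distinguishing $(+,-)$ from $(-,+)$ by parity of the sign of $f(N)$ and handling the $2$-to-$1$ passage from directions in $S^1$ to ratios in $\hat{\R}$ in part (2) --- are easily repaired and correspond to what the paper handles implicitly via $I_{P,Q}(-,+)=-I_{P,Q}(+,-)$.
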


Now we introduce the special recurrence mentioned in the first paragraph of this section. 
For a (not necessarily holonomic) sequence $F \in \R^{\N}$, consider a 
\emph{single-term-feedback recurrence} 
    \begin{equation} \label{eq:single-term-feedback_recurrence}
    F(n+\tau) - F(n) = R(n) F(n+1), 
    \end{equation}
where $\tau$ is an integer $\geq 2$ and $R \in \R^{\N}$.
This recurrence expresses the difference between two neighbouring terms in the gap-$\tau$ subsequences $\{ F(n) \}_{n \equiv i \pmod{\tau}}$, $i = 0, \dots, \tau-1$, as a single term in the next subsequence $\{ F(n) \}_{n \equiv i+1 \pmod{\tau}}$ multiplied by the coefficient $R$. 
In the following lemma, we treat the case where $|R(n)|$ rapidly decreases in \eqref{item:R=O} and the case where $|R(n)|$ does not rapidly decrease in \eqref{item:R=Omega}.

\begin{lemma} \label{lem:single-term-feedback_recurrence}
Let $F \in \R^{\N}$ satisfy the single-term-feedback recurrence~\eqref{eq:single-term-feedback_recurrence} for a coefficient $R \in \R^{\N}$ and an integer $\tau \geq 2$. 
\begin{enumerate}[(1)]
\item \label{item:R=O}
(restricted case of \cite[Theorem~6]{Koo91})
Suppose $R(n) = O(n^{-1-\varepsilon})$ for some $\varepsilon > 0$. 
\begin{enumerate}[(\ref{item:R=O}a)]
\item \label{item:F_converge}
Each of the gap-$\tau$ subsequences $\{ F(n) \}_{n \equiv i \pmod{\tau}}$, $i = 0, \dots, \tau-1$, converges. 
\item \label{item:limFneq0} 
If $F \neq 0$, then there is $i \in \{ 0, \dots, \tau-1 \}$
for which $\{ F(n) \}_{n \equiv i \pmod{\tau}}$ does not converge to $0$.  
\end{enumerate}

\item \label{item:R=Omega} 
Suppose that $|R(n)| = \Omega(n^{-1})$ and $R$ has an ultimate sign $(+)$ or $(-)$. If $F$ has an ultimate sign of length $\tau$, then $F$ also has an ultimate sign of length $\leq 2$. 

\item \label{item:zero_convergence_sign}
Suppose that $R$ has an ultimate sign $(q)$, $q \in \{ +, -, 0 \}$. 
Let $i \in \{ 0, \dots, \tau-1 \}$. 
If a subsequence $\{ F(n) \}_{n \equiv i+1 \pmod{\tau}}$ of $F$
has the ultimate sign $(s)$, $s \in \{ +, -, 0 \}$
and 
$\{ F(n) \}_{n \equiv i \pmod{\tau}}$ converges to $0$, then $\{ F(n) \}_{n \equiv i \pmod{\tau}}$ has the ultimate sign $(-qs)$. 
\end{enumerate}
\end{lemma}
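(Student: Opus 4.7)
The plan is to analyse, for each residue $i \in \{0, \dots, \tau-1\}$, the telescoped identity
\[
F(M\tau + i) - F(i) = \sum_{m=0}^{M-1} R(m\tau + i)\, F(m\tau + i + 1),
\]
obtained by iterating the single-term-feedback recurrence, and to read off convergence, summability, and sign information from this series under each of the three hypotheses.

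For (1a) I would first show that $F$ is uniformly bounded: setting $M(n) := \max_{0 \le j < \tau} |F(n+j)|$, the recurrence gives $M(n+1) \le M(n)(1 + |R(n)|)$, and absolute summability of $R$ (which follows from $R(n) = O(n^{-1-\varepsilon})$) yields $\|F\|_\infty < \infty$. Then $\sum_m |R(m\tau + i)\, F(m\tau + i + 1)| \le \|F\|_\infty \sum_m |R(m\tau + i)| < \infty$, so the telescoped sum converges absolutely and each subsequence converges. For (1b) I would argue by contradiction: if every subsequence tends to $0$, summing the telescope from $n_0$ onward gives $|F(n_0)| \le \|F\|_{n_0+1} \sum_{j \ge 0} |R(n_0 + j\tau)|$, and for $n_0 \ge N$ large the tail is at most $\tfrac{1}{2}$, forcing $\|F\|_N \le \tfrac{1}{2} \|F\|_N$, hence $F(n) = 0$ for $n \ge N$. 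A backward induction using $F(n) = F(n+\tau) - R(n)\, F(n+1)$ (valid because $\tau \ge 2$) then propagates $F = 0$ to all $n$, contradicting $F \neq 0$.

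For (2) I would assume WLOG that $R$ has ultimate sign $(+)$ and let $s = (s_0, \dots, s_{\tau-1})$ denote the ultimate sign of $F$. A short preliminary handles zeros: if some $s_i = 0$ then the recurrence forces $s_{i+1} = 0$, so cyclically all $s_j = 0$, whence the backward induction of (1b) gives $F \equiv 0$; otherwise every $s_i \in \{+, -\}$. The difference $F((m+1)\tau+i) - F(m\tau+i)$ then has eventual sign $s_{i+1}$, so $\{F(m\tau + i)\}_m$ is eventually monotonic with direction $s_{i+1}$. Call $i$ a \emph{change} if $s_i \neq s_{i+1}$. At a change the sign $s_i$ and the monotonicity direction $s_{i+1}$ oppose each other, so the subsequence is bounded and converges; then $\sum_m |R(m\tau+i)\, F(m\tau+i+1)| < \infty$, and combining with $|R(n)| \ge c/n$ gives $\sum_m |F(m\tau+i+1)| / m < \infty$, which together with monotonicity of $\{F(m\tau+i+1)\}_m$ forces $F(m\tau+i+1) \to 0$. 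A constant-sign monotonic sequence can approach $0$ only from its own side, pinning down its monotonicity direction $s_{i+2}$ to equal $-s_{i+1}$, so $i+1$ is also a change. Cyclic propagation then shows the set of changes is either empty (all $s_j$ equal, ultimate sign of length $1$) or the whole of $\{0, \dots, \tau-1\}$ (alternating, length $2$ when $\tau$ is even, and impossible when $\tau$ is odd). In every surviving case $F$ has an ultimate sign of length $\le 2$.

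For (3) the difference $F((m+1)\tau+i) - F(m\tau+i) = R(m\tau+i)\, F(m\tau+i+1)$ has eventual sign $qs$. If $qs = 0$, differences vanish eventually and $\{F(m\tau+i)\}_m$ is eventually constant; convergence to $0$ then forces it to be $0$ eventually, matching $(-qs) = (0)$. If $qs \neq 0$, the subsequence is strictly monotonic (not eventually zero, else $s = 0$), converges to $0$, and hence approaches $0$ from the side opposite to its monotonicity direction, giving ultimate sign $-qs$. The main obstacle I anticipate lies in the propagation step of (2): transferring ``change at $i$'' to ``change at $i+1$'' requires packaging together the change-induced convergence of $\{F(m\tau+i)\}_m$, the conversion $\sum |RF| < \infty \Rightarrow \sum |F|/m < \infty$ supplied by $|R| \ge c/n$, and the fact that $\sum |F|/m < \infty$ together with monotonicity forces $|F| \to 0$. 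Without the $\Omega(n^{-1})$ lower bound on $R$ this chain breaks, which is precisely why the hypothesis of (2) is used in an essential way.
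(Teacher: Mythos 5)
Your proposal is correct. Parts (1) and (3) are in substance the paper's argument: the paper proves by induction the bound $|F(n)-F(N')|\le 2S_{N,n}\max_{N\le I<N+\tau}|F(I)|$ with $S_{N,\infty}<\tfrac12$, getting (1a) from the Cauchy property and (1b) by observing that the subsequence through the maximizing index cannot drift by more than $2S_{N,\infty}<1$ times its value, while you get boundedness from $M(n+1)\le M(n)(1+|R(n)|)$ and (1b) by the contradiction ``all limits zero $\Rightarrow \|F\|_N\le\tfrac12\|F\|_N\Rightarrow F$ eventually zero $\Rightarrow F\equiv 0$ by backward induction''; these are the same estimates packaged differently, and your explicit backward-induction step even covers a degenerate case the paper leaves implicit. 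In (2) the organization genuinely differs: the paper first shows the gap-$\tau$ limits $L_0,\dots,L_{\tau-1}$ are either all $0$ or all $\pm\infty$, by propagating ``$L_i\neq 0\Rightarrow$ the differences at residue $i-1$ have constant sign and absolute value $\Omega(n^{-1})\Rightarrow L_{i-1}=\pm\infty$'' backwards, and then reads off a fixed sign multiplier between consecutive residues; you instead propagate ``change at $i\Rightarrow$ change at $i+1$'' forwards, via ``finite limit at residue $i$ $\Rightarrow\sum_m|F(m\tau+i+1)|/m<\infty\Rightarrow$ the monotone residue-$(i+1)$ subsequence tends to $0$'', and conclude by cyclic invariance of the set of changes. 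Both hinge on exactly the same use of $|R(n)|=\Omega(n^{-1})$ (your step is essentially the contrapositive of the paper's), so this is a reshuffling rather than a new idea, though your version makes the combinatorics of the sign pattern (constant or alternating, the latter impossible for odd $\tau$) more transparent. Two small caveats: the ``WLOG $R$ has ultimate sign $(+)$'' is asserted without a reduction, and the obvious substitution $G(n)=(-1)^nF(n)$ only provides one when $\tau$ is even, so you should instead carry the sign $q$ of $R$ through the argument (a ``change'' becomes $s_i\neq q\,s_{i+1}$), exactly as the paper does; and in the all-zero case the detour through $F\equiv 0$ is unnecessary, since $s_0=\dots=s_{\tau-1}=0$ already gives an ultimate sign of length $1$.
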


In the situation of \eqref{item:R=O}, $F$ has an ultimate sign of length $\tau$ as follows. 
If $F=0$, it is obvious. If $F \neq 0$, then 
by (\ref{item:F_converge}) and \eqref{item:limFneq0}, 
there is $i$ such that 
$\{ F(n) \}_{n \equiv i \pmod{\tau}}$ has 
the ultimate sign $(+)$ or $(-)$. 
Then $\{ F(n) \}_{n \equiv i-1 \pmod{\tau}}$ also has 
$(+)$ or $(-)$ if it converges to a non-zero real number. 
It has $(+)$, $(-)$ or $(0)$ even if it converges to zero by \eqref{item:zero_convergence_sign}. 
Thus, by induction, 
every gap-$\tau$ subsequence of $F$ has ultimate sign of length $1$, 
meaning that $F$ has an ultimate sign of length $\tau$. 
On the other hand, in the situation of \eqref{item:R=Omega}, $F$ does not have the shortest ultimate sign of length $\tau \geq 3$. 

Part (\ref{item:R=O}) of Lemma~\ref{lem:single-term-feedback_recurrence}
is known
for a larger class of recurrences \cite[Theorem~6]{Koo91}. 
Our restriction to the single-term-feedback recurrence allows \eqref{item:R=Omega} and \eqref{item:zero_convergence_sign} to hold. 

Now we want to find sequences $T, R \in \R ^\N$
such that for each $(P, Q)$-holonomic sequence $f$, 
the transformed sequence $F(n) := T(n)f(n)$
has the same ultimate sign as $f$ and
satisfies the recurrence \eqref{eq:single-term-feedback_recurrence}. $F$ and $f$ have the same ultimate sign if and only if $T$ has the ultimate sign $(+)$. 
To find the condition on $T$ and $R$ for $F$ to satisfy the recurrence \eqref{eq:single-term-feedback_recurrence},
we use $A^{(\tau)}, B^{(\tau)} \in \R(x)$ below. 

\begin{definition} \label{def:generalized_normal_numerator/denominator}
For $P, Q \in \R(x)$ without zeros or poles in $\N$, there uniquely exist $A^{(\tau)}, B^{(\tau)} \in \R(x)$ such that any $(P, Q)$-holonomic sequence $f$ satisfies the recurrence 
\begin{equation}\label{eq:f=Bf+Af}
f(n+\tau) = B^{(\tau)}(n) f(n+1) + A^{(\tau)}(n) f(n)
\end{equation}
for all $n \in \N$. Let us call $A^{(\tau)}$ and $B^{(\tau)}$ the \emph{generalized $\tau$th canonical numerator} and \emph{denominator} (of $(P, Q)$) respectively. 
\end{definition}

These are generalizations of the notions of 
$\tau$th canonical numerator $A$ and denominator $B$ in \cref{prop:cfrac-holonomic} since $(A^{(\tau)}(0), B^{(\tau)}(0)) = (A(\tau), B(\tau))$. We can generalize \cref{eq:K=A/B} to $\Kettenbruch_{k=n}^{n+\tau} \frac{Q(k)}{P(k)} = \frac{A^{(\tau+2)}(n)}{B^{(\tau+2)}(n)}$. 
\cref{eq:f=Bf+Af} is a generalization of the equation
$f(\tau) = B(\tau)f(1) + A(\tau)f(0)$ 
that $A$ and $B$ satisfy 
for any $(P, Q)$-holonomic sequence $f$.

Let $\tau \geq 2$ and $T, R \in \R^{\N}$. For each $n \in \N$, by \cref{eq:f=Bf+Af}, $F(n) = T(n)f(n)$ satisfy \cref{eq:single-term-feedback_recurrence} for all $(P, Q)$-holonomic sequences $f$ if and only if 
\begin{equation} \label{eq:condition_of_TandR}
T(n+\tau) A^{(\tau)}(n) = T(n), \quad R(n)T(n+1) = B^{(\tau)}(n) T(n+\tau).
\end{equation} 

To allow $T$ to have the ultimate sign $(+)$, 
we want $A^{(\tau)}$ to have $(+)$. 
In addition, to apply \cref{lem:single-term-feedback_recurrence}~\eqref{item:R=O} for $F(n) = T(n)f(n)$, the absolute value of the coefficient $|R(n)|$ has to decrease rapidly. The next lemma shows that there exists $\tau$ satisfying these conditions if and only if $(P, Q)$ is of $O$ type. 

\begin{lemma} \label{lem:transformation_property}
Let $P, Q \in \R(x)$ have no zeros or poles in $\N$, and $P$ have the ultimate sign $(+)$. Let $\tau \geq 2$ be an integer and $A^{(\tau)}$ and $B^{(\tau)}$ be the $\tau$th generalized canonical numerator and denominator, respectively. 
\begin{enumerate}[(1)]
\item \label{item:condition_of_TandR} 
Assume that $T, R \in \R^{\N}$ satisfy \eqref{eq:condition_of_TandR} and $T(n) \neq 0$ for all sufficiently large $n$. Then $\left| \frac{T(n+1)}{T(n)} \right| = \Theta(|A^{(\tau)}(n)|^{-1 / \tau})$. Especially, $|R(n)| = \Theta ( |B^{(\tau)}(n)||A^{(\tau)}(n)|^{-1+1/\tau} )$. 

\item \label{item:A>0andB/A=n^-1-epsilon_iff_Otype_and_tautheta_in_2Z}
The following are equivalent. 
\begin{enumerate}[(\ref{item:A>0andB/A=n^-1-epsilon_iff_Otype_and_tautheta_in_2Z}a)]
\item \label{item:A>0andB/A=n^-1-epsilon}
$A^{(\tau)}$ has the ultimate sign $(+)$ and $|B^{(\tau)}(n)|  A^{(\tau)} (n)^{-1+1/\tau} = O(n^{-1-\varepsilon})$ for some $\varepsilon > 0$. 

\item \label{item:Otype_and_tautheta_in_2Z}
$(P, Q)$ is of $\theta$-$O$ elliptic type and $\tau \theta \in 2 \Z$, or $(P, Q)$ is of $\infty$-$O$ loxodromic type and $\tau \in 2 \Z$. 
\end{enumerate}
\end{enumerate}
\end{lemma}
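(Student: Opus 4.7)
For Part (1), the strategy is to study the ratio $u(n) := T(n+1)/T(n)$ directly. The first equation of \eqref{eq:condition_of_TandR} gives the telescoped identity $\prod_{k=0}^{\tau-1} u(n+k) = T(n+\tau)/T(n) = 1/A^{(\tau)}(n)$, and comparing this identity at $n$ with that at $n+1$ yields $u(n+\tau)/u(n) = A^{(\tau)}(n)/A^{(\tau)}(n+1)$. Since $A^{(\tau)} \in \R(x)$ has some degree $d \in \Z$, one has $A^{(\tau)}(n)/A^{(\tau)}(n+1) = 1 - d/n + O(n^{-2})$. Applying this relation along each arithmetic progression $i, i+\tau, i+2\tau, \dots$ (for $i = 0, \dots, \tau - 1$) and summing logarithms telescopically, one obtains $\log|u(i+M\tau)| = -(d/\tau)\log M + O(1)$ as $M \to \infty$, and hence $|u(n)| = \Theta(n^{-d/\tau}) = \Theta(|A^{(\tau)}(n)|^{-1/\tau})$. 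The estimate for $|R(n)|$ then follows from the second equation of \eqref{eq:condition_of_TandR}: $|R(n)| = |B^{(\tau)}(n)| \cdot |T(n+\tau)/T(n+1)| = |B^{(\tau)}(n)| \prod_{k=1}^{\tau-1} |u(n+k)| = \Theta\bigl(|B^{(\tau)}(n)| \cdot |A^{(\tau)}(n)|^{-1+1/\tau}\bigr)$.

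For Part (2), I would first derive asymptotic formulas for $A^{(\tau)}$ and $B^{(\tau)}$. A convenient starting point is the recursion $A^{(\tau+1)}(n) = B^{(\tau)}(n+1) Q(n)$, $B^{(\tau+1)}(n) = B^{(\tau)}(n+1) P(n) + A^{(\tau)}(n+1)$ with base case $A^{(2)} = Q$, $B^{(2)} = P$; equivalently, $(A^{(\tau)}(n), B^{(\tau)}(n))$ is the top row of the matrix product $M(n+\tau-1) M(n+\tau-2) \cdots M(n)$, where $M(k) = \bigl( \begin{smallmatrix} 0 & 1 \\ Q(k) & P(k) \end{smallmatrix} \bigr)$. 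In the constant-coefficient case this diagonalizes to $B^{(\tau)} = (\alpha^\tau - \beta^\tau)/(\alpha - \beta)$ and $A^{(\tau)} = -\alpha\beta(\alpha^{\tau-1} - \beta^{\tau-1})/(\alpha - \beta)$, where $\alpha$, $\beta$ are the roots of $x^2 - Px - Q$; in the non-constant case these formulas hold to leading order with controlled error terms, by a Poincar\'e--Perron-type analysis in the setting of \cite{Koo91} (already invoked elsewhere in the excerpt).

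With these formulas in hand, condition \eqref{item:A>0andB/A=n^-1-epsilon} translates into the requirement that $\alpha(n)^\tau - \beta(n)^\tau$ exhibit cancellation of order $n^{-1-\varepsilon}$ relative to the natural size $|\alpha(n)|^\tau$, while $A^{(\tau)}(n)$ stays of the natural size and positive. In the elliptic regime, writing $\alpha(n) = r(n) e^{i\theta(n)\pi}$ with $\theta(n) \to \theta$, the numerator of $B^{(\tau)}$ is proportional to $\sin(\tau\theta(n)\pi)$, which vanishes in the limit iff $\tau\theta \in \Z$; reaching rate $O(n^{-1-\varepsilon})$ additionally requires $|\theta(n) - \theta| = O(n^{-1-\varepsilon})$, which is exactly the $\theta$-$O$ elliptic condition (the vanishing of $\alpha_1$ in \eqref{eq:Taylor_ex} at the rational-$\theta$ limit). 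An analogous expansion of $A^{(\tau)}$ gives $A^{(\tau)} \sim r^\tau (-1)^{\tau\theta}$ when $\tau\theta \in \Z$, so positivity sharpens the requirement to $\tau\theta \in 2\Z$. In the loxodromic regime with $Q(x)/(P(x)P(x-1)) \to +\infty$, one has $\alpha(n) \approx \sqrt{Q(n)} \approx -\beta(n)$, and cancellation in $\alpha^\tau - \beta^\tau$ occurs precisely when $\tau$ is even, giving the $\infty$-$O$ loxodromic branch. Conversely, in every other type the cancellation fails: hyperbolic and $\infty$-$\Omega$ loxodromic both give $|\alpha| \neq |\beta|$ asymptotically, so $\alpha^\tau - \beta^\tau$ is of the same order as $\alpha^\tau$; and $\Q$-$\Omega$ elliptic has either irrational limit $\theta$ (no $\tau$ with $\tau\theta \in \Z$) or $|\theta(n) - \theta| = \Omega(1/n)$, slower than the required cancellation rate.

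The main obstacle will be making the non-constant asymptotics of $A^{(\tau)}$ and $B^{(\tau)}$ rigorous to the precision $O(n^{-1-\varepsilon})$ demanded by \eqref{item:A>0andB/A=n^-1-epsilon}, and in particular relating the Taylor coefficients $\alpha_0$, $\alpha_1$, $\alpha_2$ in \eqref{eq:Taylor_ex} cleanly to $|\theta(n) - \theta|$. This is precisely the computation that underlies the $\theta$-$O$ versus $\Q$-$\Omega$ elliptic distinction in \cref{def:classification}, so the key technical work lies in carefully propagating those coefficients through the Poincar\'e--Perron expansion of the matrix product.
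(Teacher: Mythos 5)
Part (1) of your proposal is correct and is essentially the paper's own argument: you telescope the first relation of \eqref{eq:condition_of_TandR} along residue classes mod $\tau$ and use that $A^{(\tau)}$ is a rational function so the consecutive correction factors are $1+O(k^{-2})$; the paper does the same with the product written in terms of $|A^{(\tau)}(k)|$.

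Part (2), however, has a genuine gap. Your plan rests on ``Poincar\'e--Perron-type'' asymptotics for $A^{(\tau)}$ and $B^{(\tau)}$ ``to leading order with controlled error terms'', but the whole content of the equivalence lives below leading order: in exactly the cases of interest the leading term of $B^{(\tau)}$ cancels, and one must know the \emph{size and sign of the subleading term} to separate $O(n^{-1-\varepsilon})$ from $\Omega(n^{-1})$. You acknowledge that this is ``the key technical work'', so the core of the proof is missing rather than carried out. Moreover, the case analysis you sketch for the converse direction is wrong in two places: $\infty$-$\Omega$ loxodromic type includes $\deg\frac{Q(x)}{P(x)P(x-1)}\in\{1,2\}$, where $\frac{Q}{PP}\to+\infty$ and hence $|\beta(n)/\alpha(n)|\to 1$, so for even $\tau$ the cancellation in $\alpha^\tau-\beta^\tau$ \emph{does} occur -- it is merely of order $n^{-d/2}=\Omega(n^{-1})$, which is too slow, a quantitative point your ``same order as $\alpha^\tau$'' claim does not capture; and hyperbolic type includes the parabolic boundary $\alpha_0=-\tfrac14$, where $\alpha(n)\sim\beta(n)$, so ``$|\alpha|\neq|\beta|$ asymptotically'' is false there (the conclusion holds instead because the limit $b_\tau=\tau 2^{1-\tau}\neq 0$ and because $A^{(\tau)}$ is eventually negative, a sign obstruction you never track outside the elliptic case). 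Finally, note that the heavy machinery is unnecessary: since $\tau$ is fixed, $A^{(\tau)}$ and $B^{(\tau)}$ are explicit rational functions, and the paper simply computes their degrees and ultimate signs by a finite induction (\cref{lem:generalized_canonical_numerator/denominator} gives $A^{(\tau)}(x)=Q(x)B^{(\tau-1)}(x+1)$ and the recurrence \eqref{eq:B}; \cref{lem:B^i}, after normalizing $P=1$, gives in the degenerate case $\tau\theta\in\Z$ the refined expansion of $B^{(\tau)}(x)$ in terms of $\varepsilon(x)=\frac{Q(x)}{P(x)P(x-1)}+\frac1{4\cos^2\theta\pi}$, whose sign is governed by whether $\frac{Q}{PP}$ eventually increases, decreases or is constant). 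To repair your proof you would need to supply precisely this second-order information in every type, including the $d\in\{1,2\}$ and parabolic subcases.
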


Now we are ready to show \cref{thm:main}. 

\begin{proof}[Proof of \cref{thm:main}~\eqref{item:loxodromic_nearly_infty} and \eqref{item:loxodromic_far_from_infty}]
By \cref{lem:(+-)}~\eqref{item:(PQ):lox->+or-or+-or-+}, it remains to prove that $p(I_{P, Q}(+, -))$ has width if $(P, Q)$ is of $\infty$-$O$ loxodromic type and does not if $(P, Q)$ is of $\infty$-$\Omega$ loxodromic type. In other words, we should prove the existence of a $(P, Q)$-holonomic sequence with the stable ultimate sign $(+, -)$ in the former case and the non-existence in the latter case. 

Define $T, R \in \R^{\N}$ as they satisfy $T(n), R(n) > 0$ and the relation \eqref{eq:condition_of_TandR} for $\tau = 2$ for all sufficiently large $n$. (Note that $A^{(2)} = Q$ and $B^{(2)} = P$.) Then, for all $(P, Q)$-holonomic sequences $f$ and all sufficiently large $n$, the transformed sequences $F(n) := T(n)f(n)$ satisfy the single-term-feedback recurrence \eqref{eq:single-term-feedback_recurrence} for $\tau = 2$, i.e.,
\begin{equation} \label{eq:single-term-feedback_recurrence_with_tau=2}
F(n+2) - F(n) = R(n) F(n+1).
\end{equation}
Since $A^{(2)}(n) = Q(n) > 0$ for all sufficiently large $n$, \cref{lem:transformation_property} implies $R(n) = O(n^{-1-\varepsilon})$ for some $\varepsilon > 0$ if $(P, Q)$ is of $\infty$-$O$ loxodromic type and $R(n) = \Omega(n^{-1})$ if $(P, Q)$ is of $\infty$-$\Omega$ loxodromic type. 

If $(P, Q)$ is of $\infty$-$O$ loxodromic type and so $R(n) = O(n^{-1-\varepsilon})$, we can define a linear map $L$ that maps a $(P, Q)$-holonomic sequence $f$ to 
\[
L(f) := \left( \lim\limits_{\substack{n \equiv 0 \pmod{2}, \\ n \to \infty}} T(n)f(n), \quad \lim\limits_{\substack{n \equiv 1 \pmod{2}, \\ n \to \infty}} T(n)f(n) \right) \in \R^2
\]
by \cref{lem:single-term-feedback_recurrence}~\eqref{item:F_converge}. 
By \cref{lem:single-term-feedback_recurrence}~\eqref{item:limFneq0}, $L$ is injective. Since the domain and range of $L$ are both two-dimensional, $L$ is bijective. Hence, for example, $L^{-1}(1, -1)$ is a $(P, Q)$-holonomic sequence that has the stable ultimate sign $(+, -)$. 

If $(P, Q)$ is of $\infty$-$\Omega$ loxodromic type, take a $(P, Q)$-holonomic sequence $f$ with the ultimate sign $(+, -)$. 
Let us show that this is unstable. It suffices to show that $T(n)f(n) = O(1)$ and $\lim\limits_{n \to \infty}T(n)g(n) = \infty$ where $g$ is a $(P, Q)$-holonomic sequence with the ultimate sign $(+)$ (because it follows that, for any $\delta > 0$, the perturbations $f + \delta g$ of $f$ have the ultimate sign $(+)$). For all sufficiently large $n$, since $R(n) > 0$ and $F(n) = T(n)f(n)$ satisfies \cref{eq:single-term-feedback_recurrence_with_tau=2}, $F(2n) \ (> 0)$ is monotonically decreasing and $F(2n+1) \ (< 0)$ is monotonically increasing. 
So $F(n) = O(1)$. On the other hand, $F'(n) := T(n)g(n) \ (>0)$, a sequence satisfying the same recurrence, eventually increasing. Especially $F'(n) = \Omega(1)$. Since $R(n) = \Omega(n^{-1})$, we have $F'(n+2) - F'(n) = \Omega(n^{-1})$. 
Thus $\lim\limits_{n \to \infty} F'(n) = \infty$. 
\end{proof}

\begin{proof}[Proof of \cref{thm:main}~\eqref{item:hyperbolic}]
$I_{P, Q}(+)$, $I_{P, Q}(-)$ are both connected and $I_{P, Q}(+) = -I_{P, Q}(-)$. The statement follows from this and \cref{lem:(+)}~\eqref{item:(PQ):hyp->+or-}. 
\end{proof}

\begin{proof}[Proof of \cref{thm:main}~\eqref{item:Q-Omega_elliptic}]
Suppose, for a contradiction, that a non-zero $(P, Q)$-holonomic sequence $f$ has an ultimate sign $(s_0, \dots, s_{\tau-1})$. 

Let $\tau \geq 3$ first. Let $A^{(\tau)}$ and $B^{(\tau)}$ be the generalized $\tau$th canonical numerator and denominator. It follows from \cref{lem:transformation_property}~\eqref{item:A>0andB/A=n^-1-epsilon_iff_Otype_and_tautheta_in_2Z} that $A^{(\tau)}$ has the ultimate sign $(-)$ or $(0)$, or that $A^{(\tau)}$ has $(+)$ and $|B^{(\tau)}(n)|  A^{(\tau)} (n)^{-1+1/\tau} = \Omega(n^{-1})$. 
Let us first consider the former case. 
Let $(b) \ (b \in \{+, -, 0 \})$ be the ultimate sign of $B^{(\tau)}$. 
Comparing the signs of the three terms in \cref{eq:f=Bf+Af}, we have $s_i = b s_{i+1}$ for all $i = 0, \dots, \tau-1$, where $s_{\tau} := s_0$, and so $f$ has an ultimate sign of length $\leq 2$. 
Next, let us consider the latter case. 
We can choose $T, R \in \R^{\N}$ satisfying $T(n) > 0$ and the relation~\eqref{eq:condition_of_TandR} for all sufficiently large $n$. Then we have $|R(n)| = \Omega(n^{-1})$. 
The transformed sequence $F(n) := T(n)f(n)$ satisfies the recurrence~\eqref{eq:single-term-feedback_recurrence} for all sufficiently large $n$. 
It follows from \cref{lem:single-term-feedback_recurrence}~\eqref{item:R=Omega} that $F$ has an ultimate sign of length $\leq 2$, and so does $f$. 

Now it remains to consider the case $\tau = 1, 2$. By \cref{lem:(+)}~\eqref{item:condition_of_exist(+)} and \cref{lem:(+-)}~\eqref{item:exist(+-)}, $f$ does not have ultimate signs of length $1$ or $2$. 
\end{proof}

It remains to show \eqref{item:k/r-O_elliptic}. Let $(P, Q)$ be of $\theta$-$O$ elliptic type. As already mentioned, for $\tau$ such that $\tau \theta \in 2 \Z$, all $(P, Q)$-holonomic sequences $f$ have ultimate signs of length $\tau$. Now we need to determine which ultimate signs (of length $\tau$) $f$ can have. This will be derived from the following lemma. 

\begin{lemma} \label{lem:Tg^j_converge}
Take $(P, Q)$ as in \cref{lem:transformation_property} and assume that it is of $\frac{k}r$-$O$ elliptic type. 
\begin{enumerate}[(1)]
\item \label{item:sgnR=sgnq}
The generalized $2r$th canonical denominator $B^{(2r)}$ has the ultimate sign $(+)$, $(-)$ and $(0)$ if $\frac{Q(x)}{P(x)P(x-1)}$ is eventually increasing, if it is eventually decreasing and if it is constant, respectively. 

\item \label{item:Tf^j}
By \cref{lem:transformation_property}~\eqref{item:A>0andB/A=n^-1-epsilon_iff_Otype_and_tautheta_in_2Z}, we can choose $T \in \R^{\N}$ such that $T(n) > 0$ and the relation~\eqref{eq:condition_of_TandR} for $\tau = 2r$ hold for all sufficiently large $n$. Then, for each $j = 0, \dots, \tau-1$, there exists a $(P, Q)$-holonomic $f^{(j)}$ such that for each $i \in \{0, \dots, \tau-1\}$, the $i$th subsequence $\{ T(n)f^{(j)}(n) \}_{n \equiv i \pmod{2r}}$ converges to a real number of sign $\sgn \sin \frac{j - ik}r \pi$.
\end{enumerate}
\end{lemma}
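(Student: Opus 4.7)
For Part~(1), I would first handle the case where $Q/(PP')$ is identically equal to $R_0 := -1/(4\cos^2(k\pi/r))$. Setting $f(n) = g(n)h(n)$ with $h(n+2)/h(n+1) = P(n)/(2\cos(k\pi/r))$ reduces the recurrence to the constant-coefficient form $g(n+2) = 2\cos(k\pi/r)\,g(n+1) - g(n)$, whose solutions are $2r$-periodic (as $2r \cdot k/r = 2k \in 2\Z$). Thus $f(n+2r) = (h(n+2r)/h(n))\,f(n)$, giving $B^{(2r)} \equiv 0$. For the general case, write $\varepsilon(n) := Q(n)/(P(n)P(n-1)) - R_0$; then $\varepsilon(n) \to 0$, and $\varepsilon$ is eventually of a definite sign determined by the monotonicity direction of $Q/(PP')$. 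Expand the transfer matrix $M(k)$ around its constant-ratio form and compute the $(1,2)$-entry of $\prod_{k=n}^{n+2r-1} M(k)$ to first order in $\varepsilon$; using the explicit $\sin(jk\pi/r)$-form of powers of the constant-case matrix, this correction is, to leading order, a positive multiple of $-\sum_{j=0}^{2r-1} c_j\,\varepsilon(n+j)$ with explicit positive $c_j$, yielding the claimed sign correspondence.

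For Part~(2), by \cref{lem:transformation_property}~\eqref{item:A>0andB/A=n^-1-epsilon_iff_Otype_and_tautheta_in_2Z}, $A^{(2r)}$ has ultimate sign $(+)$ and $|B^{(2r)}(n)|\,A^{(2r)}(n)^{-1+1/(2r)} = O(n^{-1-\varepsilon})$. Pick $T > 0$ satisfying \eqref{eq:condition_of_TandR} for $\tau = 2r$, tuning the initial values $T(0), \ldots, T(2r-1)$ so that, writing $T(n) = U(n)\big/\prod_{k<n}\sqrt{-Q(k)}$ (valid since $Q$ is eventually negative), the slowly-varying factor $U$ satisfies $U(n+2)/U(n+1) \to 1$; this is possible because $A^{(2r)}(n)$ and $\prod_{k}\sqrt{-Q(k)}$ agree to leading order. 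A direct computation then gives $\tilde P(n) := P(n)T(n+2)/T(n+1) \to 2\cos(k\pi/r)$ and $\tilde Q(n) := Q(n)T(n+2)/T(n) \to -1$. By \cref{lem:single-term-feedback_recurrence}~\eqref{item:F_converge} and \eqref{item:limFneq0}, the map $L(f) := (\lim_{n \equiv i \pmod{2r}} T(n)f(n))_{i=0}^{2r-1}$ is well-defined and injective on the 2-dimensional space of $(P,Q)$-holonomic sequences, so $\operatorname{Im}(L)$ is $2$-dimensional. Taking $n \to \infty$ along $n \equiv i$ in $F(n+2) = \tilde P(n)F(n+1) + \tilde Q(n)F(n)$ with $F = Tf$ yields the periodic recurrence $v_{i+2} = 2\cos(k\pi/r)\,v_{i+1} - v_i$ on indices modulo $2r$, whose solution space is $V := \operatorname{span}\{(\cos(ik\pi/r))_i,\,(\sin(ik\pi/r))_i\}$, also 2-dimensional, so $\operatorname{Im}(L) = V$. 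Using $\sin\tfrac{(j-ik)\pi}{r} = \sin\tfrac{j\pi}{r}\cos\tfrac{ik\pi}{r} - \cos\tfrac{j\pi}{r}\sin\tfrac{ik\pi}{r}$, the vector $(\sin\tfrac{(j-ik)\pi}{r})_{i=0}^{2r-1}$ lies in $V$, so $f^{(j)} := L^{-1}\bigl((\sin\tfrac{(j-ik)\pi}{r})_i\bigr)$ has the required signed limits for each $j$.

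The main obstacle will be twofold: carrying out the trigonometric first-order sign calculation in Part~(1), which requires verifying that all the coefficients $c_j$ are positive after the trigonometric sums are collected; and, in Part~(2), choosing the initial values of $T$ so that $U$ satisfies both $U(n+2)/U(n+1) \to 1$ and the exact constraint \eqref{eq:condition_of_TandR}, which hinges on a careful comparison of the asymptotic expansion of $A^{(2r)}(n)$ with $\prod_{k}\sqrt{-Q(k)}$.
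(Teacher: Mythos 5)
Your outline is workable in spirit, but as written it has two concrete gaps, and both sit exactly where the paper's real work is done. For Part~(1), your perturbation in $\varepsilon(n) := \frac{Q(n)}{P(n)P(n-1)} - R_0$ presupposes a finite limit $R_0 = -\frac1{4\cos^2(k\pi/r)}$; but ``$\frac{k}r$-$O$ elliptic'' includes the $\frac12$-$O$ elliptic type, where $\frac{Q(x)}{P(x)P(x-1)} \to -\infty$, so there is no finite $R_0$ and no small $\varepsilon$, and your scheme simply does not apply (in that case one must argue separately, e.g.\ that $B^{(4)}(x) = P(x)\bigl(P(x+1)P(x+2)+Q(x+2)\bigr)+Q(x+1)P(x+2)$ is eventually negative because $-Q$ dominates $P(x)P(x-1)$). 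Moreover, in the finite-$R_0$ case the crux is the sign of the first-order coefficient: you only need $\sum_j c_j > 0$ together with $\varepsilon(n+j)/\varepsilon(n)\to 1$ (rationality of $\varepsilon$), not positivity of every $c_j$, but either way this is precisely the computation you defer. In the paper it is the content of \cref{lem:B^i}, proved by induction on $i$, which gives $B^{(2r)}(x) = -C\,\varepsilon(x) + O\bigl(x^{-1}\varepsilon(x)\bigr)$ with $C>0$, and Part~(1) is then immediate (using that $\sgn\varepsilon$ is opposite to the direction of monotonicity).

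For Part~(2), the constraint \eqref{eq:condition_of_TandR} fixes $T$ on each residue class mod $2r$ up to one positive constant, so to also have $P(n)\frac{T(n+2)}{T(n+1)} \to 2\cos\frac{k\pi}r$ and $Q(n)\frac{T(n+2)}{T(n)} \to -1$ you must prove that the cross-class ratio limits exist and that the $2r$ tuning conditions are simultaneously satisfiable; consistency reduces to an identity of the form $\lim_n A^{(2r)}(n)/P(n)^{2r} = (2\cos\frac{k\pi}r)^{-2r}$ (with the appropriate analogue when $Q/(PP(x-1))\to-\infty$), which again needs the asymptotics of $B^{(2r-1)}$ via \cref{lem:generalized_canonical_numerator/denominator} --- i.e.\ the substance of \cref{lem:B^i} once more. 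So the ``obstacle'' you flag is not a routine check; it is the proof. If it is carried out, your route is genuinely different from, and in one respect stronger than, the paper's: you identify the image of the limit map $L$ as the span of the discrete sinusoids $(\cos\frac{ik\pi}r)_i$, $(\sin\frac{ik\pi}r)_i$ and invert, obtaining exact limit values (and it is enough to do this for your specially tuned $T$, since any two admissible $T$ agree asymptotically up to positive class-wise constants). The paper instead takes $f^{(j)}$ in the kernel of the limit functional along one residue class and propagates only the \emph{signs} of the remaining limits through $T(n+i)f(n+i) = \frac{T(n+i)}{T(n+1)}B^{(i)}(n)\,T(n+1)f(n+1) + \frac{T(n+i)}{T(n)}A^{(i)}(n)\,T(n)f(n)$, using the degree and sign information of $A^{(i)}$, $B^{(i)}$ from \cref{lem:B^i} for all $i$, and never needs an exactly normalized $T$. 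As submitted, both halves of your proposal postpone exactly those asymptotic computations and omit the $\frac12$-$O$ case, so it is not yet a complete proof.
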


\begin{proof}[Proof of \cref{thm:main}~\eqref{item:k/r-O_elliptic}]
Take $T$ and $f^{(0)}, \dots, f^{(2r-1)}$ as in \cref{lem:Tg^j_converge}~\eqref{item:Tf^j}. Let $f^{(2r)} := f^{(0)}$. $(P, Q)$-holonomic sequences of the form $f = a f^{(j)} + b f^{(j+1)} \ (a, b > 0)$ have the ultimate sign $s_j$ since each $\{ T(n)f(n) \}_{n \equiv i \pmod{2r}}$, $i=0, \dots, 2r-1$, converges to a real number of sign $\sgn \sin \frac{j - ik + 0.5}r \pi$. Then we have $\{ \text{initial values of } a f^{(j)} +  b f^{(j+1)}  \mid a, b > 0 \} \subseteq I_{P, Q}(s_{j}) $. It remains to prove that $f^{(j)}$ has the ultimate sign $s_{j}$, $s_{j-1}$ and $t_j$ if $\frac{Q(x)}{P(x)P(x-1)}$ is eventually increasing, if it is eventually decreasing and if it is constant, respectively. 

For $i, j \in \{ 0, \dots, 2r-1 \}$ and $q \in \{ 0, \pm 1\}$, let $u_{i, j, q} := \sgn \sin \frac{j-ik+q/2}r \pi$. Then what we want to prove is that $f^{(j)}$ has the ultimate sign $(u_{i, j, q})_{i=0, \dots, 2r-1}$, where $(\sgn q)$ is the ultimate sign of $B^{(2r)}$ in \cref{lem:Tg^j_converge}~\eqref{item:sgnR=sgnq}. We will show that the subsequence $\{ T(n)f^{(j)}(n) \}_{n \equiv i \pmod{2r}}$ has the ultimate sign $u_{i, j, q}$ for each $i$. 

If $j - ik \not\equiv 0 \pmod{r}$, then this subsequence converges to a real number of sign $u_{i, j, 0} \ (\neq 0)$. Therefore it has the ultimate sign $(u_{i, j, 0}) = (u_{i, j, q})$. If $j - ik \equiv 0 \pmod{r}$, then this subsequence converges to $0$. Define $R \in \R^{\N}$ by the relation~\eqref{eq:condition_of_TandR}. $R$ has the ultimate sign $(\sgn q)$ and $F(n) = T(n)f^{(j)}(n)$ satisfies \eqref{eq:single-term-feedback_recurrence}. It follows from \cref{lem:single-term-feedback_recurrence}~\eqref{item:zero_convergence_sign} that this subsequence has the ultimate sign $(-\sgn q u_{i+1, j, 0}) = (\sgn q (-1)^{\frac{j-ik}r} ) = (u_{i, j, q})$. 
\end{proof}

\subsection{Proof of \cref{thm:procedure,thm:problem_reduction}} \label{sec:proof_of_thm:procedure}

\cref{thm:procedure,thm:problem_reduction} are algorithmic claims stating that
the ultimate signs can be partially computed in each sense. 
We could prove them by analyzing the proof of Theorem~\ref{thm:main} quantitatively. 
But instead of carrying out such analysis for each case of Theorem~\ref{thm:main} separately, 
we choose to do so just for the hyperbolic type (Lemma~\ref{lem:quantified(+)} below),
and argue that all other types (having ultimate signs) reduce to it
in the sense of the following \cref{cor:hyperbolic subsequence}.

From the original recurrence \eqref{eq:(PQ)holonomic}, 
we can obtain, for each positive integer $\tau$, a ``gap-$\tau$ recurrence'' 
\begin{equation} \label{eq:f2tau=Pftau+Qf}
f(n+2\tau) = P_{\tau}(n) f(n+\tau) + Q_{\tau}(n) f(n), 
\end{equation}
where $P _\tau$ and $Q _\tau$ are rational functions.
Specifically, they can be written as
\begin{align} \label{eq:Ptau_and_Qtau}
P_{\tau} & = \frac{B^{(2\tau)}}{B^{(\tau)}}, 
&
Q_{\tau} & = A^{(2\tau)} - \frac{B^{(2\tau)}}{B^{(\tau)}} A^{(\tau)}
\end{align}
using the generalized canonical numerators $A ^{(0)}$, $A ^{(1)}$, $\dots$ and denominators $B ^{(0)}$, $B ^{(1)}$, $\dots$ of $(P, Q)$
(see Definition~\ref{def:generalized_normal_numerator/denominator}),
assuming that $B ^{(\tau)}$ is non-zero. 
(Note that if $B ^{(\tau)} = 0$, we have $f (n + \tau) = A ^{(\tau)} (n) f (n)$,
in which case the ultimate sign of $f$ can be found easily.)
Thus,
the subsequence $\{ f(\tau n + N ) \}_{n \in \N}$ of $f$, 
for any number $N \in \N$ greater than all zeros of $B ^{(\tau)}$, 
is the $(P_{\tau}(\tau x + N), Q_{\tau}(\tau x + N))$-holonomic sequence 
with initial value $(f(N), f(N + \tau)) $.
The following corollary to Theorem~\ref{thm:main} says that
with a right choice of $\tau$,
this $(P_{\tau}(\tau x + N), Q_{\tau}(\tau x + N))$ is of hyperbolic type, 
unless $(P, Q)$ is of $\Q$-$\Omega$ elliptic type. 

\begin{corollary}
\label{cor:hyperbolic subsequence}
Suppose that $P$, $Q \in \R (x)$ have no zeros or poles in $\N$.
Let $A^{(0)}$, $A^{(1)}$, $\dots$ and $B^{(0)}$, $B^{(1)}$, $\dots$ be the generalized canonical numerators and denominators, respectively. 
\begin{enumerate}[(1)]
\item \label{item:lox_or_k/r-O_elliptic->hyepr} 
Suppose that 
$(P, Q)$ is either of loxodromic type
or of $\frac k r$-$O$ elliptic type for some coprime positive integers $r$ and $k$.
Let $\tau = 2$ in the former case,
and $\tau = 2 r$ in the latter case. 
Suppose that $B ^{(\tau)}$ and $B ^{(2 \tau)}$ are non-zero. 
Then 
$P _\tau$ and $Q _\tau$ defined by \eqref{eq:Ptau_and_Qtau} are non-zero, 
and $(P _\tau (\tau x + N), Q _\tau (\tau x + N))$ is of hyperbolic type for all $N \in \N$.
\item \label{item:Q-Omega_elliptic->Q-Omega_elliptic} 
Suppose that
$(P, Q)$ is of $\Q$-$\Omega$ elliptic type.
Then $B ^{(\tau)}$ is non-zero, $P _\tau$ and $Q _\tau$ defined by \eqref{eq:Ptau_and_Qtau} are also non-zero,
and 
$(P _\tau (\tau x + N), Q _\tau (\tau x + N))$ is of $\Q$-$\Omega$ elliptic type for all $N \in \N$ and $\tau \geq 1$.
\end{enumerate}
\end{corollary}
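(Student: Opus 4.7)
My strategy is to apply Theorem~\ref{thm:main} twice: once to $(P, Q)$ to constrain the ultimate signs of its holonomic sequences, then contrapositively to $(P_\tau(\tau x + N), Q_\tau(\tau x + N))$ to deduce its type from the ultimate signs of its holonomic sequences. The bridge is that for any $(P, Q)$-holonomic $f$, the subsequence $\{ f(\tau n + N) \}_{n \in \N}$ is $(P_\tau(\tau x + N), Q_\tau(\tau x + N))$-holonomic with initial value $(f(N), f(N + \tau))$, and this initial value ranges over all of $\R^2$ as the initial value of $f$ does, provided $N$ is large enough that $B^{(\tau)}(N) \neq 0$. Since the type of $(P_\tau(\tau x + N), Q_\tau(\tau x + N))$ is shift-invariant in $N$ (depending only on asymptotic behaviour), it suffices to conclude the claim for one such $N$.

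Before invoking this, I verify the non-vanishing of $P_\tau$ and $Q_\tau$ (and, in part~(2), of $B^{(\tau)}$ and $B^{(2\tau)}$ as well). Viewing $\{ A^{(m)}(n) \}_{m \in \N}$ and $\{ B^{(m)}(n) \}_{m \in \N}$ (with $n$ fixed) as the two canonical solutions of the $(P(\cdot + n), Q(\cdot + n))$-recurrence in the upper index $m$, their Casoratian is $(-1)^m \prod_{k=0}^{m-1} Q(n + k)$, and the standard two-index extension of the Casoratian relation yields
\[
A^{(2\tau)}(n) B^{(\tau)}(n) - B^{(2\tau)}(n) A^{(\tau)}(n) = (-1)^{\tau + 1} B^{(\tau)}(n + \tau) \prod_{k=0}^{\tau - 1} Q(n + k).
\]
Since $Q$ has no zeros in $\N$, this forces $Q_\tau \not\equiv 0$ whenever $B^{(\tau)} \not\equiv 0$. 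In part~(2), if instead $B^{(m)} \equiv 0$ for some $m \in \{ \tau, 2\tau \}$, then $f(n + m) = A^{(m)}(n) f(n)$ becomes a first-order relation with gap $m$, so each gap-$m$ subsequence of any $(P, Q)$-holonomic $f$ has an ultimate sign of length at most $2$, making $f$ itself admit an ultimate sign and contradicting Theorem~\ref{thm:main}\eqref{item:Q-Omega_elliptic}.

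For part~(1), the hypothesis together with Theorem~\ref{thm:main}\eqref{item:loxodromic_nearly_infty}--\eqref{item:k/r-O_elliptic} ensures that every $(P, Q)$-holonomic sequence has an ultimate sign of length dividing $\tau$. Its gap-$\tau$ subsequences therefore have ultimate signs of length~$1$, and by the surjectivity noted above every $(P_\tau(\tau x + N), Q_\tau(\tau x + N))$-holonomic sequence has an ultimate sign of length~$1$. Among the five parts of Theorem~\ref{thm:main}, only the hyperbolic case \eqref{item:hyperbolic} is consistent with this: loxodromic types produce $(+, -)$ or $(-, +)$ of length~$2$ for some initial values by Lemma~\ref{lem:(+-)}\eqref{item:exist(+-)}, $\tfrac{k'}{r'}$-$O$ elliptic types yield length $2 r' \geq 4$, and $\Q$-$\Omega$ elliptic types yield non-zero sequences without ultimate signs.

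For part~(2), suppose toward contradiction that $(P_\tau(\tau x + N), Q_\tau(\tau x + N))$ is not $\Q$-$\Omega$ elliptic. Then Theorem~\ref{thm:main} implies that every $(P_\tau(\tau x + N), Q_\tau(\tau x + N))$-holonomic sequence admits an ultimate sign. Pick any non-zero $(P, Q)$-holonomic $f$ and examine its $\tau$ gap-$\tau$ subsequences $\{ f(\tau n + N + j) \}_n$ for $j = 0, \dots, \tau - 1$; each is $(P_\tau(\tau x + N + j), Q_\tau(\tau x + N + j))$-holonomic, and by shift-invariance of the type all $\tau$ pairs share the same non-$\Q$-$\Omega$-elliptic type. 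Hence each subsequence has an ultimate sign, and splicing them together endows $f$ with an ultimate sign of length $\tau \cdot \mathrm{lcm}$ of the subsequence lengths, contradicting Theorem~\ref{thm:main}\eqref{item:Q-Omega_elliptic}. The main obstacle is to state and prove the displayed Casoratian identity cleanly; once that is in place the rest is a routine case-check against Theorem~\ref{thm:main}.
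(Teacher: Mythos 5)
Your proposal is correct and follows essentially the same route as the paper: identify gap-$\tau$ subsequences as $(P_\tau(\tau x+N),Q_\tau(\tau x+N))$-holonomic, use $B^{(\tau)}(N)\neq 0$ to see that their initial values exhaust $\R^2$, and then read off the type of the shifted pair from \cref{thm:main} applied in both directions, using shift-invariance of the type in $N$. Your Casoratian identity $A^{(2\tau)}B^{(\tau)}-B^{(2\tau)}A^{(\tau)}=(-1)^{\tau+1}B^{(\tau)}(x+\tau)\prod_{k=0}^{\tau-1}Q(x+k)$ is a correct (and welcome) justification of $Q_\tau\neq 0$, a point the paper asserts without detail.
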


\begin{proof}
\eqref{item:lox_or_k/r-O_elliptic->hyepr} 
$P_{\tau}, Q_{\tau} \neq 0$ follows from $B^{(\tau)}, B^{(2\tau)} \neq 0$. Since the type of $(P_{\tau}(\tau x + N), Q_{\tau}(\tau x + N))$ does not depend on $N$, it suffices to prove this corollary only for $N$ which is larger than any zero and pole of $P_{\tau}$, $Q_{\tau}$, $B^{(\tau)}$. Since $\{ (f(N), f(N + \tau)) \mid \text{$f$ is a $(P, Q)$-holonomic sequence} \} = \R^2$ by $B^{(\tau)}(N) \neq 0$, when $f$ runs on the set of all $(P, Q)$-holonomic sequences, $\{ f(\tau n + N ) \}_{n \in \N}$ runs on the set of all $(P_{\tau}(\tau x + N), Q_{\tau}(\tau x + N))$-holonomic sequences. By \cref{cor:Spq}, any $\{ f(\tau n + N ) \}_{n \in \N}$ has an ultimate sign $(+)$, $(-)$, or $(0)$. Hence, again by \cref{cor:Spq}, $(P_{\tau}(\tau x + N), Q_{\tau}(\tau x + N))$ is of hyperbolic type. 

\eqref{item:Q-Omega_elliptic->Q-Omega_elliptic} 
By \cref{cor:Spq},
no non-zero $(P, Q)$-holonomic sequence has an ultimate sign. 
Therefore $B^{(\tau)} \neq 0$ for any $\tau$. 
Since the type of $(P_{\tau}(\tau x + N), Q_{\tau}(\tau x + N))$ does not depend on $N$, it suffices to prove this corollary for one $N$. Non-zero $(P, Q)$-holonomic sequences do not have ultimate signs, so there exists at least one $N \in \N$ such that the subsequence $\{ f( \tau n + N) \}_{n \in \N}$ does not have any ultimate signs. Therefore $P_{\tau}, Q_{\tau} \neq 0$, and it follows from \cref{cor:Spq} that $(P_{\tau}(\tau x + N), Q_{\tau}(\tau x + N))$ is of $\Q$-$\Omega$ elliptic type. 
\end{proof}

\begin{lemma}[A quantitative version of \cref{lem:(+)}] \label{lem:quantified(+)}
Let $P, Q \in \R(x)$ have no zeros or poles in $\N$.
\begin{enumerate}[(1)]
    \item \label{item:(PQ):lox_or_hyp_iff_q(1-q)>Q}
    The following are equivalent.
    \begin{enumerate}[(\ref{item:condition_of_exist(+)}a)]
            \item \label{item:(PQ):lox_or_hyp}
            $(P, Q)$ is of loxodromic or hyperbolic type. 
            
            \item \label{item:q(1-q)>Q}
            There exists $q \in \R^{\N}$ with ultimate sign $(+)$ that satisfies 
                \begin{equation}\label{eq:q(1-q)>Q}
                q(n) (1-q(n+1)) \geq -\frac{Q(n)}{P(n)P(n-1)}
                \end{equation}
            for all sufficiently large $n \in \N$.
        \end{enumerate}

\item \label{item:q=1/2+1/4n+1/4nlogn}
If \eqref{item:q(1-q)>Q} holds, 
then it holds 
for the sequence $q$ defined by 
$q(0) = q(1) = 1$ and
$q(n) = \frac12 + \frac1{4n} + \frac1{4n\log n}$, $n \geq 2$.

\item \label{item:f/f>q}
Let $(P, Q)$ be of hyperbolic type and $P$ have the ultimate sign $(+)$. Take any $q$ in \eqref{item:q(1-q)>Q}. Take $N \in \N$ such that $P$, $q$, $Q$ have their ultimate signs at $N$ and the condition~\eqref{eq:q(1-q)>Q} is satisfied for any $n \geq N$. Let $f$ be a $(P, Q)$-holonomic sequence. Then if
\begin{equation} \label{eq:f/f>q}
f(n) \neq 0 \text{ and } \frac{f(n+1)}{f(n)} > q(n)P(n-1)
\end{equation}
holds for some $n \geq N$, this condition also holds for $n+1, n+2, \dots$. In particular, $f$ has an ultimate sign $(+)$ or $(-)$ at $n$. 
\end{enumerate}
\end{lemma}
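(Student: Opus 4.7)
My plan is to normalize the recurrence by the substitution $\tilde f(n) := f(n)/c_n$, where the positive sequence $\{c_n\}$ is chosen so that $c_{n+2}/c_{n+1} = P(n)$ for all sufficiently large $n$; under this substitution, \eqref{eq:(PQ)holonomic} becomes $\tilde f(n+2) = \tilde f(n+1) - a_n \tilde f(n)$ with $a_n := -Q(n)/(P(n)P(n-1))$, the sign of $\tilde f$ matches that of $f$, and the inequality \eqref{eq:f/f>q} translates to $\tilde f(n+1)/\tilde f(n) > q(n)$. Since $Q/(P(n)P(n-1))$ has an ultimate sign, loxodromic type corresponds to $a_n < 0$ eventually, whereas hyperbolic and elliptic types both give $a_n > 0$ eventually.

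For Part \eqref{item:f/f>q}, write $r_n := \tilde f(n+1)/\tilde f(n)$, so the transformed recurrence gives $r_{n+1} = 1 - a_n/r_n$. Assume $r_n > q(n) > 0$ at some $n \geq N$. Under hyperbolicity $a_n > 0$, so $a_n/r_n < a_n/q(n) \leq 1 - q(n+1)$ by \eqref{eq:q(1-q)>Q}, whence $r_{n+1} > q(n+1) > 0$. Induction preserves $r_m > q(m) > 0$ for all $m \geq n$, so consecutive $\tilde f$'s keep their sign, giving $f$ an ultimate sign $(+)$ or $(-)$ at $n$ (determined by $\sgn f(n)$).

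For Part \eqref{item:(PQ):lox_or_hyp_iff_q(1-q)>Q}, the direction (a)$\Rightarrow$(b) follows from Part \eqref{item:q=1/2+1/4n+1/4nlogn}. For (b)$\Rightarrow$(a), if $a_n < 0$ eventually then $(P, Q)$ is loxodromic by definition. Otherwise $a_n > 0$ for $n \geq N$, and I build an eventually positive $(P, Q)$-holonomic sequence. Set $\tilde g(N) := 1$ and $\tilde g(N+1) := q(N)+1$, and extend forward by the transformed recurrence. The inductive step of Part \eqref{item:f/f>q} (which used only $a_n > 0$, $q(n) > 0$, and \eqref{eq:q(1-q)>Q}, not hyperbolicity itself) gives $\tilde g(m+1)/\tilde g(m) > q(m) > 0$ for all $m \geq N$, so $\tilde g > 0$ throughout. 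The corresponding $f(n) := c_n \tilde g(n)$ extends back to $\N$ by $f(n) = (f(n+2) - P(n)f(n+1))/Q(n)$ (valid since $Q$ has no zeros in $\N$), producing a $(P, Q)$-holonomic sequence with ultimate sign $(+)$. Invoking \cref{lem:(+)}~\eqref{item:condition_of_exist(+)}, $(P, Q)$ is loxodromic or hyperbolic; since $a_n > 0$ rules out loxodromic, it must be hyperbolic.

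For Part \eqref{item:q=1/2+1/4n+1/4nlogn}, I write $q(n) = \tfrac12 + \varepsilon_n$ with $\varepsilon_n := \tfrac1{4n} + \tfrac1{4n\log n}$ for $n \geq 2$, so $q(n)(1-q(n+1)) = \tfrac14 + \tfrac12(\varepsilon_n - \varepsilon_{n+1}) - \varepsilon_n \varepsilon_{n+1}$. A careful expansion, in which the $1/(n^2 \log n)$ contributions of the two terms cancel exactly, yields
\[
q(n)(1-q(n+1)) = \frac14 + \frac{1}{16 n^2} + \frac{1}{16 n^2 \log^2 n} + O\!\left(\frac{1}{n^3}\right).
\]
The loxodromic case is trivial since $a_n < 0$ eventually. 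In the hyperbolic case $a_n = -\alpha_0 - \alpha_1/n - \alpha_2/n^2 + O(1/n^3)$, and the lex condition $(\alpha_0, \alpha_1, \alpha_2) \geq (-\tfrac14, 0, -\tfrac1{16})$ splits into four sub-cases (strict inequality at $\alpha_0$, or equality at $\alpha_0$ and strict at $\alpha_1$, or equalities at $\alpha_0, \alpha_1$ and strict at $\alpha_2$, or full boundary equality), each verified by matching the corresponding term in the expansion. The main obstacle is the full boundary $(\alpha_0, \alpha_1, \alpha_2) = (-\tfrac14, 0, -\tfrac1{16})$: there the $1/n^2$ coefficients on both sides equal $1/16$ exactly, and only the positive correction $1/(16 n^2 \log^2 n)$ --- which survives precisely because of the $1/(n^2 \log n)$ cancellation --- secures the inequality. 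This is why the $1/(4n \log n)$ summand in $q$ is necessary; a coarser ansatz like $q(n) = \tfrac12 + \tfrac1{4n}$ would fail at this boundary.
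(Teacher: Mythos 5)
Your Part \eqref{item:f/f>q}, the direction \eqref{item:(PQ):lox_or_hyp}$\implies$\eqref{item:q(1-q)>Q}, and Part \eqref{item:q=1/2+1/4n+1/4nlogn} are correct and essentially the paper's own argument: the induction $r_{n+1} = 1 - a_n/r_n > q(n+1)$ is the paper's one-line computation written in normalized form, and your expansion $q(n)(1-q(n+1)) = \frac14 + \frac1{16n^2} + \frac1{16n^2\log^2 n} + O(n^{-3})$, with the cancellation of the $\frac1{n^2\log n}$ terms and the case analysis over the lexicographic condition on $(\alpha_0,\alpha_1,\alpha_2)$, is exactly the explicit inequality the paper verifies.

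The genuine gap is in \eqref{item:q(1-q)>Q}$\implies$\eqref{item:(PQ):lox_or_hyp}: your argument is circular within this paper. Constructing an eventually positive solution from \eqref{eq:q(1-q)>Q} when $a_n>0$ is fine (it amounts to showing that \eqref{item:q(1-q)>Q} implies $I_{P,Q}(+)\neq\varnothing$), but you then invoke \cref{lem:(+)}~\eqref{item:condition_of_exist(+)} to pass from $I_{P,Q}(+)\neq\varnothing$ to ``loxodromic or hyperbolic''. In the paper, \cref{lem:(+)}~\eqref{item:condition_of_exist(+)} is proved \emph{from} the present lemma: its proof reduces the claim to ``$I_{P,Q}(+)\neq\varnothing \iff$ \eqref{item:q(1-q)>Q}'' and then applies the equivalence \eqref{item:(PQ):lox_or_hyp_iff_q(1-q)>Q} --- precisely the direction you are proving. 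The hard content, namely that \eqref{eq:q(1-q)>Q} (equivalently, an eventually positive solution) cannot exist when $(P,Q)$ is of elliptic type, appears nowhere in your proposal. The paper supplies it by a direct contradiction: elliptic type forces a constant $C > \frac1{16}$ with $q(n)(1-q(n+1)) > \frac14 + \frac{C}{n^2}$ eventually; one then shows $q$ is eventually decreasing with limit $\frac12$, writes $q(n) = \frac12 + p(n)/n$, shows $p$ is eventually decreasing with finite limit $\beta$ and $\liminf_{n\to\infty} n\,(p(n)-p(n+1)) \leq 0$, and taking $\liminf$ in the inequality gives $\frac12\beta - \beta^2 \geq C > \frac1{16}$, contradicting $\frac12\beta-\beta^2 \leq \frac1{16}$. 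You need an argument of this kind (or some other exclusion of the elliptic case that does not pass through \cref{lem:(+)}); as written, your Part \eqref{item:(PQ):lox_or_hyp_iff_q(1-q)>Q} rests on a lemma that logically depends on it.
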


The sequence $q$ in \cref{lem:quantified(+)}~\eqref{item:q=1/2+1/4n+1/4nlogn} is what appears in the proof of \cite[Lemma~3.4]{KKL+21}.

\begin{proof}[Proof of Theorem~\ref{thm:procedure}]
The desired partial algorithm simply diverges when the input $(P, Q)$ is of $\Q$-$\Omega$ elliptic type. 
For the input $(P, Q)$ of hyperbolic type together with $f_0 \in \Q^2$, define $q \in \R^{\N}$ as in \cref{lem:quantified(+)}~\eqref{item:q=1/2+1/4n+1/4nlogn} and execute the following procedure:

\begin{enumerate}
\item 
If $P$ has the ultimate sign $(-)$, then write $f_0 = (a, b)$, and let $P := -P$ and $f_0 := (a, -b)$.

\item \label{item:initialN}
Calculate any $N$ as in \cref{lem:quantified(+)}~\eqref{item:f/f>q}. 

\item 
Let $f$ be the $(P, Q)$-holonomic sequence with initial value $f_0$. For $n = N, N+1, \dots$, check the condition \eqref{eq:f/f>q}, and if it is satisfied then output $n$. 
\end{enumerate}

Let us show that if this procedure halts, then the output is correct and the $(P, Q)$-holonomic sequence $f$ with initial value $f_0$ has a stable ultimate sign. Without loss of generality, we can assume that $P$ has the ultimate sign $(+)$. 
It follows from \cref{lem:quantified(+)}~\eqref{item:f/f>q} that $f$ has an ultimate sign at the output $n$ when the procedure halts. Moreover, since $\sgn f(n)$ and the condition \eqref{eq:f/f>q} are robust under small perturbations of the initial value of $f$, the ultimate sign of $f$ is stable. 

Conversely, let us assume that the $(P, Q)$-holonomic sequence $f$ with initial value $f_0$ has a stable ultimate sign. 
By \cref{lem:(+)}~\eqref{item:(PQ):hyp->+or-}, $f$ has $(+)$ or $(-)$. Without loss of generality, we can assume it is $(+)$. 
Let $N$ be the number obtained in step~\ref{item:initialN} of the procedure with input $P$, $Q$, $f_0$. 
It follows from the stability of the ultimate sign of $f$ that there exists a $(P, Q)$-holonomic sequence $g$ such that 
\begin{itemize}
\item $g(N) > 0$,
\item $g$ satisfies the condition \eqref{eq:f/f>q} for $n=N$, where $f$ is replaced by $g$,
\item A small perturbation $f-g$ of (the initial value of) $f$ has the same ultimate sign $(+)$ as $f$. 
\end{itemize}
We want to show that $F(n) := g(n) / \prod_{k=N}^{n-1}q(k)P(k-1) \to \infty \ (n \to \infty)$ since then we have $\lim\limits_{n \to \infty} f(n) / \prod_{k=N}^{n-1}q(k)P(k-1) = \infty$ and the condition \eqref{eq:f/f>q} holds for some $n$. By the assumption of $g$ and \cref{lem:quantified(+)}~\eqref{item:f/f>q}, $g$ (and so $F$) has the ultimate sign $(+)$ at $N$. Recurrence $g(n+2) = P(n)g(n+1) + Q(n)g(n)$ and the condition \eqref{eq:q(1-q)>Q} yield that $F(n+2) - F(n+1) \geq (q(n+1)^{-1} - 1) (F(n+1) - F(n))$ for all $n \geq N$. Note that $F(N+1) - F(N) > 0$. Then we have $F(n+2) - F(n+1) = \Omega \left( \prod_{k=0}^{n} (q(k+1)^{-1} - 1) \right) $. 
Since $q(k+1)^{-1} - 1 = 1 - \frac1k - \frac1{k \log k} + O(k^{-2})$, it follows that $\prod_{k=0}^{n} (q(k+1)^{-1} - 1) = \Theta\left( \frac1{n \log n} \right)$. (Herein we used $\prod_{k=2}^n (1 + \frac{\alpha}k + \frac{\beta}{k \log k}) = \Theta(n^{\alpha}(\log n)^{\beta})$ for arbitrary $\alpha, \beta \in \R$.) Thus $F(n+2) - F(n+1) = \Omega\left( \frac1{n \log n} \right)$, and so $F(n) = \Omega(\log\log n)$, which proves $F(n) \to \infty$. 

Finally, when the input $(P, Q)$ is of loxodromic type or $\frac{k}r$-$O$ elliptic type, define $\tau$ as in \cref{cor:hyperbolic subsequence}. If the $\tau$th generalized canonical denominator $B^{(\tau)}$ or the $2\tau$th one $B^{(2\tau)}$ is $0$, it is easy to make our partial algorithm behave as in \cref{thm:procedure}. 
Assume that $B^{(\tau)}, B^{(2\tau)} \neq 0$, and define $P_{\tau}$, $Q_{\tau}$ by \eqref{eq:Ptau_and_Qtau}. Let $N_0 \in \N$ be larger than any pole of $P_{\tau}$ and $Q_{\tau}$. Since all $(P_{\tau}(\tau x + N), Q_{\tau}(\tau x + N))$ for $N=N_0, \dots, N_0 + \tau -1$ are of hyperbolic type, we can execute the aforementioned procedure with inputs $(P_{\tau}(\tau x + N), Q_{\tau}(\tau x + N))$ and $(f(N), f(N+\tau))$ for each $N$,
which each halts if and only if 
$\{ f( \tau n + N ) \}_{n \in \N}$ has a stable ultimate sign of length $1$.
All of these $\tau$ executions thus halt if and only if $f$ has a stable ultimate sign of length $\tau$.
\end{proof}

\begin{proof}[Proof of \cref{thm:problem_reduction}]
    Take inputs $f_0 \in \Q^2$ and $(P, Q) \in \Q(x)^2$ for the Ultimate Sign Problem for second-order holonomic sequences. Let $f$ be the $(P, Q)$-holonomic sequence with initial value $f_0$. Without loss of generality, we can assume that $P$, $Q$ are non-zero, $(P, Q)$ is not of $\Q$-$\Omega$ type and $f_0 \neq (0, 0)$ (otherwise the problem is easy). 
    We can also assume that $P$, $Q$ have no zeros in $\N$.
    As in the proof of \cref{thm:procedure}, we only have to consider the case of $(P, Q)$ of hyperbolic type, by taking a suitable subsequence. 

    Assume that one has an oracle for the Minimality Problem for second-order holonomic sequences. This oracle tells us whether $f$ has an unstable ultimate sign, since it is equivalent to the minimality of $f$ for $(P, Q)$ of hyperbolic type. 
    
    If $f$ has a stable ultimate sign, execute the partial algorithm in \cref{thm:procedure}. Otherwise, take $q$ as in \cref{lem:quantified(+)}~\eqref{item:q=1/2+1/4n+1/4nlogn}, and calculate and output $N$ of \eqref{item:f/f>q} in the same lemma. Let us show that this output is correct. If $f(n) = 0$ for some $n \geq N$, then $f(n+1) \neq 0$ and $\frac{f(n+2)}{f(n+1)} = P(n) > q(n+1) P(n)$, which is the condition \eqref{eq:f/f>q} for $n+1$. This implies that $f$ has a stable ultimate sign at $n+1$, which is a contradiction. If $f$ has no zeros $\geq N$ and satisfies $\frac{f(n+1)}{f(n)} < 0$ for some $n \geq N$, then 
    \begin{equation} \label{eq:f/f>q_holds}
        \frac{f(n+2)}{f(n+1)} = P(n) + Q(n) \frac{f(n)}{f(n+1)}  
        \geq
        P(n)
        >
        q(n+1) P(n),
    \end{equation}
    resulting in the same as above. Thus, $\frac{f(n+1)}{f(n)} > 0$ for all $n \geq N$. 
\end{proof}

\subsection{Proof of the lemmas} \label{sec:proof_of_lemmas}

\begin{proof}[Proof of \cref{lem:quantified(+)}]
\eqref{item:(PQ):lox_or_hyp}$\implies$\eqref{item:q(1-q)>Q} and \eqref{item:q=1/2+1/4n+1/4nlogn} These follow from the inequality
\[
q(n)(1-q(n+1)) \geq \frac14 + \frac{1}{16n^2} + \frac{1}{16n^2(\log n)^2} - \frac1{n^3}
\]
for all $n \geq 3$, where $q$ is defined as in \eqref{item:q=1/2+1/4n+1/4nlogn}. (You can show this inequality using $n^{-1} - \frac12 n^{-2} \leq \log(1+n^{-1}) \leq n^{-1}$. )

\eqref{item:q(1-q)>Q}$\implies$\eqref{item:(PQ):lox_or_hyp} Suppose, for a contradiction, that \eqref{item:q(1-q)>Q} holds and $(P, Q)$ is of elliptic type. Take $q$ in \eqref{item:q(1-q)>Q}. Then there exists $C > \frac1{16}$ such that for all sufficiently large $n$, 
\begin{equation} \label{eq:q(1-q)>1/4}
q(n) (1-q(n+1)) > \frac14 + \frac{C}{n^{2}}. 
\end{equation}
Especially, we have $0 < q(n) < 1$ for all sufficiently large $n$. If $q(n) < q(n+1)$, then $q(n)(1-q(n+1)) < q(n+1)(1-q(n+1)) \leq \frac14$, which contradicts the equation above. Hence $q$ is eventually decreasing, and $\alpha := \lim\limits_{n \to \infty} q(n) \geq 0$ exists. Letting $n \to \infty$ in \cref{eq:q(1-q)>1/4} gives $\alpha(1-\alpha) \geq \frac14$, so $\alpha = \frac12$. Define $p(n)$ so that $q(n) = \frac12 + p(n) / n$. Then $p$ has the ultimate sign $(+)$, and by the inequality~\eqref{eq:q(1-q)>1/4} we have
\begin{equation} \label{eq:1/2(np-np)-pp>C/n^2}
\frac{n}2 (p(n) - p(n+1)) + \frac12 p(n) - p(n)p(n+1) > \frac{C}{n^{2}} (n+1) n > C. 
\end{equation}
If $p(n) < p(n+1)$, we have the left-hand side of the above inequality $\leq \frac12 p(n) - p(n)^2 \leq \frac1{16}$, which contradicts $C > \frac1{16}$. Therefore $p(n)$ is eventually decreasing, and $\beta := \lim\limits_{n \to \infty} p(n) \in \R$ exists. If $\liminf\limits_{n \to \infty} n (p(n) - p(n+1)) > 0$, then we have $p(n) - p(n+1) = \Omega(n^{-1})$ and so $\lim\limits_{n \to \infty} p(n) = -\infty$, which contradicts the existence of $\beta$. Thus $\liminf\limits_{n \to \infty} n (p(n) - p(n+1)) \leq 0$. Taking $\liminf\limits_{n \to \infty}$ of inequality~\eqref{eq:1/2(np-np)-pp>C/n^2} yields $\frac12 \beta - \beta^2 \geq C > \frac1{16}$. This is a contradiction.

\eqref{item:f/f>q} Use the condition~\eqref{eq:f/f>q} and the inequality~\eqref{eq:q(1-q)>Q} to obtain
\[
\frac{f(n+2)}{f(n+1)} 
=
P(n) + Q(n) \frac{f(n)}{f(n+1)}
>
P(n) + \frac{Q(n)}{q(n)P(n-1)}
\geq 
q(n+1)P(n).
\]
Then the assertion follows by induction. 
\end{proof}

\begin{proof}[Proof of \cref{lem:(+)}]
\eqref{item:condition_of_exist(+)} It suffices to prove $I_{P, Q}(+) \neq \varnothing$ $\iff$ \cref{lem:quantified(+)}~\eqref{item:q(1-q)>Q}. 
$I_{P, Q}(+) \neq \varnothing$, i.e., there exists a $(P, Q)$-holonomic sequence $f$ with the ultimate sign $(+)$, if and only if there exists $q \in \R^{\N}$ with the ultimate sign $(+)$ such that, for all sufficiently large $n$,
\[
 q (n + 1) = 1 - \frac{-Q (n)}{P (n) P (n - 1)} \cdot \frac{1}{q (n)}, 
\]
which is an equation obtained from the recurrence~\eqref{eq:(PQ)holonomic} by rewriting with $q (n) = f (n + 1) / (f (n)P (n - 1))$. 
Since the right-hand side monotonically increases as $q(n) (> 0)$ increases, the existence of such $q$ is equivalent to the existence of $q$ that satisfies 
\[
 0 < q (n + 1) \leq 1 - \frac{-Q (n)}{P (n) P (n - 1)} \cdot \frac{1}{q (n)}, 
\]
an inequality version of the above equation, for all sufficiently large $n$. This is \eqref{item:q(1-q)>Q}. 

\eqref{item:(PQ):hyp->+or-} Take $q$ and $N$ as in \cref{lem:quantified(+)}~\eqref{item:f/f>q}. We want to prove that $\sgn f(n)$, $n \geq \max \{ N, 1 \}$, changes at most once for any non-zero $(P, Q)$-holonomic sequence $f$. Assume that $\sgn f(n)$ changes into $\sgn f(n+1) \neq 0$ at $n \geq \max \{ N, 1 \}$. Then, by a similar discussion in the proof of \cref{thm:problem_reduction}, the inequality \eqref{eq:f/f>q_holds} holds. 
By \cref{lem:quantified(+)}~\eqref{item:f/f>q}, the sign of $f$ does not change after $n+1$. 
\end{proof}

\begin{proof}[Proof of \cref{lem:(+-)}]
Assume first that $(P, Q)$ is not of loxodromic type. If a $(P,Q)$-holonomic sequence $f$ has the ultimate sign $(+, -)$, then we find a contradiction by comparing the signs of the three terms of the equation $f(n+2) = P(n)f(n+1) + Q(n)f(n)$. Hence $I_{P, Q}(+, -) = \varnothing$, which proves \eqref{item:exist(+-)} and \eqref{item:pI(+-)_is_closed}. There is nothing to prove for \eqref{item:(PQ):lox->+or-or+-or-+}. 

Next, assume that $(P, Q)$ is of loxodromic type. 
Take $N \in \N$ at which $P$, $Q$ have the ultimate sign $(+)$. For any non-zero $(P, Q)$-holonomic sequence $f$, if successive terms $f(n)$, $f(n+1) \ (n \geq N)$ have the same sign, then all the following terms have the same sign. It follows from this that the non-empty closed subset (of $\R^2 \setminus \{ (0, 0) \}$) $I_n = \{ f_0 \in \R^2 \setminus \{(0, 0)\} \mid \text{The $(P, Q)$-holonomic sequence } f \text{ with initial value } f_0 \text{ satisfies } f(2n) \geq 0 \text{ and } f(2n+1) \leq 0. \}$ is decreasing for $n \geq N$. Since $p(I_n)$ are decreasing non-empty closed intervals, $\bigcap_{n \geq N} p(I_n)$ is also non-empty closed intervals. 
This proves \eqref{item:exist(+-)} and \eqref{item:pI(+-)_is_closed}. Finally let us show \eqref{item:(PQ):lox->+or-or+-or-+}. Take a non-zero $(P, Q)$-holonomic sequence $f$ with the initial value $f_0 \notin I_{P, Q} (+, -) \cup I_{P, Q}(-, +) = \left( \bigcap_{n \geq N} I_n \right) \cup \left( \bigcap_{n \geq N} (-I_n) \right)$. Then there exists $n \geq N$ such that $f_0 \notin I_n \cup (-I_n)$, i.e., $\sgn f(2n) = \sgn f(2n+1)$. Thus $f$ has the ultimate sign $(+)$ or $(-)$. 
\end{proof}

\subsubsection*{Proof of \cref{lem:single-term-feedback_recurrence}}

\begin{proof}[Proof of \cref{lem:single-term-feedback_recurrence}~\eqref{item:R=O}]
Set $S_{N, n} := \sum_{k=N}^{n-1} |R(n)|$. There exists $S_{N, \infty} := \lim\limits_{n \to \infty} S_{N, n} \in \R$. By taking large $N$, we can assume $S_{N, \infty} < \frac12$. 
For any $n \in \N$ and $N' \in \{ N, N+1, \dots, N+\tau-1 \}$ such that $n \geq N'$ and $n \equiv N' \pmod{\tau}$, let us prove the following upper bound of the variation of $F$ by course-of-values induction on $n$. 
\begin{equation} \label{eq:|f-f|<max}
\left| F(n) - F (N') \right| 
\leq
2 S_{N, n} \max_{N \leq I < N + \tau} \left| F (I) \right|
\end{equation}
If $n = N'$, it is obvious. Assume $n > N'$. Set $\displaystyle C := \max_{N \leq I < N + \tau} \left| F (I) \right|$. By the induction hypothesis and the recurrence~\eqref{eq:single-term-feedback_recurrence}, we have
\[
\begin{aligned}
\left| F(n) - F (N') \right| 
\leq&
\left| F(n) - F(n-\tau) \right|  + \left| F(n-\tau) - F(N') \right| \\
\leq&\ 
| R(n-\tau) | \left| F(n-\tau+1) \right| + 
2 S_{N, n-\tau} C.
\end{aligned}
\]
Let us find an upper bound of $\left| F(n-\tau+1) \right|$ in this inequality. Consider $N'' \in \{ N, N+1, \dots, N+\tau-1\}$ such that $N'' \equiv n-\tau+1 \pmod{\tau}$. Then the induction hypothesis and $S_{N, \infty} < \frac12$ give a bound:
\[
\left| F (n-\tau+1) \right| 
\leq
\left| F (n-\tau+1) - F (N'') \right| + \left| F (N'') \right| 
\leq 
2 S_{N'', n-\tau+1} C + C
\leq 2C. 
\]
Since $|R(n-\tau)| \leq S_{n-\tau, n}$, we obtain the bound~\eqref{eq:|f-f|<max}. 

For any $N$ such that $S_{N, \infty} < \frac12$ and any $n \in \N$, $N' \in \{ N, N+1, \dots, N+\tau-1 \}$ such that $n \geq N'$ and $n \equiv N' \pmod{\tau}$, by the bound~\eqref{eq:|f-f|<max}, we especially have 
\begin{equation} \label{eq:|f-f|<max_simple}
|F(n) - F(N')| 
\leq 
2 S_{N, \infty} \max_{N \leq I < N+\tau} |F(I)|. 
\end{equation}

\eqref{item:F_converge} By \eqref{eq:|f-f|<max_simple}, $F$ is bounded. Therefore $\left\{ \max\limits_{N \leq I < N+\tau} |F(I)| \right\}_{N \in \N}$ is also bounded. Since $S_{N, \infty} \to 0$ as $N\to \infty$, it follows from the bound~\eqref{eq:|f-f|<max_simple} that each $\{ F(n)\}_{n \equiv i \pmod{\tau}}$, $i=0, \dots, \tau-1 $ is a Cauchy sequence.

\eqref{item:limFneq0} Take $I$ that realizes the ``$\max$'' on the right-hand side of \eqref{eq:|f-f|<max_simple} and set $N' = I$. Then $|F(n) - F(I)| \leq 2 S_{N, \infty} |F(I)|$. Letting $N \to \infty$ with keeping the condition $n \equiv I \pmod{\tau}$ in this inequality, we find that the limit of $\{ F(n) \}_{n \equiv I \pmod{\tau}}$ is not $0$, since $2 S_{N, \infty} < 1$. 
\end{proof}

\begin{proof}[Proof of \cref{lem:single-term-feedback_recurrence}~\eqref{item:zero_convergence_sign}]
The right-hand side of the recurrence~\eqref{eq:single-term-feedback_recurrence} has the sign $qs$ for sufficiently large $n$ with $n \equiv i \pmod{\tau}$. Therefore $\{ F(n) \}_{n \equiv i \pmod{\tau}}$ is eventually increasing if $qs$ is positive, eventually decreasing if negative, and constant if $0$. Thus, it has the ultimate sign $(-qs)$. 
\end{proof}

\begin{proof}[Proof of \cref{lem:single-term-feedback_recurrence}~\eqref{item:R=Omega}]

Since the right-hand side of \cref{eq:single-term-feedback_recurrence} has a constant sign for sufficiently large $n$ with $n \equiv i \pmod{\tau}$, the subsequence $\{ F(n) \}_{n \equiv i \pmod{\tau}}$ is eventually monotonic. Then there exists $L_i := \lim\limits_{\substack{n \equiv i \pmod{\tau} \\ n \to \infty}} F(n) \in \R \cup \{ \pm \infty \}$ for each $i = 0, \dots, \tau-1$. 

Let us show that $L_0 = \dots = L_{\tau-1} = 0$ or $L_0, \dots, L_{\tau-1} \in \{ \pm \infty\}$. We assume $L_i \neq 0$ for some $i$ and show $L_0, \dots, L_{\tau-1} \in \{ \pm \infty\}$. By $R(n) = \Omega(n^{-1})$ and $L_i \neq 0$, the recurrence~\eqref{eq:single-term-feedback_recurrence} yields $F(n+\tau) - F(n) = \Omega(n^{-1})$ where $n \equiv i-1 \pmod{\tau}$. Therefore $L_{i-1} \in \{ \pm \infty \}$. The same discussion proves $L_{i-2}, L_{i-3}, \ldots  \in \{ \pm \infty\}$ by induction on $i$.

By what we showed above and a similar discussion in the proof of \cref{lem:single-term-feedback_recurrence}~\eqref{item:zero_convergence_sign}, for each $i=0, \dots, \tau-1$, the ultimate sign of the subsequence $\{ F(n) \}_{n \equiv i \pmod{\tau}}$ and whether the previous subsequence $\{ F(n) \}_{n \equiv i-1 \pmod{\tau}}$ eventually increases or decreases are determined by whether $\{ F(n) \}_{n \equiv i \pmod{\tau}}$ eventually increases or decreases. 
Therefore, there exists $s \in \{ 0, \pm 1 \}$ such that $s$ is independent of $i$ and the ultimate sign of $\{ F(n) \}_{n \equiv i-1 \pmod{\tau}}$ is that of $\{ F(n) \}_{n \equiv i \pmod{\tau}}$ multiplied by $s$. 
Thus, $F$ has an ultimate sign of length $\leq 2$. 
\end{proof}

\subsubsection*{Proof of \cref{lem:transformation_property,lem:Tg^j_converge}} 

\begin{proof}[Proof of \cref{lem:transformation_property}~\eqref{item:condition_of_TandR}]
There exists $N \in \N$ such that $A^{(\tau)}(n) \neq 0$ for all $n \geq N$ since $T(n) \neq 0$. By $|T(n)| = \Theta \left( \prod\limits_{\substack{k \equiv n \!\!\! \pmod{\tau},\\ N \, \leq \, k \, \leq \, n-\tau}} \dfrac1{|A^{(\tau)}(k)|} \right)$, we have
\[
\left| \frac{T(n+1)}{T(n)} \right|
=
\Theta
\left( |A^{(\tau)}(n)|^{-1 / \tau}
\prod_{\substack{k \equiv n \!\!\! \pmod{\tau},\\ N \, \leq \, k \, \leq \, n-\tau}} \left| 
\frac{A^{(\tau)}(k+\tau)^{\frac1{\tau}} A^{(\tau)}(k)^{1 - \frac1{\tau}}}{A^{(\tau)}(k+1)} 
\right|
\right). 
\]
Each factor $\frac{A^{(\tau)}(k+\tau)^{\frac1{\tau}} A^{(\tau)}(k)^{1 - \frac1{\tau}}}{A^{(\tau)}(k+1)}$ of the product is $1 + O(k^{-2})$, so the product converges as $n \to \infty$. Therefore $\left| \frac{T(n+1)}{T(n)} \right| = \Theta \left( |A^{(\tau)}(n)|^{-1/\tau} \right)$. Especially, 
\[
|R(n)| = \left| B^{(\tau)}(n) \frac{T(n+\tau)}{T(n+\tau-1)} \dotsm \frac{T(n+2)}{T(n+1)} \right| = \Theta(|B^{(\tau)}(n)||A^{(\tau)}(n)|^{1-1 / \tau}). \qedhere
\]
\end{proof}

To prove \cref{lem:transformation_property}~\eqref{item:A>0andB/A=n^-1-epsilon_iff_Otype_and_tautheta_in_2Z} and \cref{lem:Tg^j_converge}, let us study the properties of the generalized $\tau$th canonical numerator and denominator. 

\begin{lemma} \label{lem:generalized_canonical_numerator/denominator}
Let $P, Q \in \R(x)^2$ have no zeros or poles in $\N$. The generalized $i$th canonical denominators $B^{(i)} \in \R(x)$ of $(P, Q)$ satisfy the recurrence
\begin{equation}\label{eq:B}
B^{(i+2)}(x) = P(x) B^{(i+1)}(x+1) + Q(x+1) B^{(i)}(x+2), \quad (B^{(0)}, B^{(1)}) = (0, 1). 
\end{equation}
The generalized $i$th canonical numerator ($i \geq 1$) is $A^{(i)}(x) = Q(x)B^{(i-1)}(x+1)$. 
\end{lemma}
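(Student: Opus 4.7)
The plan is to derive both formulas (the recurrence for $B^{(i)}$ and the identity $A^{(i)}(x)=Q(x)B^{(i-1)}(x+1)$) by simply manipulating the defining equation
\[
f(n+i) = B^{(i)}(n) f(n+1) + A^{(i)}(n) f(n)
\]
in two different ways, and then invoking uniqueness of $A^{(i)}, B^{(i)}$ to match coefficients. The uniqueness itself relies on the fact that, since $P$ and $Q$ have no zeros in $\N$, the recurrence \eqref{eq:(PQ)holonomic} is invertible in both directions, so the linear map $(f(0),f(1)) \mapsto (f(n),f(n+1))$ is a bijection of $\R^2$; in particular, as $f$ ranges over all $(P,Q)$-holonomic sequences, $(f(n),f(n+1))$ takes every value in $\R^2$.

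First I would handle the base cases $i=0, 1$. Plugging $i=0$ into the defining equation gives $f(n)=B^{(0)}(n)f(n+1)+A^{(0)}(n)f(n)$, which by uniqueness forces $(B^{(0)},A^{(0)})=(0,1)$. Plugging $i=1$ gives $(B^{(1)},A^{(1)})=(1,0)$, matching the stated initial conditions; note that this already confirms $A^{(1)}(x)=0=Q(x)B^{(0)}(x+1)$.

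Next, for $i\geq 2$, I apply the defining relation with the index shifted by one (that is, with $n$ replaced by $n+1$ and $i$ by $i-1$) to obtain
\[
f(n+i) = B^{(i-1)}(n+1)\, f(n+2) + A^{(i-1)}(n+1)\, f(n+1).
\]
I then substitute $f(n+2)=P(n)f(n+1)+Q(n)f(n)$ and collect the coefficients of $f(n+1)$ and $f(n)$:
\[
f(n+i) = \bigl[P(n)B^{(i-1)}(n+1) + A^{(i-1)}(n+1)\bigr] f(n+1) + Q(n)B^{(i-1)}(n+1)\, f(n).
\]
Comparing this with the defining equation for $A^{(i)}, B^{(i)}$ and invoking uniqueness yields
\[
A^{(i)}(x) = Q(x)\, B^{(i-1)}(x+1), \qquad B^{(i)}(x) = P(x)\, B^{(i-1)}(x+1) + A^{(i-1)}(x+1),
\]
as an identity of rational functions (since the equality holds at all $n\in\N$). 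The first identity is exactly the claimed formula for $A^{(i)}$, valid for all $i\geq 1$.

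Finally, substituting the formula $A^{(i-1)}(x+1)=Q(x+1)B^{(i-2)}(x+2)$ into the second identity and reindexing $i \mapsto i+2$ yields the recurrence
\[
B^{(i+2)}(x) = P(x)\, B^{(i+1)}(x+1) + Q(x+1)\, B^{(i)}(x+2).
\]
There is no real obstacle here: the argument is a routine coefficient-matching once one realizes that applying the defining relation at a shifted index, rather than applying the original recurrence at the last two terms (which would instead give $B^{(i+2)}(x)=P(x+i)B^{(i+1)}(x)+Q(x+i)B^{(i)}(x)$), is what produces the form stated in the lemma. The only subtle point worth emphasizing in the writeup is the uniqueness argument based on surjectivity of $(f(0),f(1))\mapsto (f(n),f(n+1))$.
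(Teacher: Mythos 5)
Your proof is correct and follows essentially the same route as the paper: apply the defining relation \eqref{eq:f=Bf+Af} at the shifted index $n+1$ with order lowered by one, substitute the base recurrence \eqref{eq:(PQ)holonomic}, and compare coefficients of $f(n)$ and $f(n+1)$, concluding by induction. Your explicit uniqueness argument via surjectivity of $(f(0),f(1))\mapsto(f(n),f(n+1))$ is a welcome detail that the paper leaves implicit in \cref{def:generalized_normal_numerator/denominator}.
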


\begin{proof}
For any $(P, Q)$-holonomic sequence $f$, the term $f(n+i+2)$ is expressed by $f(n)$ and $f(n+1)$ as follows: 
\[
\begin{aligned}
f((n+1)+(i+1)&) = B^{(i+1)}(n+1) f(n+2) +A^{(i+1)} (n+1) f(n+1) \\
=& \left( P(n)B^{(i+1)}(n+1)  + A^{(i+1)} (n+1) \right) f(n+1) + Q(n)B^{(i+1)}(n+1) f(n) 
\end{aligned}
\]
Comparing this to \cref{eq:f=Bf+Af} for $\tau = i+2$, we obtain the lemma by induction on $i$. 
\end{proof}

Let us calculate the ultimate sign of $B^{(i)}$ and $\deg B^{(i)}$. Let $\deg 0 := - \infty$. 

\begin{lemma} \label{lem:B^i}
Let $P, Q \in \R(x)$ have no zeros or poles in $\N$ and $P$ have the ultimate sign $(+)$. Let $i \geq 1$ be an integer and $B^{(i)}$ be the generalized $i$th canonical denominator. Let $L := \lim\limits_{x \to \infty} \frac{Q(x)}{P(x)P(x-1)} \in [-\infty, \infty]$. Then $\deg B^{(i)}$ is:
\[
\begin{cases}
    (i-1) \deg P  + \deg \left( \frac{Q(x)}{P(x)P(x-1)} + \frac1{4\cos^2 \theta \pi} \right) & \! \text{if $L = -\frac1{4 \cos^2 \theta \pi} \in (-\infty, -\frac14)$ and $ i \theta \in \Z$},
    \\
    (i-1) \deg P + \lfloor \frac{i-1}2 \rfloor \max\left\{ 0, \deg \frac{Q(x)}{P(x)P(x-1)} \right\} & \! \text{Otherwise}.
\end{cases}
\]
Let $q \in \{ +, -, 0 \}$ be $+$ if $\frac{Q(x)}{P(x)P(x-1)}$ is eventually increasing, $-$ if eventually decreasing, and $0$ if constant. Then the ultimate sign of $B^{(i)}$ is:
\[
\begin{cases}
    (+) &  \text{if $L \geq -\frac14$},
    \\
    (\sgn \sin i \theta) & \text{if $L = -\frac1{4 \cos^2 \theta \pi} \in [-\infty, -\frac14) $ and $ i \theta \notin \Z$},
    \\
    (\sgn (-1)^{i \theta} q) & \text{if $L = -\frac1{4 \cos^2 \theta \pi} \in [-\infty, -\frac14)$ and $ i \theta \in \Z$}.
\end{cases}
\]
\end{lemma}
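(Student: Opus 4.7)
The plan is to prove both claims by induction on $i$, working with a normalized form of the recurrence \eqref{eq:B} that turns it into a continuant.

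First I would strip off the powers of $P$: set $b_0 := 0$ and $b_i(x) := B^{(i)}(x)/\prod_{k=0}^{i-2}P(x+k)$ for $i \geq 1$, so that $b_1 = b_2 = 1$. Dividing \eqref{eq:B} through by $\prod_{k=0}^{i} P(x+k)$ turns it into the cleaner recurrence
\[
b_{i+2}(x) = b_{i+1}(x+1) + \rho(x)\,b_i(x+2), \qquad \rho(x) := \frac{Q(x+1)}{P(x)P(x+1)},
\]
where $\rho$ has the same limit $L$, degree $d$, ultimate sign, and eventual monotonicity as $Q(x)/(P(x)P(x-1))$. Since $P$ has ultimate sign $(+)$, we have $\deg B^{(i)} = (i-1)\deg P + \deg b_i$ and the ultimate sign of $B^{(i)}$ is that of $b_i$, so it suffices to prove the corresponding claims for $b_i$. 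A direct induction on the recurrence identifies $b_i$ with the Euler continuant $b_i(x) = \sum_{S}\prod_{s \in S}\rho(x+s)$, where $S$ ranges over subsets of $\{0, 1, \dots, i-3\}$ containing no two consecutive elements.

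Next I would analyze the constant-coefficient value $b_i^{(L)}$ obtained by freezing $\rho \equiv L$. This satisfies $b_i^{(L)} = (\lambda_+^i - \lambda_-^i)/(\lambda_+ - \lambda_-)$ for the characteristic roots $\lambda_\pm = (1 \pm \sqrt{1+4L})/2$: it is positive for $L \geq -\tfrac14$, and for $L = -1/(4\cos^2\theta\pi)$ with $\theta \in (0, \tfrac12]$ it equals $\sin(i\theta\pi)/((2\cos\theta\pi)^{i-1}\sin\theta\pi)$, of sign $\sgn\sin(i\theta\pi)$ when $i\theta \notin \Z$ and zero when $i\theta \in \Z$. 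From the continuant form I would then read off the degree and sign of $b_i$ in the generic cases. When $d > 0$, the maximum-cardinality subsets in the continuant have size $\lfloor(i-1)/2\rfloor$ (the independence number of a path on $i-2$ vertices), and their leading monomials share the common sign $(\sgn\rho)^{\lfloor(i-1)/2\rfloor}$, so the sum does not cancel and yields the ``Otherwise'' degree formula together with the sine-based sign. When $d \leq 0$ and $b_i^{(L)} \neq 0$, one has $b_i(x) \to b_i^{(L)}$, so $\deg b_i = 0$ and the ultimate sign is $\sgn b_i^{(L)}$, as stated.

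The remaining and main obstacle is the case $L = -1/(4\cos^2\theta\pi) \in (-\infty, -\tfrac14)$ with $i\theta \in \Z$, in which $b_i^{(L)} = 0$ and one must extract the first non-vanishing order. Writing $\rho(x) = L + \epsilon(x)$ with $\epsilon(x) := Q(x+1)/(P(x)P(x+1)) + 1/(4\cos^2\theta\pi)$, a first-order expansion of the continuant gives
\[
b_i(x) = \sum_{t=0}^{i-3}\epsilon(x+t)\,C_{i,t}(L) + O(\epsilon(x)^2),
\]
where the ``cut-at-$t$'' identity factors the partial derivative as $C_{i,t}(L) = b_{t+1}^{(L)} \cdot b_{i-t-2}^{(L)}$. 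Since $\epsilon$ has a common eventual sign $q \in \{+, -, 0\}$ determined by whether $Q/(PP)$ is eventually increasing, decreasing, or constant, and since $\sum_t C_{i,t}(L) = (\mathrm{d}b_i^{(\rho)}/\mathrm{d}\rho)|_{\rho=L}$ can be evaluated by l'H\^opital at the degenerate configuration $\lambda_+^i = \lambda_-^i$ to a nonzero constant of sign $(-1)^{i\theta}$, one obtains $\deg b_i = \deg \epsilon$ and ultimate sign $\sgn((-1)^{i\theta} q)$. The delicate step is precisely this l'H\^opital computation: showing that $(\mathrm{d}b_i^{(\rho)}/\mathrm{d}\rho)|_{\rho=L}$ is nonzero with the correct sign, uniformly over all admissible $i$ and $\theta = k/r$ with $i\theta \in \Z$, is a trigonometric identity at rational multiples of $\pi$ that requires careful bookkeeping but no new ideas.
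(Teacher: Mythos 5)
Your overall route is sound and genuinely different in technique from the paper's. You normalize $P$ away (as the paper also does), but then expand $b_i$ as an Euler continuant $\sum_S\prod_{s\in S}\rho(x+s)$ over independent subsets of a path, read off the $L=\pm\infty$ and non-degenerate finite-$L$ cases from the leading terms resp.\ the limit $b_i^{(L)}$, and treat the degenerate case $i\theta\in\Z$ as a first-order perturbation whose coefficient is $\bigl(\mathrm{d}b_i^{(\rho)}/\mathrm{d}\rho\bigr)\big|_{\rho=L}$. The paper instead proves, by induction on the recurrence \eqref{eq:B} with $P=1$, an explicit expansion $B^{(i)}(x)=b_i-\frac{2\varepsilon(x)}{\tan^2\theta\pi}\bigl(\frac{i\cos(i-1)\theta\pi}{(2\cos\theta\pi)^{i-1}}-b_i\bigr)+O(x^{-1}\varepsilon(x))$; your continuant/derivative argument buys a more structural derivation of that same first-order coefficient (the cut identity $C_{i,t}=b_{t+1}^{(L)}b_{i-t-2}^{(L)}$ and $\sum_t C_{i,t}=\mathrm{d}b_i/\mathrm{d}\rho$ are correct), at the cost of having to control the $O(\epsilon^2)$ tail and the shifts $\epsilon(x+t)\sim\epsilon(x)$, which is routine. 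The non-degenerate cases in your write-up are fine.

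However, in the degenerate case --- the only delicate one, which you defer --- both of your key sign assertions are wrong, and they cancel only by accident. First, $\epsilon$ does \emph{not} have eventual sign $q$: if $\frac{Q(x)}{P(x)P(x-1)}$ is eventually increasing and converges to the finite limit $L$, it approaches $L$ from below, so $\epsilon<0$ eventually; thus $\sgn\epsilon=-q$ (the paper notes exactly this, ``$\sgn\varepsilon(x)=-\sgn q$''). Second, the derivative does not have sign $(-1)^{i\theta}$: parametrizing $\rho=-\frac1{4\cos^2\phi\pi}$ gives $b_i^{(\rho)}=\frac{\sin i\phi\pi}{\sin\phi\pi\,(2\cos\phi\pi)^{i-1}}$, and at $\phi=\theta$ with $\sin i\theta\pi=0$ one gets $\frac{\mathrm{d}b_i}{\mathrm{d}\phi}=\frac{i\pi(-1)^{i\theta}}{\sin\theta\pi(2\cos\theta\pi)^{i-1}}$ while $\frac{\mathrm{d}\rho}{\mathrm{d}\phi}=-\frac{\pi\sin\phi\pi}{2\cos^3\phi\pi}<0$, so $\frac{\mathrm{d}b_i}{\mathrm{d}\rho}\big|_{\rho=L}=(-1)^{i\theta+1}\,\frac{2i\cos^3\theta\pi}{\sin^2\theta\pi(2\cos\theta\pi)^{i-1}}$, of sign $(-1)^{i\theta+1}$ --- precisely the coefficient appearing in the paper's inductive expansion. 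With both signs corrected the product is again $(-1)^{i\theta}q$, so your strategy does deliver the lemma, but as written the step you yourself single out as the crux is asserted with the wrong sign and not carried out, and the compensating error on $\sgn\epsilon$ is what makes the stated conclusion come out right.
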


\begin{proof}
Since $\frac{B^{(i)}(x)}{P(x+i-2) \dotsm P(x)}$ is the generalized $i$th canonical denominator of $\left( 1, \frac{Q(x)}{P(x)P(x-1)} \right)$, we can assume $P=1$ without loss of generality. If $L = \pm \infty$, it follows by induction on $i$ from the recurrence~\eqref{eq:B} for $P=1$ that
$
\lim\limits_{x \to \infty} B^{(i)}(x) / Q (x)^{\lfloor \frac{i-1}2 \rfloor} = 
\begin{cases}
1 & (i \notin 2 \Z) \\
\frac{i}2 & (i \in 2 \Z)
\end{cases}.
$
This proves the lemma in this case. 

If $L \in (-\infty, \infty)$, let $b_i := \lim\limits_{x \to \infty} B^{(i)}(x)$. Letting $x \to \infty$ in the recurrence~\eqref{eq:B} for $P=1$, we have
\begin{equation} \label{eq:b=b+Lb}
b_{i+2} = b_{i+1} + L b_i.
\end{equation}
If $L \in [-\frac14, \infty)$, then $b_i > 0$ for all $i \geq 1$, so the lemma follows. Assume $L \in (-\infty, -\frac14)$ and let $L = -\frac1{4 \cos^2 \theta \pi}$. Then $b_i = \frac{ \sin i \theta \pi}{\sin \theta \pi (2 \cos \theta \pi)^{i-1}}$. This proves the claim in the case where $i \theta \notin \Z$. Moreover, by induction on $i$, it follows from the recurrence~\eqref{eq:B} for $P=1$ that
\[ \label{eq:B^i=sin_i*theta*pi}
\begin{aligned}
B^{(i)} (x) 
&= 
b_i
-
\frac{2\varepsilon(x)}{\tan^2 \theta \pi}
\left( 
\frac{i \cos (i-1) \theta \pi}{(2 \cos \theta \pi)^{i-1} }
-
b_i
\right)
+
O\left( x^{-1} \varepsilon(x) \right),
\\
\varepsilon(x) :&= Q(x) + \frac1{4\cos^2 \theta \pi}.
\end{aligned}
\]
If $i \theta \in \Z$, then the above expression is $B^{(i)}(x) = \frac{2\varepsilon(x)}{\tan^2 \theta \pi} (-1)^{i \theta + 1} \frac{i \cos \theta \pi}{(2 \cos \theta \pi)^{i-1} } + O\left( x^{-1} \varepsilon(x) \right)$. The lemma follows from this and $\sgn \varepsilon (x) = - \sgn q$ for large $x$. 
\end{proof}

\begin{proof}[Proof of \cref{lem:transformation_property}~\eqref{item:A>0andB/A=n^-1-epsilon_iff_Otype_and_tautheta_in_2Z}]
One can verify this lemma using $A^{(\tau)}(x) = Q(x)B^{(\tau-1)}(x+1)$ (by \cref{lem:generalized_canonical_numerator/denominator}) and \cref{lem:B^i}. 
\end{proof}

\begin{proof}[Proof of \cref{lem:Tg^j_converge}]
\eqref{item:sgnR=sgnq} This immediately follows from \cref{lem:B^i}. 

\eqref{item:Tf^j} By \cref{lem:single-term-feedback_recurrence}~\eqref{item:R=O}, there exist linear maps $L_i$, $i=0, \dots, 2r-1$ that map $(P, Q)$-holonomic sequences $f$ to $L_i(f) := \lim\limits_{\substack{n \equiv j \pmod{2r}, \\ n \to \infty}} T(n)f(n)$. Let $j \in \{0, \dots, 2r-1 \}$ and take $j' \in \{ 0, \dots, r-1\}$ such that $j'k \equiv j \pmod{r}$. Since the range of $L_{j'}$ has a lower dimension than its domain, $L_{j'}$ is not injective. Therefore there exists a non-zero $(P, Q)$-holonomic sequence $f^{(j)}$ such that $L_{j'}(f^{(j)}) = 0$. Without loss of generality, we can assume that $\sgn L_{j'+1}(f^{(j)}) \in \{ 0, \sgn (-1)^{\frac{j'k - j}r + 1} \}$. (Otherwise we consider $-f^{(j)}$ instead of $f^{(j)}$.)

Let us first show that $\sgn L_{j' + i} (f^{(j)}) = \sgn L_{j'+1} (f^{(j)}) \sin \frac{ik}r \pi$. Let $A^{(0)}, A^{(1)}, \dots$ and $B^{(0)}, B^{(1)}, \dots$ be the generalized canonical numerators and denominators. Then we have 
\begin{equation}\label{eq:Tf=TTBTf+TTATf}
T(n+i)f^{(j)}(n + i) 
=
\frac{T(n+i)}{T(n+1)} B^{(i)}(n) T(n+1)f^{(j)}(n+1) 
+ \frac{T(n+i)}{T(n)} A^{(i)}(n) T(n)f^{(j)}(n). 
\end{equation}
It suffices to show that the right-hand side converges to a real number whose sign is $\sgn L_{j'+1} (f^{(j)}) \sin \frac{ik}r \pi$, as $n \to \infty$ keeping the condition $n \equiv j' \pmod{2r}$. It follows from \cref{lem:transformation_property}~\eqref{item:condition_of_TandR} and \cref{lem:generalized_canonical_numerator/denominator} that 
\begin{align*}
\frac{T(n+i)}{T(n+1)} \lvert B^{(i)}(n) \rvert &= \Theta\left(\left(Q(n) B^{(2r-1)}(n+1)\right)^{-(i-1) / 2r} B^{(i)}(n)\right), 
\\
\frac{T(n+i)}{T(n)} \lvert A^{(i)}(n) \rvert &= \Theta\left(\left(Q(n) B^{(2r-1)}(n+1)\right)^{-i / 2r} Q(n)B^{(i-1)}(n+1)\right). 
\end{align*}
Then, using \cref{lem:B^i}, one can verify the followings:
\begin{itemize}
    \item $\frac{T(n+i)}{T(n)} A^{(i)}(n) = O(1)$, 
    \item $\frac{T(n+i)}{T(n+1)} \lvert B^{(i)}(n) \rvert = \Theta(1)$ if $i \neq 0, r$, and
    \item $\lim\limits_{n\to\infty} \frac{T(n+i)}{T(n+1)} B^{(i)}(n) = 0$ if $i = 0, r$. 
\end{itemize}
Hence, the right-hand side of the equation \eqref{eq:Tf=TTBTf+TTATf} converges to a real number of the desired sign.
(Note that we used $L_{j'}(f^{(j)}) = 0$ and the ultimate sign of $B^{(i)}$ shown in \cref{lem:B^i}.) 

By \cref{lem:single-term-feedback_recurrence}~\eqref{item:limFneq0}, there exists $i$ such that $L_{j' + i} (f^{(j)}) \neq 0$. Since $\sgn L_{j' + i} (f^{(j)}) = \sgn L_{j'+1} (f^{(j)}) \sin \frac{ik-j}r \pi$, it follows that $L_{j'+1} (f^{(j)}) \neq 0$, and so $\sgn L_{j'+1} (f^{(j)}) =  \break \sgn (-1)^{\frac{j'k - j}r + 1}$. Therefore, we have $\sgn L_{j' + i} (f^{(j)}) = \sgn (-1)^{\frac{j'k - j}r + 1} \sin \frac{-ik}r \pi$. Replacing $i$ by $i - j'$, we obtain $\sgn L_{i} (f^{(j)}) = \sgn \sin \frac{j-ik}r \pi$.
\end{proof}

\section{Proof of the Other Results} \label{sec:proofs_of_related_results}

In this section, we prove \Cref{thm:NOW21,thm:subcfrac_converge,prop:unstable}.

\subsection{Proof of \cref{thm:subcfrac_converge}} \label{sec:proof_of_cfrac}

As we pointed out in \cref{sec:cfrac}, the first half and Parts \eqref{item:cfrac_O_lox}, \eqref{item:cfrac_Omega_lox} and \eqref{item:cfrac_O_ell} of the second half of \cref{thm:subcfrac_converge} follow from \cref{thm:main}. We will prove Part \eqref{item:cfrac_Omega_ell} here. 

\begin{proof}[Proof of \cref{thm:subcfrac_converge}~\eqref{item:cfrac_Omega_ell}]
By \cref{prop:cfrac-holonomic}, it suffices to show that $\{ A(n) / B(n) \}_{n \equiv i \pmod{\tau}}$ diverges in $\hat{\R}$ for any $\tau$ and $i$, where $A$ and $B$ are $(P, Q)$-holonomic sequences with initial values $(1, 0)$, $(0, 1)$, respectively. 
Define $P_{\tau}$ and $Q_{\tau}$ as in \eqref{eq:Ptau_and_Qtau}. Then, the subsequences $A(\tau n + i)$ and $B(\tau n + i)$ are $(P_{\tau}(\tau x+i), Q_{\tau}(\tau x+i))$-holonomic sequences. 
From \cref{cor:hyperbolic subsequence}~\eqref{item:Q-Omega_elliptic->Q-Omega_elliptic},
$(P_{\tau}(\tau x+i), Q_{\tau}(\tau x+i))$ is of $\Q$-$\Omega$ elliptic type. 
The divergence of $\{ A(n) / B(n) \}_{n \equiv i \pmod{\tau}}$ follows from \cref{thm:cfrac_cconverge} and \cref{prop:cfrac-holonomic}. 
\end{proof}

\subsection{Proof of \cref{thm:NOW21,prop:unstable}}\label{sec:recovery_of_NOW21}

By the assumption of \cref{thm:NOW21}, we have $\deg \frac{Q(x)}{P(x)P(x-1)} \leq -1$, so $(P, Q)$ is of $\infty$-$\Omega$ loxodromic type or hyperbolic type. 
Then, by \Cref{thm:main}, $(P, Q)$-holonomic sequences $g \in \R^{\N}$ with unstable ultimate signs form a one-dimensional linear subspace in the linear space of all $(P, Q)$-holonomic sequences. 
Therefore, $g(n+1)$ and $g(n)$ must satisfy a linear relation as shown below. To keep the statement simple, let $R(x) := \frac{Q(x)}{P(x)P(x-1)}$ and consider the $(1, R)$-holonomic sequence $f(n) = \frac{g(n)}{P(n-1) \dotsm P(-1)}$ with an unstable ultimate sign instead of $g$.

\begin{lemma}\label{lem:NOW21}
Let $R \in \R(x)$ have no zeros or poles in $\N$ and satisfy $\deg R \leq -1$. Then, for all sufficiently large $n \in \N$, there exists $h(n) \in [ 1-R(n+1) - 3R(n+1)^2, 1-R(n+1) + 3R(n+1)^2 ]$ such that any $(1, R)$-holonomic sequence $f$ whose ultimate sign is unstable satisfies the relation
\begin{equation}\label{eq:f=Rhf}
f(n+1) = - R(n)h(n) f(n). 
\end{equation}
\end{lemma}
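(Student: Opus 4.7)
The strategy has two halves. First, I construct by a backward fixed-point argument a sequence $h$ satisfying the recurrence $h(n) = 1/(1 + R(n+1) h(n+1))$ and lying in the required interval. Second, I identify the associated $(1, R)$-holonomic sequence with the one-dimensional subspace of unstable solutions provided by \cref{thm:main}. The recurrence for $h$ is forced: writing $f(n+1) = -R(n) h(n) f(n)$ and substituting into the defining recurrence $f(n+2) = f(n+1) + R(n) f(n)$ together with the one-step-forward version $f(n+2) = -R(n+1) h(n+1) f(n+1)$ yields $1 - h(n) = R(n+1) h(n) h(n+1)$, equivalent to the desired equation.

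For the construction, I fix $N$ large, form the compact product $X := \prod_{n \geq N} I(n)$ where $I(n) := [1 - R(n+1) - 3 R(n+1)^2, 1 - R(n+1) + 3 R(n+1)^2]$, and consider the operator $T \colon X \to X$ defined by $(TH)(n) := 1/(1 + R(n+1) H(n+1))$. Since $R \in \R(x)$ with $\deg R \leq -1$, we have $R(n+2)/R(n+1) = 1 + O(1/n)$, and a second-order Taylor expansion of $(TH)(n)$ around $R(n+1) = 0$ shows that $T(X) \subseteq X$ for $N$ sufficiently large. Because $\partial (TH)(n)/\partial H(n+1) = \Theta(R(n+1))$ is uniformly small, $T$ is a strict contraction in sup norm, and Banach's fixed-point theorem produces a unique $h \in X$ with $Th = h$.

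To link $h$ to actual $(1, R)$-holonomic sequences, I set $f_{\star}(N) := 1$ and $f_{\star}(n+1) := -R(n) h(n) f_{\star}(n)$; substitution of the fixed-point equation directly verifies $f_{\star}(n+2) = f_{\star}(n+1) + R(n) f_{\star}(n)$, so $f_{\star}$ is $(1, R)$-holonomic. Since $|f_{\star}(n+1)/f_{\star}(n)| = O(|R(n)|) \to 0$ very rapidly, while the Casoratian $W(n) := f_{\star}(n) g(n+1) - f_{\star}(n+1) g(n)$ of $f_{\star}$ with any linearly independent solution $g$ satisfies $W(n+1) = -R(n) W(n)$ and so decays only like $\prod_{k} |R(k)|$, the ratio $f_{\star}(n)/g(n)$ tends to $0$ and $f_{\star}$ is the unique (up to scalar) minimal solution. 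Now $(1, R)$ is of $\infty$-$\Omega$ loxodromic type when $R$ is eventually positive and of hyperbolic type when $R$ is eventually negative (both using $\deg R \leq -1$), and \cref{thm:main}~\eqref{item:loxodromic_far_from_infty} and~\eqref{item:hyperbolic} in either case yield a one-dimensional subspace $\mathcal{U}$ of unstable $(1, R)$-holonomic sequences. I argue $\mathcal{U} = \R \cdot f_{\star}$: in the loxodromic case $f_{\star}$ alternates signs (since $-R(n) h(n) < 0$) and so directly has one of the unstable ultimate signs $(+,-)$ or $(-,+)$; in the hyperbolic case $f_{\star}$ has constant ultimate sign, but for any dominant solution $g$ the perturbations $f_{\star} \pm \epsilon g$ have opposite ultimate signs (because $|f_{\star}/g| \to 0$ forces the eventual sign of $f_{\star} + \epsilon g$ to agree with that of $\epsilon g$), so $f_{\star}$'s ultimate sign is unstable. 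Conversely, any $f \notin \R \cdot f_{\star}$ has nonzero dominant component and hence a stable ultimate sign; so $\mathcal{U} = \R \cdot f_{\star}$, and every unstable $f$ is a scalar multiple of $f_{\star}$ and satisfies \eqref{eq:f=Rhf}.

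The main obstacle is the sharp constant $3$ in the interval $I(n)$. A single application of $T$ to an $H$ on the boundary of $I(n+1)$ produces a leading correction $R(n+1)^2 + R(n+1) R(n+2) \approx 2 R(n+1)^2$ from the geometric-series expansion, plus $O(|R(n+1)|^3)$-sized pieces from cubic terms and from the slack $\eta$ of size $3 R(n+2)^2$; ensuring that the total stays below $3 R(n+1)^2$ requires $R(n+2)/R(n+1)$ to be sufficiently close to $1$, which is where the rational-function hypothesis on $R$ is used.
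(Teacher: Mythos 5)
Your proposal is correct in substance but takes a partly different route from the paper's proof. The first half is the same idea in different packaging: your $h$ is the continued-fraction tail satisfying $h(n) = 1/(1+R(n+1)h(n+1))$, obtained as a Banach fixed point on the product of intervals, while the paper obtains the same object as the limit \eqref{eq:h} of the ratios $B^{(\tau-1)}(n+1)/B^{(\tau)}(n)$ of generalized canonical denominators, proving the interval bound by induction via \eqref{eq:B} and \eqref{eq:inequality_for_small_r} and convergence by a rate-$\frac49$ Cauchy estimate; the invariance computation with the constant $3$ is the same in both. The genuine difference is the second half: the paper never constructs a distinguished solution, but instead takes an arbitrary unstable $f$, divides \eqref{eq:f=Bf+Af} by $B^{(\tau)}(n)$, and shows the remainder $f(n+\tau)/B^{(\tau)}(n)$ tends to $0$ via the explicit bounds $1/B^{(\tau)}(n) = O(2^{\tau})$ and $f(n+\tau) = O\bigl((\tfrac25)^{\tau}\bigr)$, the latter extracted from \cref{thm:main}~\eqref{item:loxodromic_far_from_infty} and \eqref{item:hyperbolic} separately in the two type cases. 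You instead build $f_{\star}$ from $h$, prove its minimality via the Casoratian, and identify the unstable solutions with $\R f_{\star}$ by a dominance/perturbation argument (also resting on \cref{thm:main}). Your route gives the cleaner conceptual picture (unstable $=$ minimal $=$ span of $f_{\star}$) and avoids the explicit exponential decay estimates; the paper's route avoids the fixed-point and Casoratian machinery and stays closer to quantitative bounds it reuses elsewhere. Two spots in your write-up should be tightened, though neither is fatal: (i) $X$ is not compact in the sup norm; what Banach's theorem needs is completeness, which holds since $X$ is a closed subset of $\ell^{\infty}$; (ii) the minimality step is stated misleadingly, because $f_{\star}(n)$ itself decays like $\prod_k \lvert R(k)\rvert$, i.e.\ at the same rate as $W(n)$ --- the correct comparison is that the increments
\[
\frac{g(n+1)}{f_{\star}(n+1)} - \frac{g(n)}{f_{\star}(n)} = \frac{W(n)}{f_{\star}(n) f_{\star}(n+1)}
\]
have magnitude of order $\Bigl(\lvert R(n)\rvert \prod_{k=N}^{n-1}\lvert R(k)\rvert\Bigr)^{-1}$, so consecutive increments grow by factors $\gtrsim 9$ and the last term dominates the telescoped sum, forcing $\lvert g(n)/f_{\star}(n)\rvert \to \infty$. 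With those repairs the argument goes through; note also that showing $f_{\star}$ itself is unstable is not needed for \cref{lem:NOW21} (only the implication ``unstable $\Rightarrow$ multiple of $f_{\star}$'' is), though it is harmless.
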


The relation~\eqref{eq:f=Rhf} corresponds to the equation~(6) in \cite{NOW21}.
Instead of using \cite[Lemma 14]{NOW21}, whose proof contains a gap, we use \cref{thm:main,lem:generalized_canonical_numerator/denominator} to prove this lemma.

\begin{proof}
Let $A^{(0)}$, $A^{(1)}$, $\dots$ and $B^{(0)}$, $B^{(1)}$, $\dots$ be the generalized canonical numerators and denominators of $(1,R)$. Let $f$ be a $(1, R)$-holonomic sequence whose ultimate sign is unstable. 
Dividing \cref{eq:f=Bf+Af} (with its $Q$ replaced by $R$) by $B^{(i)}(n)$ and using $A^{(i)}(x) = R(x) B^{(i-1)}(x+1)$ in \cref{lem:generalized_canonical_numerator/denominator}, we have $\frac{f(n+i)}{B^{(i)}(n)} = f(n+1) + R(n) \frac{B^{(i-1)}(n+1)}{B^{(i)}(n)} f(n)$. 
Hence showing the existence and estimate of
\begin{equation}\label{eq:h}
h(n) := \lim_{\tau \to \infty} \frac{B^{(\tau-1)}(n+1)}{B^{(\tau)}(n)}
\end{equation}
and $\lim\limits_{\tau \to \infty} \frac{f(n+\tau)}{B^{(\tau)}(n)} = 0$ completes this proof. Take $N \in \N$ such that $|R(n)|$ is monotonically decreasing and less than $\frac19$ for all $n \geq N$.

First, we show that $\frac{B^{(i)}(n+1)}{B^{(i+1)}(n)}$ is contained in the closed interval $[ 1-R(n+1) - 3R(n+1)^2, 1-R(n+1) + 3R(n+1)^2 ]$ with center $1 - R(n+1)$ and radius $3R(n+1)^2$ for all $i \geq 2$ and $n \geq N$, by induction on $i$. We use the inequality
\begin{equation} \label{eq:inequality_for_small_r}
1 - r - 3r^2 \leq \left( 1+r+\frac43 r^2 \right)^{-1} \leq (1+r)^{-1} \leq \left( 1+r-\frac43 r^2 \right)^{-1} \leq 1 - r + 3r^2
\end{equation}
for any $r \in [-\frac19, \frac19]$. 
If $i=2$, then $\frac{B^{(2)}(n+1)}{B^{(3)}(n)} = (1 + R(n+1))^{-1}$. Comparing the very middle of \eqref{eq:inequality_for_small_r} to its very left- and right-hand sides (with $r=R(n+1)$), we get the claim. 
Let us prove the claim for $i+1$, assuming that the claim holds for $i$. 
Replace $(P, Q)$ of \eqref{eq:B} by $(1, R)$, and divide it by $B^{(i+1)}(n+1)$, then we have
\begin{equation} \label{eq:H/H}
\frac{B^{(i+2)}(n)}{B^{(i+1)}(n+1)} = 1 + R(n+1) \frac{B^{(i)}(n+2)}{B^{(i+1)}(n+1)}. 
\end{equation}
$\frac{B^{(i)}(n+2)}{B^{(i+1)}(n+1)}$ in the right-hand side is contained in the closed interval with center $1$ and radius $\frac43 |R(n+1)|$ since $|R(n+2)| \leq |R(n+1)| < \frac19$. 
So the both sides of \eqref{eq:H/H} are in the closed interval with center $1 + R(n+1)$, radius $\frac43 R(n+1)^2$. By the very left ``$\leq$'' and the very right ``$\leq$'' of \eqref{eq:inequality_for_small_r} where $r = R(n+1)$, it follows that $\frac{B^{(i+1)}(n+1)}{B^{(i+2)}(n)}$ is in the closed interval with center $1 - R(n+1)$ and radius $3R(n+1)^2$.

As shown above, $h(n)$ is in the closed interval with center $1-R(n+1)$ and radius $3R(n+1)^2$, if $h(n)$ exists. Next, we prove the existence of $h(n)$. Since $\frac{B^{(i)}(n+1)}{B^{(i+1)}(n)} \in [1-R(n+1)-3R(n+1)^2, 1-R(n+1) + 3R(n+1)^2] \subseteq [\frac12, 2]$ where $n \geq N$ and $i \geq 2$, the existence of $h(n)$ is equivalent to the convergence of the inverse $\frac{B^{(i+1)}(n)}{B^{(i)}(n+1)}$. By \eqref{eq:H/H}, we have 
\[
\begin{aligned}
&\left| \frac{B^{(i+2)}(n)}{B^{(i+1)}(n+1)} - \frac{B^{(i+1)}(n)}{B^{(i)}(n+1)} \right|
\\
&=
|R(n+1)| \left| \frac{B^{(i)}(n+2)}{B^{(i+1)}(n+1)} - \frac{B^{(i-1)}(n+2)}{B^{(i)}(n+1)} \right| 
\\
&=
|R(n+1)| \frac{B^{(i)}(n+2)}{B^{(i+1)}(n+1)} \frac{B^{(i-1)}(n+2)}{B^{(i)}(n+1)} \left| \frac{B^{(i+1)}(n+1)}{B^{(i)}(n+2)} - \frac{B^{(i)}(n+1)}{B^{(i-1)}(n+2)}  \right| 
\\
&\leq
\frac49 \left| \frac{B^{(i+1)}(n+1)}{B^{(i)}(n+2)} - \frac{B^{(i)}(n+1)}{B^{(i-1)}(n+2)} \right|
\\
&\leq \dots \leq
\left( \frac49 \right)^{i-2} \left| \frac{B^{(4)}(n+i-1)}{B^{(3)}(n+i)} - \frac{B^{(3)}(n+i-1)}{B^{(2)}(n+i)} \right| = O\left( \left( \frac49 \right)^{i} \right). 
\end{aligned}
\]
This shows that $\left\{ \frac{B^{(i+1)}(n)}{B^{(i)}(n+1)} \right\}_{i \in \N}$ is a Cauchy sequence and converges.  

Finally we prove $\lim\limits_{i \to \infty} \frac{f(n+i)}{B^{(i)}(n)} = 0$. Recall $\frac{B^{(i)}(n+1)}{B^{(i+1)}(n)} \in [ \frac12 , 2]$ for $n \geq N$ and $i \geq 2$. Then $\frac1{B^{(i)}(n)} = \frac{B^{(i-1)}(n+1)}{B^{(i)}(n)} \frac{B^{(i-2)}(n+2)}{B^{(i-1)}(n+1)} \dotsm \frac{B^{(2)}(n+i-2)}{B^{(3)}(n+i-3)} = O ( 2^i ) \ (i \to \infty)$. Now it remains to show $f(n+i) = O\left( \left( \frac25 \right)^i \right)$, i.e., $f(n) = O\left( \left( \frac25 \right)^n \right) \ (n \to \infty)$. Let $f \neq 0$, since it is obvious if $f=0$. $(1, R)$ is of $\infty$-$\Omega$ loxodromic type or hyperbolic type by the assumption $\deg R \leq -1$. 

Let us first assume that $(1, R)$ is of $\infty$-$\Omega$ loxodromic type. 
It follows from \cref{thm:main}~\eqref{item:loxodromic_far_from_infty} that $f$ has the ultimate sign $(+, -)$ or $(-, +)$. For all $n \geq N$ at which $f$ has the ultimate sign, $R(n)f(n)$ and $f(n+2)$ have the same sign and $f(n+1)$ has the different sign, so it follows from $f(n+2)=f(n+1)+R(n)f(n)$ that $|f(n+2)| < |R(n)f(n)| \leq \frac19 |f(n)|$. Hence $f(n) = O\left( \left( \frac13 \right)^n \right) = O\left( \left( \frac25 \right)^n \right)$. 

Let us second assume that $(1, R)$ is of hyperbolic type. Once $\frac{f(N'+1)}{f(N')} > \frac25$ holds for some $N' \geq N$, then $\frac{f(N'+2)}{f(N'+1)} = 1 + R(N')\frac{f(N')}{f(N'+1)} > \frac{13}{18} > \frac25$, so $\frac{f(n+1)}{f(n)} > \frac25$ holds for all $n \geq N'$. Such $f$ has a stable ultimate sign $(\sgn f(N'))$, which contradicts the assumption of this lemma. Hence $\frac{f(n+1)}{f(n)} \leq \frac25$ for all $n \geq N$. In addition, if $f$ has an ultimate sign at $n$, then $\frac{f(n+1)}{f(n)} > 0$ because the ultimate sign is $(+)$ or $(-)$ according to \cref{thm:main}~\eqref{item:hyperbolic}. These two inequalities imply $f(n) = O\left( \left( \frac25 \right)^n \right)$. 
\end{proof}

We are now ready to prove \cref{thm:NOW21}.

\begin{proof}[Proof of \cref{thm:NOW21}]
Without loss of generality, we can assume $P(-1) \neq 0$. Let us take a $(P, Q)$-holonomic sequence $g \in \Q^{\N}$ with an unstable ultimate sign, and show $g=0$. By multiplying a positive integer by the initial value of $g$, we assume $g \in \Z^{\N}$. Applying \cref{lem:NOW21} to $R(x) := \frac{Q(x)}{P(x)P(x-1)}$ and $f(n) := \frac{g(n)}{P(n-2) \dotsm P(-1)}$, we obtain
\begin{equation}\label{eq:g=Qh/Pg}
g(n+1) = - \frac{Q(n)h(n)}{P(n)} g(n), 
\end{equation}
where $h(n) = 1-\frac{Q(n+1)}{P(n+1)P(n)} + O(n^{-2})$.

\eqref{item:|q|<p} $\left| \frac{Q(n)h(n)}{P(n)} \right| < 1$ holds for all sufficiently large $n$ since $\lim\limits_{n \to \infty} h(n) = 1$. Therefore, $|g(n+1)| < |g(n)|$ or $g(n) = 0$, which implies $g(n) = 0$ for sufficiently large $n$. Since $Q$ has no zeros in $\N$, we get $g=0$. 

\eqref{item:|q|=p} Let us first show $g(n) / n \to 0$. The absolute value of the coefficient in \eqref{eq:g=Qh/Pg} is estimated as
\[
\frac{|Q(n)|h(n)}{P(n)} = 1 + \frac{|Q(n)| - P(n) - \tfrac{|Q(n)|Q(n+1)}{P(n+1)P(n)}}{P(n)} + O(n^{-2}). 
\]
If $d=1$, then this estimate turns out to be $1 + \frac{sq_1-p_1-s}{p_0}n^{-1} + O(n^{-2})$. 
If $d \geq 2$, then $1 + \frac{sq_1-p_1}{p_0}n^{-1} + O(n^{-2})$. Since $\prod_{k=1}^n \left( 1 + \alpha k^{-1} + O(k^{-2}) \right) = O(n^{\alpha})$ for all $\alpha \in \R$, it follows from \eqref{eq:g=Qh/Pg} that
\[
g(n) = 
\begin{cases}
O\!\left( n^{\frac{sq_1-p_1-s}{p_0}} \right) & \text{if $d = 1$}, \\
O\!\left( n^{\frac{sq_1-p_1}{p_0}} \right) & \text{if $d \geq 2$}.
\end{cases}
\]
By the assumption on $p_0$, $p_1$, $q_1$, we have $g(n) / n \to 0$. 

Since $g(n) / n \to 0$ and $d \geq 1$, it follows that $0 = \lim\limits_{n \to \infty} g(n+2) / n^d = \lim\limits_{n \to \infty} (P(n)g(n+1) + Q(n)g(n)) / n^d = \lim\limits_{n \to \infty} (p_0g(n+1) + q_0g(n))$. Since $g(n), g(n+1) \in \Z$, we have $p_0 g(n+1) + q_0 g(n) = 0$ for all sufficiently large $n$. Then $g(n+1) = -sg(n)$ follows from this and $sq_0 = p_0$. Substituting this into the recurrence \eqref{eq:(PQ)holonomic}, we get $g=0$, by the assumption of $Q(x)-sP(x) \neq 1$. 
\end{proof}

We used our stronger assumption to get the equation 
$\lim\limits_{n \to \infty} (P(n)g(n+1) + Q(n)g(n)) / n^d = \lim\limits_{n \to \infty} (p_0g(n+1) + q_0g(n))$
in the last paragraph of the proof above.
We did not make any other changes to the original proof in \cite[\S~3.3]{NOW21}.

We prove \cref{prop:unstable}. 

\begin{proof}[Proof of \cref{prop:unstable}]
Without loss of generality, we can assume $P(-1) \neq 0$. 
The $(P(x), \lambda P(x) + \lambda^2)$-holonomic sequence $\{ (-\lambda)^n \}_{n \in \N}$ has an unstable ultimate sign if and only if the $(1, R)$-holonomic sequence $f := \left\{ \frac{(-\lambda)^n}{P(n-2) \dotsm P(-1)} \right\}_{n \in \N} $ has an unstable ultimate sign, where $R(x) := \frac{\lambda P(x) + \lambda^2}{P(x) P(x-1)}$.

If $\lambda > 0$, then $(1, R)$ is of $\infty$-$\Omega$ loxodromic type. In this case, $f$ has the ultimate sign $(+, -)$, which is unstable by \cref{thm:main}.

If $\lambda < 0$, then $(1, R)$ is of hyperbolic type. 
By a similar argument in the proof of \cref{lem:NOW21}, $(1, R)$-holonomic sequence $g$ with a stable ultimate sign satisfies $|g(n+1) / g(n)| = \Omega(1)$.
On the other hand, $f(n+1) / f(n) = -\lambda / P(n-1) \to 0$. Thus $f$ has an unstable ultimate sign. 
\end{proof}

\bibliography{bib_hol_20251205}

\end{document}